\definecolor{darkred}{rgb}{0.5,0.2,0.2}
\theoremstyle{plain}
\newtheorem{theorem}{Theorem}[section]
\newtheorem{assumption}{Assumption}[section]
\newtheorem{lemma}{Lemma}[section]
\newtheorem{definition}{Definition}
\theoremstyle{definition}
\newtheorem{remark}{Remark}
\newcommand{\R}{\mathbb{R}}
\numberwithin{equation}{section}
\newcommand*\rel@kern[1]{\kern#1\dimexpr\macc@kerna}
\newcommand*\widebar[1]{%
  \begingroup
  \def\mathaccent##1##2{%
    \rel@kern{0.8}%
    \overline{\rel@kern{-0.8}\macc@nucleus\rel@kern{0.2}}%
    \rel@kern{-0.2}%
  }%
  \macc@depth\@ne
  \let\math@bgroup\@empty \let\math@egroup\macc@set@skewchar
  \mathsurround\z@ \frozen@everymath{\mathgroup\macc@group\relax}%
  \macc@set@skewchar\relax
  \let\mathaccentV\macc@nested@a
  \macc@nested@a\relax111{#1}%
  \endgroup
}
\begin{document}

\title{To be or not to be:\\ Roughness or long memory in volatility?\footnote{We are grateful to the co-editor Viktor Todorov, an associate editor, and two anonymous referees for their constructive and insightful feedback that were helpful in revising the paper. We thank Morten Ørregaard Nielsen for his inspiration on likelihood-based estimation of fractionally integrated processes. We benefitted from comments by and discussions with Bezirgen Veliyev, Chen Zhang, Jun Yu, Mikko Pakkanen, Peter Hansen, Shuping Shi, Tim Bollerslev, Uwe Hassler, plus seminar and conference participants at University of Macau, University of North Carolina at Chapel Hill, Aarhus Workshop in Econometrics (AWE) VI, the ReadingMetrics study group, the 2022 Aarhus University--Singapore Management University joint online workshop on Volatility, the 2022 Vienna--Copenhagen (VieCo) conference on Financial Econometrics in Copenhagen, Denmark, and the 2022 annual SoFiE conference in Cambridge, UK. Kim Christensen received funding from the Independent Research Fund Denmark (DFF 1028--00030B) to support his work. Mikkel Bennedsen is grateful for research support from the Aarhus Center for Econometrics (ACE) funded by the Danish National Research Foundation (DNRF186). Send correspondence to: \href{mailto:kim@econ.au.dk}{\nolinkurl{kim@econ.au.dk}}.}}

\author{Mikkel Bennedsen\thanks{Aarhus University, Department of Economics and Business Economics, Denmark.} $^,$\thanks{Aarhus Center for Econometrics (ACE), Aarhus University, Denmark.} $^,$\thanks{Center for Research in Energy: Economics and Markets (CoRE), Aarhus University, Denmark.}
\and Kim Christensen\footnotemark[2] $^,$\footnotemark[4] $^,$\thanks{Research fellow at the Danish Finance Institute (DFI).}
\and Peter K. Christensen\footnotemark[2] $^,$\footnotemark[4]}

\date{January, 2026}

\maketitle

\vspace*{-1.20cm}

\begin{abstract}
We develop a framework for composite likelihood estimation of parametric continuous-time stationary Gaussian processes. We derive the asymptotic theory of the associated maximum composite likelihood estimator. We implement our approach on a pair of models that have been proposed to describe the random log-spot variance of financial asset returns. A simulation study shows that it delivers good performance in these settings and improves upon a method-of-moments estimation. In an empirical investigation, we inspect the dynamic of an intraday measure of the spot log-realized variance computed with high-frequency data from the cryptocurrency market. The evidence supports a mechanism, where the short- and long-term correlation structure of stochastic volatility are decoupled in order to capture its properties at different time scales. This is further backed by an analysis of the associated spot log-trading volume.

\bigskip \noindent \textbf{JEL Classification}: C10; C58; C80.

\medskip \noindent \textbf{Keywords}: Composite likelihood; stationary Gaussian processes; long memory; realized variance; roughness; stochastic volatility.

\end{abstract}

\vfill

\thispagestyle{empty}

\pagebreak

\section{Introduction} \setcounter{page}{1}

The search for the best description of the dynamic of time-varying volatility continues to permeate financial economics. On the one hand, it has long been recognized that realized variance (RV) is highly persistent and exhibits an autocorrelation function (ACF) that perishes so slowly that it can be approximated by a stochastic process featuring long memory \citep[see, e.g.,][]{comte-renault:98a, andersen-bollerslev-diebold-labys:03a,  corsi:09a}. On the other hand, in recent years it has been suggested that the sample path of volatility is more vibrant than those implied by a standard Brownian motion (sBm), and that it may instead be generated by a fractional Brownian motion (fBm) with a Hurst exponent less than a half; the index of the sBm. \citet{gatheral-jaisson-rosenbaum:18a} observe that the shape of the implied volatility surface for short-term at-the-money options is consistent with stochastic volatility being governed by such a fBm. They reinforce this finding with variogram regressions for daily log-RV series from a number of leading stock indexes, which further support their claim that ``volatility is rough.'' Since then, a sequence of follow-up papers have studied various statistical estimation procedures for gauging the roughness of volatility, both under the physical and risk-neutral probability measure, largely confirming this view \citep[e.g.,][]{fukasawa-takabatake-westphal:22a, bolko-christensen-pakkanen-veliyev:23a, chong-todorov:23a, wang-xiao-yu:23a, shi-yu-zhang:24a}.\footnote{A more comprehensive overview of the extant literature is available at the Rough Volatility Network's website: \url{https://sites.google.com/site/roughvol/home/rough-volatility-literature}.}

The workhorse in the recent continuous-time literature on rough volatility is the fractional Ornstein-Uhlenbeck (fOU) process, where the sBm in the Gaussian Ornstein-Uhlenbeck process is replaced with an fBm as impetus. This model is parametric, so in principle we can do maximum likelihood estimation (MLE) of it. This idea is explored in a recent prominent contribution of \citet{wang-xiao-yu-zhang:25a}. In practice, MLE of the fOU process can be challenging, however. This is because the model is generally non-Markovian, which makes it difficult to estimate, because the process can depend on its entire history.\footnote{\citet{damian-frey:24a} study a particle filter technique to estimate roughness in spot volatility. It exploits the notion that the fBm has an infinite-dimensional Markovian representation approximable by a finite sum of Ornstein-Uhlenbeck processes driven by a single sBm. However, their approach centers around the nonstationary fBm, not the stationary fOU, and furthermore it cannot capture long memory.} Thus, Gaussian MLE can be prohibitive with a large sample of such a discretely observed log-spot variance processes---except in special cases---because the calculation of the determinant and inversion of the covariance matrix are computationally intensive, when the number of observations is moderate-to-large, e.g. in excess of 1,000.

To circumvent this issue, in this paper we propose a composite likelihood estimator for parametric stationary Gaussian processes, which nests the fOU as a special case. The composite likelihood approach---introduced in \citet{lindsay:88a}---reduces the complexity of MLE by only including lower-dimensional sub-models in the criterion function. This effectively allows it to operate on arbitrarily large samples with minimal extra computational cost. We derive the asymptotic theory for the maximum composite likelihood estimator (MCLE), which turns out to be somewhat intractable, because it features a discontinuity both in the rate of convergence and the shape of the limiting distribution, as a function of the memory of the process \citep[pursuant to][]{hosking:96a}. Inference can also depend on whether the location parameter is estimated or not. However, while MCLE loses some of the desirable statistical properties of MLE, in practice composite likelihood can be preferred when the full likelihood is impractical or difficult to evaluate. Moreover, in finite samples the loss of efficiency may not be too severe, since the composite likelihood function can be smoother than the full likelihood surface and, hence, much more convenient to navigate and optimize \citep[e.g.,][]{varin-reid-firth:11a}.

The theoretical properties of the composite likelihood estimator have been examined in various settings, see, e.g., \citet{cox-reid:04a}, \citet{davis-yau:11a}, and \citet{varin-reid-firth:11a}. It has also been applied to many fields, including finance \citep{bennedsen-lunde-shephard-veraart:23a, pakel-shephard-sheppard-engle:20a}. The most closely related paper to ours within the composite likelihood literature is \citet{davis-yau:11a}. They study the theoretical properties of a pairwise MCLE in a standard time series setting with discrete-time linear processes---not necessarily Gaussian---where the dependence structure is determined by the decay of the coefficients in the linear filter. By contrast, our processes evolve in continuous-time. Moreover, we develop our theory in the general setting of $q$-wise composite likelihood. Even in the pairwise setting, however, our framework extends \citet{davis-yau:11a} by allowing for the presence of a slowly varying function at infinity in the autocorrelation structure of the process. This is relevant outside the fOU model, such as the Brownian semistationary process with a power kernel \citep[e.g.,][]{barndorff-nielsen-schmiegel:09a}.

In the volatility literature, the state variable in the fOU represents the point-in-time log-variance of an asset log-price, which is latent. Previous work has employed the daily log-RV as an observable surrogate, but this is an estimator of the daily log-integrated variance, i.e. an infinite mixture of log-normal random variables, which is itself not log-normal. This loses the analytic tractability of Gaussian MLE and makes the calculation of the exact transition density a nontrivial exercise. Even worse, if it is not properly accounted for, it runs the risk of mixing theoretical properties of a spot process with empirical properties of an integrated process. Moreover, since roughness is a sample path property that concerns the behavior of a stochastic process at very short time scales, this discussion suggests that estimation of the fOU process perhaps ought to be based on more localized measure of log-variance. In this paper, we recover the log-spot variance at the intraday horizon. As a result, the sample size in our empirical application is so large that MCLE is possibly the only feasible likelihood-type procedure available.

A crucial weakness of the fOU process as a model for the random log-volatility of financial asset returns is that it controls both the short- and long-run persistence in a single parameter, namely the Hurst exponent. However, roughness is a sample path property of a stochastic process (leading to a rapid decline in the ACF at short time scales), whereas long memory is a property of the distribution function (leading to a slow decline in the ACF at long time scales). As these effects are intertwined in the fOU process, it is only capable of featuring \textit{either} roughness \textit{or} long memory. This shortcoming was, in fact, already highlighted by \citet{mandelbrot:82a} in the context of the fBm. Set against this backdrop, we explore a stationary Gaussian process from a so-called Cauchy class \citep{gneiting-schlather:04a}. As for the fOU process, the latter has two parameters to fit the dynamic dependence, but it reserves a separate parameter to describe the short- and long-term decay. Hence, it allows to decouple the behavior of the process at different time scales and is therefore able to account for \textit{both} roughness \textit{and} long memory.

In a simulation study, we examine the small sample properties of the MCLE for the fOU process and the Cauchy class, which we benchmark against a method-of-moments estimator (MME), e.g. \citet{bolko-christensen-pakkanen-veliyev:23a} and \citet{wang-xiao-yu:23a}.\footnote{\citet{fukasawa-takabatake-westphal:22a} and \citet{shi-yu-zhang:24a} develop Whittle-type approximate likelihood estimation of the fOU model.} The advantages of the MCLE over the MME are at least twofold. First, the MME for the Hurst parameter often relies on in-fill asymptotic theory to derive consistency, while our composite likelihood theory is derived within a long-span setting. The latter is appropriate, when the process is sampled discretely on a fixed equidistant partition over an expanding horizon. Second, MCLE extracts the entire parameter vector in a single step rather than via a two-stage approach, as done in the MME. The latter appears to lead to systematic biases in the parameter estimates. As we show, this means that MCLE is often more accurate than the MME, or at the very least heading in the right direction (toward the true parameter value). However, MCLE is not uniformly superior to MME, because it depends on the exact implementation.

In our empirical application, we inspect a vast high-frequency dataset from the cryptocurrency market. The coins we study are extremely liquid, so we can sample them at a much higher frequency than what is possible for more traditional assets that are traded less often. We compute an observable measure of the latent spot variance based on an intraday RV estimator. We implement the MCLE procedure on the log-series and estimate both the fOU process and the Cauchy class. The results are striking in that the fOU process, in agreement with recent work, suggests that the log-spot variance is rough. Conversely, the Cauchy class strongly points toward a dynamic, where both roughness and long memory are required to describe the time-varying volatility. This confirms the findings from equity high-frequency data in related work of \citet*{bennedsen-lunde-pakkanen:22a}. The main shortcoming with this application is that the RV measure is a noisy proxy of the spot variance and the inherent sampling error can distort the inferred dynamic of the variance process. Motivated by the acclaimed volume-volatility relationship, we also conduct an investigation of the trading volume that is directly observed. In the biggest estimation for log-trading volume, the MCLE is confronted with a sample of about 13 million observations, but it converges to an optimum within a few minutes. In sum, our results for log-trading volume largely confirm those from the log-RV. We conclude that \textit{both} roughness \textit{and} long memory are needed to describe stochastic volatility.

The rest of the paper is structured as follows. In Section \ref{section:mcle}, we introduce the class of stationary Gaussian processes and composite likelihood estimation. We also derive the asymptotic behavior of the MCLE. In Section \ref{section:example}, we introduce the fOU process and the Cauchy class that are the concrete parametric models we investigate more in-depth. Section \ref{section:simulation} presents a simulation study of our estimator within this framework and documents its efficacy in small samples, also compared to a MME. Section \ref{section:empirical} contains empirical work, where we implement the technique on a high-frequency time series of log-spot variance estimates and log-trading volume from a number of cryptocurrency spot exchange rates. In Section \ref{section:conclusion}, we conclude and point to directions for further work. An appendix contains mathematical derivations, presents a method-of-moments estimator, and includes supplemental simulation analysis and empirical results.

\section{Theoretical framework} \label{section:mcle}

In this section, we introduce the class of stationary Gaussian processes. We also present the main idea behind composite likelihood estimation. At last, we derive an asymptotic theory for parametric estimation of the former based on the latter.

\subsection{Stationary Gaussian processes}

In this paper, we consider processes of the form
\begin{equation} \label{equation:Y}
Y_{t} = \mu + \nu X_{t}, \quad t \in \R,
\end{equation}
where $\mu \in \R$, $\nu>0$, and $X = (X_{t})_{t \in \mathbb{R}}$ is a stationary Gaussian process with mean zero and unit variance, i.e. the marginal distribution of $X_{t}$ is standard normal. We denote the autocovariance function (ACF) of $Y$ at lag $h$ as $\gamma_{h} = \text{cov}(Y_{t},Y_{t+h}) =  \nu^{2} E(X_{t}X_{t+h}) = \nu^{2} \rho_{h}$, where $\rho_{h}$ is the ACF of $X$.\footnote{Throughout the paper, we loosely employ ACF to represent both the autocovariance function and autocorrelation function. These are identical for $X$ and related through a scaling by $\nu^{2}$ for $Y$.}

\begin{assumption} The ACF of $X$ has the property \label{assumption:correlation}
\begin{equation} \label{equation:maruyama}
\lim_{h \rightarrow \infty} \rho_{h} = 0.
\end{equation}
\end{assumption}
A stationary Gaussian process is ergodic if and only if Assumption \ref{assumption:correlation} is fulfilled \citep{maruyama:49a}. Since we need a law of large numbers to hold for the log-composite likelihood function, the above can therefore be viewed as a minimal regularity condition.

\subsection{Composite likelihood}

We suppose $\Delta > 0$ is a fixed time gap and let $y = (y_{1},y_{2}, \dots, y_{n})^{ \top}$ be a discrete realization of $n$ equidistant observations of the random vector $Y_{n}^{ \Delta} = (Y_{ \Delta}, Y_{2 \Delta}, \dots, Y_{n \Delta})^{ \top}$, where $Y$ is the stationary continuous-time process in \eqref{equation:Y}. We assume $\theta \in \Theta$ is a finite-dimensional parameter vector that determines the distribution of $Y$, where $\Theta \subset \mathbb{R}^{p}$ is a compact set. The data-generating value of $\theta$ is denoted $\theta_{0}$. Moreover, we initially enforce that $\mu = 0$ and this is known to the econometrician. In Remark \ref{remark:mu}, we elaborate further on the effect of estimating the mean.

As the model is now parametric, we can estimate $\theta$ by maximum likelihood:
\begin{equation*}
\hat{ \theta}_{ \text{MLE}} \equiv \underset{ \theta \in \Theta}{\arg \max} \ l( \theta;y),
\end{equation*}
where $l( \theta; y)$ is the full Gaussian log-likelihood function of the sample $y$ given $\theta$, i.e.
\begin{equation*}
l( \theta;y) \propto - \log \vert \Sigma_{n} ( \theta) \vert - y^{ \top} \Sigma_{n}^{-1}( \theta) y,
\end{equation*}
and $\Sigma_{n}( \theta) = E_{ \theta} \big[(Y_{n}^{ \Delta} - \mu)(Y_{n}^{ \Delta} - \mu)^{ \top} \big]$ is the $n \times n$ covariance matrix of $Y_{n}^{ \Delta}$.

The maximum likelihood estimator, $\hat{ \theta}_{ \text{MLE}}$, is consistent, asymptotically efficient, and follows a limiting normal distribution under standard regularity conditions \citep[e.g.,][]{newey-mcfadden:94a}.

However, even in the Gaussian setting, moderate values of $n$ can render numerical optimization of the log-likelihood function infeasible, because the complexity of computing the determinant and inverting the covariance matrix grows very rapidly. For example, deploying an algorithm such as the Cholesky factorization or LU decomposition results in a computational budget of $O(n^{3})$. Less naive approaches exploit the fact that the covariance matrix of a stationary process observed on an equidistant partition has a Toeplitz structure \citep[e.g.,][]{levinson:46a, durbin:60a}, yielding a faster calculation speed of $O(n^{2})$. An extra refinement has been achieved with so-called ``superfast'' algorithms that operate on Toeplitz matrices using the fast Fourier transform \citep[see][]{brent-gustavson-yun:80a}. The latter exhibit near-linear growth $O(n\log^2(n))$ but suffer from numerical instability in practice \citep[see][]{stewart:03a}. Moreover, the scaling of these algorithms are still inferior to the linear rate $O(n)$. This poses serious issues in a high-frequency setting, where $n$ is often prohibitively large.

Set against this backdrop, we exploit the composite likelihood framework of \citet{lindsay:88a}, building on the earlier concept of pseudo-likelihood from \citet{besag:74a} in the spatial setting, see also the survey by \citet{varin-reid-firth:11a}. To describe the idea, assume for the moment that $y$ is a sample of length $n$ of a continuous random variable $Y$ defined on a probability space $( \Omega, \mathcal{F}, \mathbb{P})$.\footnote{In our setting, the observations are a discrete realization of a continuous-time process.} In general, the composite likelihood estimator maximizes a weighted product of likelihoods of marginal or conditional events. We let $f$ denote the density of $Y$ and suppose that $( \mathcal{A}_{1}, \dots, \mathcal{A}_{M})$ is a collection of events, $\mathcal{A}_{m} \in \mathcal{F}$, with likelihood $L_{m}( \theta;y) \propto f(y \in \mathcal{A}_{m}; \theta)$. The composite likelihood is then defined as
\begin{equation} \label{equation:cl}
CL( \theta;y) = \prod_{m=1}^{M} L_{m}( \theta;y)^{w_{m}},
\end{equation}
where $w_{1}, \dots, w_{M}$ are nonnegative weights with $\sum_{m} w_{m} = 1$. In the remainder of the paper, we set $w_{k} = M^{-1}$ and omit it from \eqref{equation:cl}.\footnote{\citet{lindsay-yi-sun:11a} analyze a more formal approach for designing an efficient weighting scheme---called the Best Weighted Estimating Function (BWEF)---in order to maximize efficiency. It bears resemblance to choosing an optimal weight matrix in GMM estimation. This is a formidable numerical challenge in composite likelihood estimation, as it normally requires inversion of a large-dimensional matrix. To the extend that unequal weights can improve inference, our results can be viewed as conservative.}

The information consists of any conditional or marginal events. Hence, many variants of composite likelihood exist. Full likelihood is the special case $CL( \theta; y) = L( \theta; y) = f( y; \theta)$. The independence likelihood is $CL( \theta; y) = \prod_{i=1}^{n} f(y_{i}; \theta)$. It permits inference on marginal parameters only and is the full likelihood under actual independence. However, it is necessary to add events formed from blocks of observations to estimate parameters that control the dependence structure, such as the pairwise likelihood of \citet{cox-reid:04a}, $CL( \theta; y) = \prod_{i=1}^{n} \prod_{j=i+1}^{n} f(y_{i},y_{j}; \theta)$, or the restricted version named consecutive pairwise likelihood by \citet{davis-yau:11a}, $CL( \theta; y) = \prod_{i=1}^{n-K} \prod_{j=i}^{i+K}f(y_{i},y_{j}; \theta)$, which includes neighboring pairs up to some order $K \in \mathbb{N}$ with $K < n$. Another candidate is the conditional likelihood $CL( \theta; y) = \prod_{i=1}^{n} f(y_{i} \mid y \setminus y_{i}; \theta)$.

The maximum composite likelihood estimator (MCLE) is the argmax of $CL( \theta; y)$---or the natural logarithm of it:
\begin{equation*}
\hat{ \theta}_{ \mathrm{MCLE}} \equiv \underset{\theta \in \Theta}{ \arg \max} \ cl( \theta;y),
\end{equation*}
where
\begin{equation*}
cl( \theta;y) = \log CL( \theta;y) = \sum_{m=1}^{M} \log L_{m}( \theta;y).
\end{equation*}
The composite log-likelihood function is, in general, not proportional to the full likelihood, so the model is misspecified \citep[see, e.g., ][]{white:82a}. However, the data-generating probability measure is closely related to the assumed parametric distribution, since the marginal (or conditional) densities included in \eqref{equation:cl} are extracted from the true model and correctly specified. Hence, the score of $cl( \theta;y)$ satisfies the first Bartlett identity and forms a collection of unbiased estimating equations. As we show below, it follows by a law of large numbers that $\hat{ \theta}_{ \mathrm{MCLE}}$ is consistent for the true parameter value, $\theta_{0}$, and, if the memory in the process is not too strong, it is also asymptotically normal.\footnote{If the parameterization of the ACF is wrong, $\hat{ \theta}_{ \mathrm{MCLE}}$ generally does not converge to $\theta_{0}$, but rather a pseudo-parameter, $\theta^{*}$, that minimizes a ``composite'' Kullback-Leibler divergence between the true and assumed model, i.e. $\theta^{*} = \underset{\theta \in \Theta}{ \arg \max} \ \mathbb{E} \left[ cl( \theta; Y) \right]$, where the expectation is with respect to the true probability measure.}

To showcase the promise of MCLE compared to ``brute force'' MLE, we simulate an fOU process (defined in Section \ref{section:fOU}) with a true parameter vector given by Panel B in Table \ref{table:sim-ou-q=3-N=12-mu=0}. The Hurst exponent (the parameter of main interest) is $H = 0.1$, so we are in a rough setting. We implement CL as explained in Section \ref{section:simulation}. In Panel A of Figure \ref{figure:comparison}, we plot the relative runtime of CL versus ML as a function of $n = N \times T$ (measured against the left $y$-axis, while the actual runtime can be read of from the right $y$-axis). Here, $T$ is the number of days in the sample, whereas $N$ is the number of intraday observations. We set $N = 12$ and take $T = 10, 20, \dots, 250$. As the sample size grows, CL is completed in a small fraction of the ML runtime. The maximal length of the data vector is $n = 3{,}000$ here, which is rather small, but around this point MLE  becomes unresponsive (at least on our desktop) or runs out of memory. In Panel B of the figure, we plot the relative root mean squared error (RMSE) of CL versus ML, formatted as before, in terms of estimating $H$. The chasm between the RMSE of MCLE and MLE is relatively constant, because both converge at rate $n^{-1/2}$ in this setting. Hence, here the main drawback of MCLE is the loss of (asymptotic) efficiency.

\begin{figure}[ht!]
\begin{center}
\caption{Comparison of MLE and MCLE.}
\label{figure:comparison}
\begin{tabular}{cc}
\small{Panel A: Runtime.} & \small{Panel B: RMSE.} \\
\includegraphics[height=8.00cm,width=0.48\textwidth]{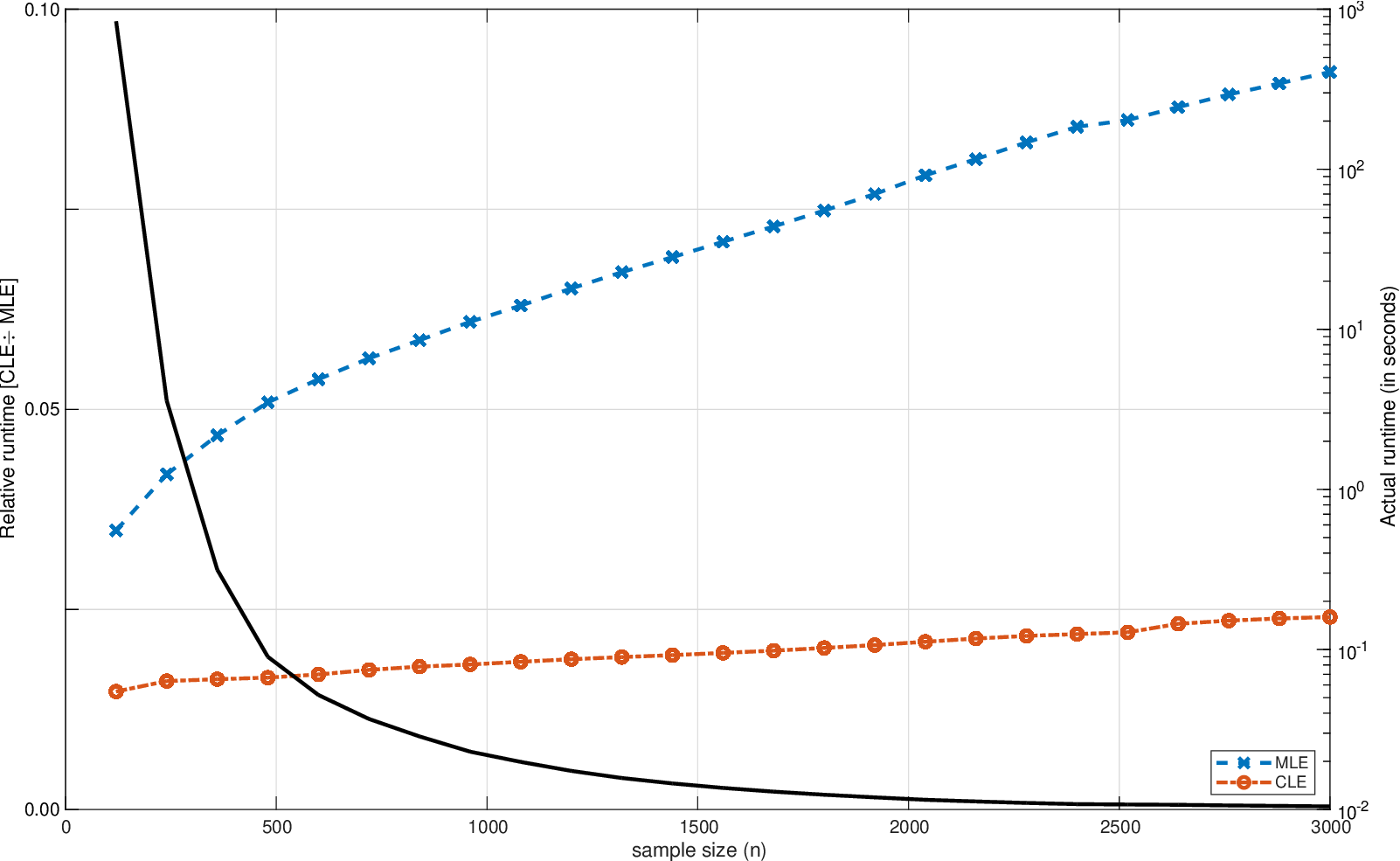} &
\includegraphics[height=8.00cm,width=0.48\textwidth]{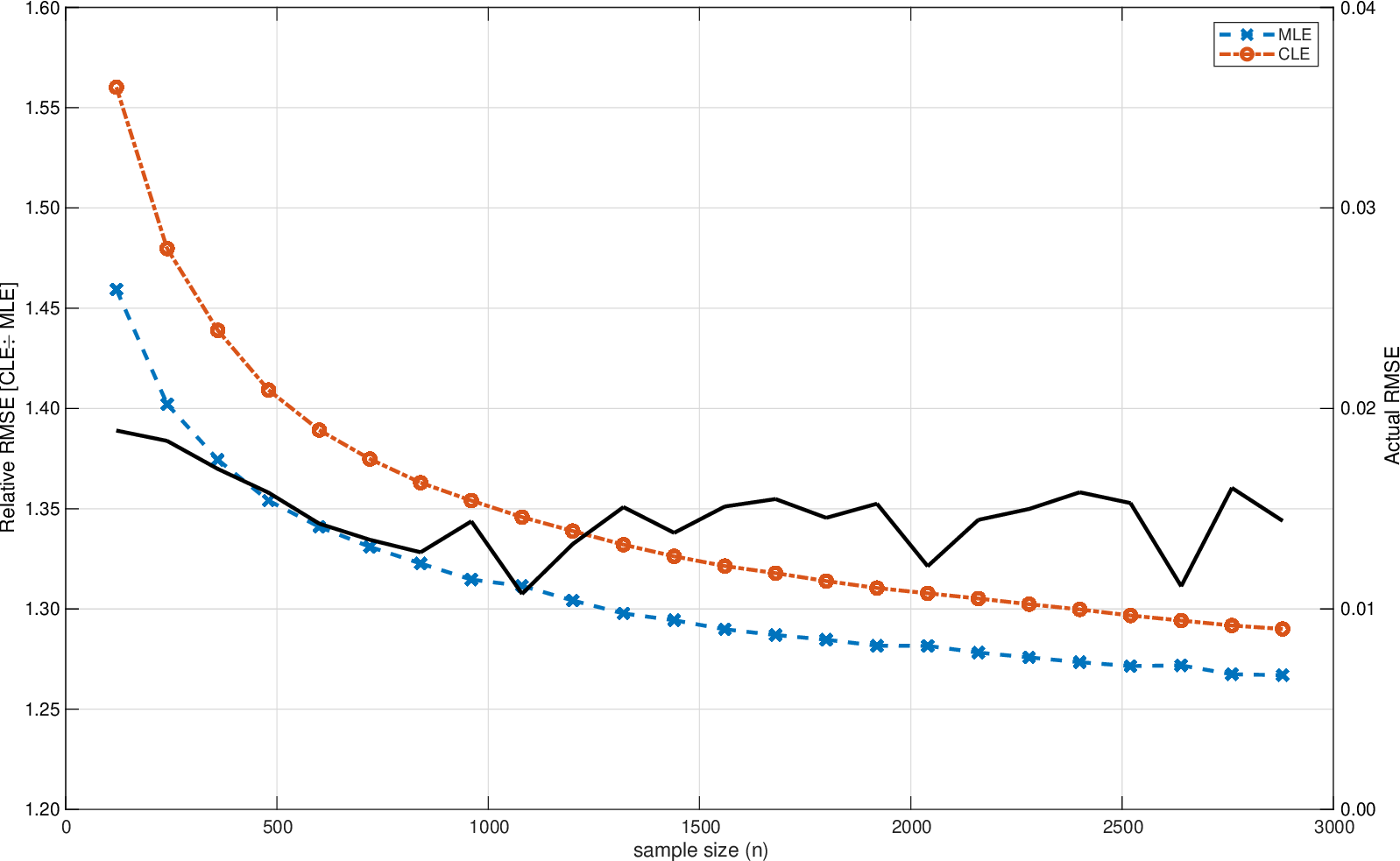} \\
\end{tabular}
\begin{scriptsize}
\parbox{\textwidth}{\emph{Note.} We simulate an fOU process, defined in Section \ref{section:fOU}. The three-dimensional parameter vector $\theta = ( \kappa, \nu, H)^{ \top}$ follows Panel B in Table \ref{table:sim-ou-q=3-N=12-mu=0} (the mean is not estimated). The sample size is $n = T \times N$, which means $N$ intraday observations over a period of $T = 10, 20, \dots, 250$ days. As in our simulation and empirical implementation, we fix $N = 12$. We estimate $\theta$ with ML and MCL. In Panel A, we show the (actual and relative) runtime, while Panel B shows the associated (actual and relative) root mean squared error (RMSE) of the Hurst parameter, $H$. The true value of $H$ is 0.1.}
\end{scriptsize}
\end{center}
\end{figure}

In the next subsection, we establish an asymptotic theory for a $q$-wise composite likelihood estimator, where the composite likelihood function is formed as an equal weighted product of marginal events defined via the selection of tuples of up to length $q$. It nests both the independence likelihood and the pairwise likelihood as special cases. We take $K \in \mathbb{N}$ fixed and let $Q$ be a collection of $K$ $q^{j}$-tuples of natural numbers, where $q^{j} \in \mathbb{N}$, i.e. tuples of the form $k^{j} = (0,k_{2}^{j}, \dots, k_{q^{j}}^{j}) \in \{0 \} \times \mathbb{N}^{q^{j}-1}$ for $j = 1, \dots, K$, indexing which observations to add. Here, we use the convention that $\mathbb{N}^{0} = \emptyset$. To ease notation, we set $q = \max_{j = 1, \dots, K} q^{j}$. Moreover, we suppose without loss of generality that the indices are increasing and that $\max_{j = 1, \dots, K}  k_{q^{j}}^{j} < n$. For $k^{j} \in Q$, we let $f_{k^{j}}(y_{i}^{k^{j}}; \theta) = f_{k^{j}}(y_{i}, y_{i+k_{2}^{j}}, \dots, y_{i+k_{q^{j}}^{j}}; \theta)$ denote the density of $Y_{i}^{k^{j}} = (Y_{i\Delta},Y_{(i+k_{2}^{j})\Delta}, \dots,Y_{(i+k_{q^{j}}^{j})\Delta})$, which by stationary is independent of $i$.\footnote{With this notation, maximum likelihood corresponds to $q = n$ and $K = 1$.}

The $q$-wise log-composite likelihood can be written
\begin{equation} \label{equation:qcl}
cl( \theta;y) = \sum_{k^{j} \in Q} \sum_{i=1}^{n-k^{j}_{q^{j}}} \log f_{k^{j}}(y_{i}^{k^{j}}; \theta).
\end{equation}

\subsection{Asymptotic theory}

To prove our asymptotic theory, we need an identification assumption to ensure that the log-composite likelihood is uniquely maximized at $\theta_{0}$:

\begin{assumption} \label{assumption:identification}
We assume that for any $\theta \in \Theta$ with $\theta \neq \theta_{0} : \sum_{k^{j} \in Q} f_{k^{j}}(y^{k^{j}}, \theta) \neq \sum_{k^{j} \in Q} f_{k^{j}} (y^{k^{j}}, \theta_{0})$ for a set of vectors $y \in \mathbb{R}^{q}$ of non-zero measure.
\end{assumption}
Assumption \ref{assumption:identification} is a standard identification condition formulated in terms of a density. Together with the moment condition $\mathbb{E}(| \log f(y; \theta)|) < \infty$, it is sufficient to ensure that the log-composite likelihood has a unique maximum at $\theta_{0}$ by the information inequality, see \citet[][Lemma 2.2]{newey-mcfadden:94a}. The existence of the moment holds trivially for the Gaussian distribution. Moreover, in our setting the identification condition can be verified if $K$ and $q$ are sufficiently large and diverse relative to the dimension of $\theta$, $p$. This requires that all the parameters of the model enter into the densities of the chosen events in $Q$, and they should do so in a way that is linearly independent.

In the Gaussian setting, this amounts to establish that the selected autocovariances identify $\theta_{0}$. This is the content of the next lemma.

\begin{lemma} \label{lemma:identification}
Let $\gamma_{h}( \theta)$ denote the parameterized ACF of $Y$. Suppose that
\begin{equation*}
\gamma_{k_{i}^{j}}( \theta) = \gamma_{k_{i}^{j}}( \theta_{0}) \text{ for } i = 1, \dots, q^{j}, \quad j = 1, \dots, K \Longrightarrow \theta = \theta_{0}.
\end{equation*}
Then, $\mathbb{E}[cl( \theta; Y)]$ has a unique maximum at $\theta = \theta_{0}$.
\end{lemma}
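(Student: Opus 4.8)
The plan is to reduce the claim to the identification assumption already laid out (Assumption \ref{assumption:identification}) by exploiting the structure of the multivariate Gaussian density. First I would note that, because $Y$ is Gaussian with mean zero (recall $\mu = 0$ is imposed in this part of the paper) and with the ACF fully determined by $\theta$ through $\gamma_h(\theta) = \nu^2(\theta)\rho_h(\theta)$, each factor $f_{k^j}(\cdot;\theta)$ is the density of a centered multivariate normal whose covariance matrix $\Sigma_{k^j}(\theta)$ is assembled entirely from the entries $\gamma_0(\theta), \gamma_{k_2^j}(\theta), \dots, \gamma_{k_{q^j}^j}(\theta)$. Hence the map $\theta \mapsto \bigl(f_{k^j}(\cdot;\theta)\bigr)_{k^j \in Q}$ factors through the finite vector of autocovariances $\bigl(\gamma_{k_i^j}(\theta)\bigr)_{i,j}$. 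Consequently, if two parameter values $\theta$ and $\theta_0$ produce the same collection of selected autocovariances, they produce literally the same collection of Gaussian densities $f_{k^j}$, and the hypothesis of the lemma is exactly the contrapositive statement that distinct $\theta$ must differ in at least one selected autocovariance — which forces at least one density $f_{k^j}(\cdot;\theta)$ to differ from $f_{k^j}(\cdot;\theta_0)$ on a set of positive Lebesgue measure. This verifies Assumption \ref{assumption:identification} under the stated hypothesis.

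Second, with Assumption \ref{assumption:identification} in hand, I would invoke the information inequality in the form already cited in the text — \citet[Lemma 2.2]{newey-mcfadden:94a} — applied termwise. For each fixed $k^j \in Q$, since $f_{k^j}(\cdot;\theta_0)$ is the true density of $Y_i^{k^j}$, the first Bartlett identity / nonnegativity of Kullback–Leibler divergence gives
\begin{equation*}
\mathbb{E}_{\theta_0}\bigl[\log f_{k^j}(Y_i^{k^j};\theta)\bigr] \le \mathbb{E}_{\theta_0}\bigl[\log f_{k^j}(Y_i^{k^j};\theta_0)\bigr],
\end{equation*}
with equality if and only if $f_{k^j}(\cdot;\theta) = f_{k^j}(\cdot;\theta_0)$ almost everywhere. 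Summing over $k^j \in Q$ (all expectations are finite because the Gaussian log-density has finite moments, as remarked in the text), we obtain $\mathbb{E}_{\theta_0}[cl(\theta;Y)] \le \mathbb{E}_{\theta_0}[cl(\theta_0;Y)]$ for every $\theta \in \Theta$, with equality forcing $f_{k^j}(\cdot;\theta) = f_{k^j}(\cdot;\theta_0)$ a.e. for \emph{every} $k^j \in Q$ simultaneously. By the reduction of the first paragraph, that simultaneous equality implies $\gamma_{k_i^j}(\theta) = \gamma_{k_i^j}(\theta_0)$ for all $i,j$, which by the lemma's hypothesis yields $\theta = \theta_0$. Therefore the maximum of $\mathbb{E}_{\theta_0}[cl(\theta;Y)]$ is attained uniquely at $\theta_0$.

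The main technical point to be careful about — the only place where something could go wrong — is the "equality in the information inequality implies equality of densities" step: this requires that each $\Sigma_{k^j}(\theta)$ be nonsingular (so that the Gaussian density is genuinely a density on $\mathbb{R}^{q^j}$ and the KL divergence between two such Gaussians vanishes only when their covariance matrices coincide), and it requires matching the covariance matrices entry by entry back to the autocovariance sequence. Nonsingularity of $\Sigma_{k^j}(\theta)$ is automatic for a stationary Gaussian process with a nontrivial spectral density (equivalently, whenever $\rho_h \to 0$, which is Assumption \ref{assumption:correlation}), so no extra hypothesis is needed; and the entrywise matching is immediate because a Toeplitz-type covariance block is determined by, and determines, the finite set of lags it involves. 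Everything else — finiteness of the relevant expectations, the termwise application of Jensen's inequality — is routine for the Gaussian family, so I expect the proof to be short.
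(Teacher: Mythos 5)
Your proposal is correct and follows essentially the same route as the paper's proof: a termwise information (Jensen/Kullback--Leibler) inequality for the Gaussian marginal densities, strict unless the densities coincide almost everywhere, combined with the observation that under the stated hypothesis any $\theta \neq \theta_{0}$ forces at least one selected autocovariance, hence one covariance matrix and one density, to differ on a set of positive measure (the paper phrases this last step as two distinct quadratic log-densities agreeing only on a negligible set of $y$). The only quibble is your side claim that the densities factor through the selected lags $\gamma_{k_{i}^{j}}$ alone---false in general, since $\Sigma_{k^{j}}$ also involves the difference lags $k_{a}^{j}-k_{b}^{j}$---but that direction is never used, and the implication you actually invoke (a differing selected autocovariance implies a differing density) is exactly the one the paper's argument relies on.
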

The main message of this lemma---as explained further in Appendix \ref{appendix:proof}---is that to ensure identification we need to select more lags, either through one long tuple or multiple shorter tuples, than the maximum number of times $\gamma_{h}( \theta)$ can cross for distinct $\theta$. Since we examine processes with non-oscillatory ACFs, the condition is typically fulfilled even with a relatively primitive implementation.

Our first result is a law of large numbers.

\begin{theorem} \label{theorem:lln}
Assume that the true model is the stationary Gaussian process defined in \eqref{equation:Y} and that Assumptions \ref{assumption:correlation} -- \ref{assumption:identification} hold. Then, under standard regularity conditions, as $n \rightarrow \infty$,
\begin{equation*}
\hat{ \theta}_{ \mathrm{MCLE}} \overset{ \mathbb{P}}{ \longrightarrow} \theta_{0}.
\end{equation*}
\end{theorem}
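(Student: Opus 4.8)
The plan is to deploy the standard consistency argument for extremum estimators of \citet[][Theorem 2.1]{newey-mcfadden:94a}, adapted to the $q$-wise composite likelihood setting. The key is to verify its three ingredients: (i) the normalized objective function $Q_{n}(\theta) := n^{-1} cl(\theta; y)$ converges in probability, pointwise in $\theta$, to a nonstochastic limit $Q_{0}(\theta) := \mathbb{E}[\log f_{k^{j}}(Y^{k^{j}}; \theta)]$ summed over $k^{j} \in Q$; (ii) this convergence is uniform over the compact set $\Theta$; and (iii) the limit $Q_{0}(\theta)$ is uniquely maximized at $\theta_{0}$. Ingredient (iii) is already delivered: Assumption \ref{assumption:identification}, together with the trivially satisfied moment condition $\mathbb{E}(|\log f_{k^{j}}(Y^{k^{j}}; \theta)|) < \infty$ for the Gaussian densities, gives the information inequality via \citet[][Lemma 2.2]{newey-mcfadden:94a}, and Lemma \ref{lemma:identification} translates this into the concrete requirement on the selected autocovariances. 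So the work is in (i) and (ii).

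For the pointwise law of large numbers in (i), I would write $cl(\theta; y) = \sum_{k^{j} \in Q} \sum_{i=1}^{n - k^{j}_{q^{j}}} \log f_{k^{j}}(Y_{i}^{k^{j}}; \theta)$ and treat each of the $K$ inner sums separately. For fixed $k^{j} \in Q$ and fixed $\theta$, the summands $\{\log f_{k^{j}}(Y_{i}^{k^{j}}; \theta)\}_{i \geq 1}$ form a stationary sequence, being a fixed measurable functional of the $q^{j}$-dimensional sliding window of the stationary Gaussian process $Y$. The crucial point is that ergodicity of $Y$ — which holds precisely because Assumption \ref{assumption:correlation} is in force, by \citet{maruyama:49a} — is inherited by any such functional of a sliding window, so the Birkhoff--Khinchin ergodic theorem applies and $n^{-1}\sum_{i=1}^{n - k^{j}_{q^{j}}} \log f_{k^{j}}(Y_{i}^{k^{j}}; \theta) \to \mathbb{E}[\log f_{k^{j}}(Y^{k^{j}}; \theta)]$ almost surely, hence in probability. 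The boundary effect from the upper limit $n - k^{j}_{q^{j}}$ rather than $n$ is asymptotically negligible since $k^{j}_{q^{j}}$ is fixed. Summing over the finite collection $Q$ gives pointwise convergence of $Q_{n}(\theta)$ to $Q_{0}(\theta)$.

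For the uniformity in (ii), the route is a uniform law of large numbers: I would invoke \citet[][Lemma 2.4]{newey-mcfadden:94a}, which upgrades pointwise convergence to uniform convergence on the compact $\Theta$ provided $\theta \mapsto \log f_{k^{j}}(y^{k^{j}}; \theta)$ is continuous on $\Theta$ for each $y^{k^{j}}$ and is dominated by an integrable, $\theta$-free envelope, i.e.\ $\mathbb{E}[\sup_{\theta \in \Theta} |\log f_{k^{j}}(Y^{k^{j}}; \theta)|] < \infty$. Continuity follows from the assumed smoothness of the parameterization of the covariance matrix $\Sigma_{q^{j}}(\theta)$ together with compactness of $\Theta$ ensuring $\Sigma_{q^{j}}(\theta)$ stays uniformly positive definite with bounded entries, so $\log |\Sigma_{q^{j}}(\theta)|$ and $\Sigma_{q^{j}}^{-1}(\theta)$ vary continuously and boundedly; the dominating function is then of the form $C_{1} + C_{2}\|Y^{k^{j}}\|^{2}$, whose expectation is finite because $Y^{k^{j}}$ is Gaussian. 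These properties are exactly what is meant by the ``standard regularity conditions'' in the statement. I expect the main obstacle — modest as it is here — to be bookkeeping: carefully formulating the regularity conditions (smooth, identifiable parameterization of the ACF; $\Theta$ compact with $\Sigma_{q^{j}}(\theta)$ uniformly nonsingular) so that the envelope domination genuinely holds uniformly in $\theta$, and making sure the ergodicity passes to the sliding-window functionals without extra assumptions beyond Assumption \ref{assumption:correlation}. Once uniform convergence and the unique maximizer are in hand, the consistency conclusion $\hat{\theta}_{\mathrm{MCLE}} \overset{\mathbb{P}}{\to} \theta_{0}$ follows immediately from the argmax continuous mapping argument in \citet[][Theorem 2.1]{newey-mcfadden:94a}.
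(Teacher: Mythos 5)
Your proposal is correct and follows essentially the same route as the paper: ergodicity of the stationary Gaussian process via Assumption \ref{assumption:correlation} and \citet{maruyama:49a} delivers the pointwise law of large numbers for the normalized composite log-likelihood, the information inequality of \citet[][Lemma 2.2]{newey-mcfadden:94a} together with Assumption \ref{assumption:identification} gives the unique maximizer, and a uniform law of large numbers plus the standard extremum-estimator argument yields consistency. The only cosmetic difference is the reference for the uniform step: you invoke \citet[][Lemma 2.4]{newey-mcfadden:94a} with a continuity-plus-envelope argument, while the paper appeals to Theorems 4.1 and 4.3 of \citet{wooldridge:94a}, which cover the stationary ergodic (non-i.i.d.) case directly and thus fit the dependent-data setting slightly more cleanly.
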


\noindent The proofs of our results are presented in Appendix \ref{appendix:proof}.

Next, we derive the asymptotic distribution of our estimator. The theorem consists of several parts, because both the rate of convergence of $\hat{ \theta}_{ \mathrm{MCLE}}$ and the shape of the limiting distribution of the estimation error depends on the persistence of the process.

\begin{theorem} \label{theorem:clt}
Suppose the conditions from Theorem \ref{theorem:lln} hold and that $\theta_{0}$ is an interior point, i.e. $\theta_{0} \in \mathrm{int}( \Theta)$. Let $s_{n}( \theta) = \frac{ \partial}{ \partial \theta} cl( \theta; y)$ be the score and $L_{ \infty}$ a slowly varying function at infinity, i.e. $\lim_{x \rightarrow \infty} L_{ \infty}(tx) \div L_{ \infty}(x) = 1$, for all $t > 0$. Then, as $n \rightarrow \infty$, it holds that\footnote{The notation $f(h) \sim g(h)$ means asymptotic equivalence, i.e. $f(h)/g(h) \rightarrow 1$ as $h \rightarrow \infty$.}
\begin{enumerate}
\item If $\int_{0}^{ \infty} | \gamma_{h}| \mathrm{d}h < \infty$ or if $\gamma_{h} \sim h^{-\beta}L_{ \infty}(h)$ as $h \rightarrow \infty$ for $\beta \in(1/2,1]$, then
\begin{equation*}
\sqrt{n} \big( \hat{ \theta}_{ \mathrm{MCLE}} - \theta_{0} \big) \overset{d}{ \longrightarrow} N \left(0,G( \theta_{0})^{-1} \right),
\end{equation*}
with
\begin{equation*}
G( \theta_{0})^{-1} = H( \theta_{0})^{-1} V( \theta_{0}) H( \theta_{0})^{-1},
\end{equation*}
\begin{equation*}
H( \theta) = - \sum_{k \in Q} \mathbb{E} \left( \frac{ \partial^{2}}{ \partial \theta \partial \theta^{ \top}} \log \left( f_{k}(Y_{k}; \theta) \right)   \right),
\end{equation*}
and
\begin{align*}
V( \theta) &= \lim_{n \rightarrow \infty} \frac{1}{n} \mathbb{E} \left(s_{n}( \theta) s_{n}( \theta)^{ \top} \right) \\
&= \frac{1}{4} \sum_{k^{1}, k^{2} \in Q} \sum_{l=- \infty}^{ \infty} \sum_{j_{1}, j_{2} \in k_{1}} \sum_{j_{3}, j_{4} \in k_{2}} \left[ \frac{ \partial}{ \partial \theta_{r}} \big( \Sigma_{k^{1}}^{-1}( \theta) \big)_{j_{1},j_{2}} \right]_{r=1}^{p}\\&\times \left( \left[ \frac{ \partial}{ \partial \theta_{r}} \big( \Sigma_{k^{2}}^{-1}( \theta) \big)_{j_{3},j_{4}} \right]_{r=1}^{p} \right)^{ \top} \\
& \times \left( \gamma_{(l+j_{1}-j_{3}) \Delta} \gamma_{(l+j_{2}-j_{4}) \Delta} + \gamma_{(l+j_{1}-j_{4}) \Delta} \gamma_{(l+j_{2}-j_{3}) \Delta} \right).
\end{align*}

\item If $\gamma_{h} \sim h^{-1/2}L_{ \infty}(h)$ as $h \rightarrow \infty$, then
\begin{equation*}
\frac{\sqrt{n}}{ \sqrt{L_{ \gamma}(n)}} \big( \hat{ \theta}_{ \mathrm{MCLE}} - \theta_{0} \big) \overset{d}{ \longrightarrow}H( \theta_{0})^{-1} \Psi(\theta_{0}) N(0,1),
\end{equation*}
with $H( \theta_{0})$ defined as above, and where $\Psi$ is a $p \times 1$ vector with elements
\begin{equation}
\Psi_{r}(\theta) =  \sum_{k^{j} \in Q} \sum_{j_{1}, j_{2} \in k^{j}} \frac{ \partial}{ \partial \theta_{r}} \big( \Sigma_{k}^{-1}( \theta) \big)_{j_{1},j_{2}},
\end{equation}
for $r = 1, \dots, p$, and $L_{ \gamma}$ is a slowly varying function, related to $L_\infty$, as defined in Appendix \ref{appendix:proof}.

\item If $\gamma_{h} \sim h^{- \beta}L_{ \infty}(h)$ as $h \rightarrow \infty$ for $\beta \in(0,1/2)$, then
\begin{equation}
\frac{n^{ \beta}}{ \sqrt{L_{2}(n)}} ( \hat{ \theta}_{ \mathrm{MCLE}} - \theta_{0}) \overset{d}{ \longrightarrow} H( \theta_{0})^{-1} \Psi(\theta_{0}) Z_{2,H}(1),
\end{equation}
where $L_{2}(n) = 4C_{2}L_{ \infty}^{2}(n)$, $C_{2} = [(1-2 \beta)(2- \beta)]^{-1}$, $\Psi$ is as above, and $Z_{2,H}$ is a Rosenblatt process with parameter $H=1- \beta/2$. 
\end{enumerate}
\end{theorem}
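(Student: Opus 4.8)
The plan is to argue via the standard mean value expansion of the composite score, isolating a Hessian term that converges by ergodicity and a score term whose normalized limiting law produces the three regimes. Since $\hat{\theta}_{\mathrm{MCLE}} \overset{\mathbb{P}}{\longrightarrow} \theta_{0}$ by Theorem \ref{theorem:lln} and $\theta_{0} \in \mathrm{int}(\Theta)$, the first-order condition $s_{n}(\hat{\theta}_{\mathrm{MCLE}}) = 0$ holds with probability tending to one, and a component-wise Taylor expansion gives $0 = s_{n}(\theta_{0}) + J_{n}(\bar{\theta}_{n})(\hat{\theta}_{\mathrm{MCLE}} - \theta_{0})$, where $J_{n}(\theta) = \partial s_{n}(\theta) / \partial \theta^{\top}$ is the composite-likelihood Hessian and $\bar{\theta}_{n} \overset{\mathbb{P}}{\longrightarrow} \theta_{0}$. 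A uniform law of large numbers on a neighborhood of $\theta_{0}$ --- valid under the ergodicity implied by Assumption \ref{assumption:correlation} together with the maintained dominance and continuity conditions on the second derivatives of $\log f_{k}$ --- yields $n^{-1}J_{n}(\bar{\theta}_{n}) \overset{\mathbb{P}}{\longrightarrow} -H(\theta_{0})$, and Assumption \ref{assumption:identification} (via Lemma \ref{lemma:identification}) together with the regularity conditions makes $H(\theta_{0})$ nonsingular. Hence $\hat{\theta}_{\mathrm{MCLE}} - \theta_{0} = n^{-1}H(\theta_{0})^{-1}s_{n}(\theta_{0})\,(1 + o_{\mathbb{P}}(1))$, so all three statements reduce to identifying, for the appropriate normalization, the limit of the composite score at the true parameter.

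The crucial step is to recognize the probabilistic structure of $s_{n}(\theta_{0})$. For the Gaussian marginal densities one has $\partial \log f_{k}(y;\theta_{0}) / \partial \theta_{r} = -\tfrac{1}{2}\big[ y^{\top} (\partial \Sigma_{k}^{-1}(\theta_{0}) / \partial \theta_{r}) y - \mathrm{tr}(\Sigma_{k}(\theta_{0}) \, \partial \Sigma_{k}^{-1}(\theta_{0}) / \partial \theta_{r}) \big]$, so that $s_{n,r}(\theta_{0}) = -\tfrac{1}{2} \sum_{k^{j} \in Q} \sum_{i} \big[ (Y_{i}^{k^{j}})^{\top} B_{r}^{k^{j}} Y_{i}^{k^{j}} - \mathrm{tr}(B_{r}^{k^{j}} \Sigma_{k^{j}}) \big]$ with $B_{r}^{k^{j}} = \partial \Sigma_{k^{j}}^{-1}(\theta_{0}) / \partial \theta_{r}$. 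Substituting $Y = \nu X$ and expanding each quadratic form as a finite linear combination of products $X_{(i+a)\Delta} X_{(i+b)\Delta}$, every centered summand is a finite linear combination of second Hermite polynomials $H_{2}(X_{(i+a)\Delta})$ and of off-diagonal bilinear terms $X_{(i+a)\Delta} X_{(i+b)\Delta}$; in particular $s_{n}(\theta_{0})$ is a $p$-vector of partial sums of functionals of Hermite rank two of the stationary Gaussian sequence $(X_{i\Delta})_{i}$. This places the problem inside the well-developed limit theory for quadratic functionals of Gaussian sequences, in which the decisive quantity is $\sum_{h} \gamma_{h}^{2}$: it is finite in case (i) (since $\int_{0}^{\infty} | \gamma_{h} | \, \mathrm{d}h < \infty$, as well as $\gamma_{h} \sim h^{-\beta} L_{\infty}(h)$ with $\beta \in (1/2,1]$, both give summable squared autocovariances), exactly on the boundary of divergence in case (ii), and divergent in case (iii).

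For case (i), a Breuer--Major-type central limit theorem for vector-valued second-chaos partial sums --- reduced to the scalar case by the Cram\'er--Wold device and established via, e.g., the fourth-moment criterion or a blocking/martingale-approximation argument --- gives $n^{-1/2} s_{n}(\theta_{0}) \overset{d}{\longrightarrow} N(0, V(\theta_{0}))$, where $V(\theta_{0}) = \lim_{n} n^{-1} \mathbb{E}(s_{n}(\theta_{0}) s_{n}(\theta_{0})^{\top})$ is computed by summing the cross-covariances of the quadratic forms; the Gaussian product (Isserlis--Wick) formula collapses each such cross-covariance into sums of products of pairs of autocovariances, producing precisely the quadruple-sum expression in the statement, whose absolute convergence is again $\sum_{h} \gamma_{h}^{2} < \infty$. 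Combining this with the first step delivers $\sqrt{n}(\hat{\theta}_{\mathrm{MCLE}} - \theta_{0}) \overset{d}{\longrightarrow} N(0, H(\theta_{0})^{-1} V(\theta_{0}) H(\theta_{0})^{-1})$.

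For cases (ii) and (iii) the extra ingredient is that, because $\gamma_{h}$ is regularly varying, the dominant contribution to $\mathrm{Cov}(s_{n,r_{1}}(\theta_{0}), s_{n,r_{2}}(\theta_{0}))$ as the lag diverges is $\gamma_{\ell \Delta}^{2}$ multiplied by the aggregate coefficients $\big( \sum_{k^{j}} \mathbf{1}^{\top} B_{r_{1}}^{k^{j}} \mathbf{1} \big) \big( \sum_{k^{j}} \mathbf{1}^{\top} B_{r_{2}}^{k^{j}} \mathbf{1} \big) = \Psi_{r_{1}}(\theta_{0}) \Psi_{r_{2}}(\theta_{0})$, the finitely many bounded index offsets $a-b$ affecting only lower-order terms; consequently the normalized score-covariance matrix is asymptotically rank one, proportional to $\Psi(\theta_{0}) \Psi(\theta_{0})^{\top}$. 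In case (ii), $\beta = 1/2$, the quantity $\sum_{h \le n} \gamma_{h \Delta}^{2}$ is slowly varying in $n$ --- which defines $L_{\gamma}$ --- so $\mathrm{Var}(s_{n}(\theta_{0}))$ grows like $n L_{\gamma}(n)$ and the boundary version of the Breuer--Major theorem yields $(n L_{\gamma}(n))^{-1/2} s_{n}(\theta_{0}) \overset{d}{\longrightarrow} \Psi(\theta_{0}) N(0,1)$, whence the stated rate $\sqrt{n / L_{\gamma}(n)}$ and limiting distribution. In case (iii), $\beta \in (0,1/2)$, one has $\sum_{h} \gamma_{h}^{2} = \infty$, and the Taqqu--Dobrushin--Major non-central limit theorem applies: the Hermite-rank-two partial sums, normalized by $n^{1-\beta} L_{\infty}(n)$, converge to a common Rosenblatt process $Z_{2,H}$ with $H = 1 - \beta/2$, so $n^{\beta - 1} L_{2}(n)^{-1/2} s_{n}(\theta_{0}) \overset{d}{\longrightarrow} \Psi(\theta_{0}) Z_{2,H}(1)$, with the constant $C_{2} = [(1 - 2\beta)(2 - \beta)]^{-1}$ in $L_{2}(n) = 4 C_{2} L_{\infty}^{2}(n)$ arising from the variance normalization of the Rosenblatt process; the first step then gives the claimed rate $n^{\beta} / \sqrt{L_{2}(n)}$ and limit. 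I expect the principal difficulty to lie in cases (ii) and (iii): pinning down the exact variance asymptotics and the rank-one degeneracy with the slowly varying functions tracked correctly, and adapting the non-central limit theorem --- classically stated for a single Hermite-rank-two functional of a scalar Gaussian sequence --- to our vector of multilinear functionals of the discretely sampled process, in particular verifying that the finitely many index offsets within and across the tuples in $Q$ do not disturb the long-range limit.
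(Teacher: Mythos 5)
Your proposal follows essentially the same route as the paper's proof: a mean value expansion of the composite score, convergence of the normalized Hessian to $-H(\theta_{0})$ by ergodicity, recognition of the score as a Hermite-rank-two quadratic functional with the variance computed via Isserlis' theorem, a (multivariate) Breuer--Major CLT in case (i), and degenerate limits with the slowly varying corrections in cases (ii)--(iii) via the non-central limit theory for second-chaos functionals. The only cosmetic difference is that you derive the rank-one degeneracy of the score covariance directly from the regular variation of $\gamma_{h}$, whereas the paper obtains it by citing Hosking's (1996) result that sample autocovariances at distinct lags share a common limit when $\beta \leq 1/2$ --- the same fact in different clothing.
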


The first part of Theorem \ref{theorem:clt} covers the short memory setting with an integrable ACF, but it also permits long memory for $\beta \in (1/2,1]$ with a square-integrable ACF. Here, we get a standard central limit theorem with a standard $n^{-1/2}$ rate of convergence. However, because the composite likelihood misspecifies the true likelihood, the second Bartlett identity (or information matrix equality) is violated, so the Fisher information is replaced with the \citet{godambe:60a} information, which has the sandwich form. The ratio of these matrices determines the loss of efficiency of $\hat{ \theta}_{ \text{MCLE}}$ relative to $\hat{ \theta}_{ \text{MLE}}$ in this setting, since the latter achieves the Cramer-Rao lower bound asymptotically.

The second part is the borderline case with $\beta = 1/2$, where the polynomial decay in the long memory setting starts to impair the properties of $\hat{ \theta}_{ \mathrm{MCLE}}$. First, the ACF is not square-integrable, so the variability matrix has to be scaled by an additional term, $L_{ \gamma}(n)$, whereby the rate of convergence is reduced. An inspection of the expression for $L_{ \gamma}(n)$ in Appendix \ref{appendix:proof} shows that, even if $\lim_{h \rightarrow \infty} L_{ \infty}(h)$ exists, the convergence rate of $\hat{ \theta}_{ \text{MCLE}}$ deteriorates to $\log(n)n^{-1/2}$, as consistent with \citet{davis-yau:11a}. Second, although the limit remains Gaussian, it is important to observe that the entire parameter vector is loading on a single standard normal variate, so the asymptotic distribution is degenerate. This follows from Theorem 4 in \citet{hosking:96a}, because the difference between the sample ACF at any distinct lags converge to zero in probability for $\beta \leq 1/2$.

In the last part, $\beta \in (0,1/2)$, the ACF subsides exceedingly slow. Here, we get a non-central limit theorem with a non-standard convergence rate, which is nevertheless a classical result in the realm of long memory processes. The asymptotic distribution remains degenerate, but it is of Rosenblatt form.

The bottom line is that for the MCLE, the asymptotic theory presents a discontinuity in both the rate of convergence and the form of the limiting distribution, such that inference is complicated.  This is the price we pay for not doing MLE. On the other hand, as shown in Figure \ref{figure:comparison}, estimation can be done in a split second, and it may be the only feasible likelihood-based procedure when looking at very large sample sizes, as demonstrated in one of the applications in our empirical investigation.


Our analysis extends \citet{davis-yau:11a} from pairwise likelihood to the $q$-wise setting. However, even with $q = 2$ we allow for the presence of a slowly varying function in the ACF, which is important for some Gaussian processes, such as the Brownian semistationary process with a power law kernel \citep[e.g.][]{bennedsen-lunde-pakkanen:22a}.

\begin{remark} \label{remark:mu}
In the above, $\mu = 0$ is known to hold. It is of course straightforward to transform a problem with a known nonzero mean into the previous setting by appropriate centering. What if the mean is unknown and, hence, estimated?

The composite likelihood estimator is derived from the score of the log-composite likelihood function, and since we work with Gaussian data, the stochastic terms in the score relate to the autocovariance function of the process. Hence, the main difference between these settings corresponds to the limit theory for autocovariances calculated with a known mean or estimated mean. This was analyzed in the Gaussian setting by \citet{hosking:96a}, who shows that if the memory is not too strong (i.e. short memory or long memory with $\beta \geq 1/2$), there is no impact. That is, estimating the mean does not alter neither the rate of convergence nor the limiting distribution.

On the other hand, with pervasive long memory (i.e. $\beta < 1/2$), estimation of the mean makes a difference relative to our case 3. While the convergence rate is unchanged, the limiting distribution is no longer a Rosenblatt distribution, but it has a more complicated expression.\footnote{To get a better understanding of the difference, one can compare the cumulants of the limiting distribution in Theorem 4.1 of \citet{hosking:96a} with those of the Rosenblatt distribution, which are given by $\kappa_{1} = 0$, $\kappa_{2} = 1$, and $\kappa_{k} = 2^{k-1}(k-1)! [\sigma( \beta)]^{k} c_{k}$, for $k \geq 2$, where $\sigma( \beta) = 2^{-1/2}[(1-2 \beta)(1- \beta)]^{1/2}$ and $c_{k} = \int_{0}^{1} \int_{0}^{1} \cdots  \int_{0}^{1} |x_{1}-x_{2}|^{- \beta} |x_{2}-x_{3}|^{- \beta} \cdots |x_{k-1}-x_{k}|^{- \beta} |x_{k}-x_{1}|^{- \beta} \mathrm{d}x_{1} \mathrm{d}x_{2} \cdots \mathrm{d}x_{k}$.}

Mean estimation also instills a finite sample downward bias in the sample ACF, which impairs the estimation in practice. We shed light on this in the simulation study.
\end{remark}

\begin{remark}
The assumption of Gaussianity for the background driving stationary process can be weakened by assuming that $X_{t}$ is strictly stationary and ergodic with a finite second moment. We can then interpret \eqref{equation:qcl} as a Gaussian quasi-composite likelihood function. In that case, $\hat{ \theta}_{ \text{MCLE}}$ is consistent for the subset of the parameter vector that determines the mean and covariance function of $X_{t}$, since the Gaussian log-likelihood is uniquely maximized at the true value of these \citep[][Lemma A.2]{bollerslev-wooldridge:92a}. Furthermore, the rates of convergence in Theorem \ref{theorem:clt} remain unchanged. We can also establish asymptotic normality in the short-memory case by imposing that $X_{t}$ is strong mixing with an appropriate rate of decay on the mixing coefficients together with a moment condition \citep[see, e.g.,][]{keenan:08a}. In the long-memory setting, however, deriving an explicit expression for the limiting distribution appears to be a more onerous problem.
\end{remark}

\section{A view toward stochastic volatility} \label{section:example}

In this section, we narrow down the estimation of continuous-time stationary Gaussian processes to a particular pair of parametric models. The first is the fractional Ornstein-Uhlenbeck process (fOU), which has attracted a lot of attention in the recent literature on roughness in the stochastic volatility of financial asset returns \citep[e.g.,][]{gatheral-jaisson-rosenbaum:18a, fukasawa-takabatake-westphal:22a, bolko-christensen-pakkanen-veliyev:23a, wang-xiao-yu:23a, shi-yu-zhang:24a, wang-xiao-yu-zhang:25a}. The second is from a so-called Cauchy class, which was proposed in \cite{gneiting-schlather:04a}. In financial econometrics, it has---to the best of our knowledge---only been studied as a volatility model by \citet{bennedsen-lunde-pakkanen:22a}.\footnote{The Cauchy class has seen widespread application in other fields, such as modeling infectious disease spread, network traffic, or forecasting wind speed.}

We begin with a notion of roughness and memory.

\begin{definition} \label{definition:roughness} A stationary stochastic process has roughness index $\alpha$ if its ACF has the following behavior around the origin:
\begin{equation} \label{equation:roughnesss}
(1- \rho_{h}) \sim L(h) \vert h \vert^{2 \alpha+1}, \quad \vert h \vert \rightarrow 0,
\end{equation}
for a function, $L(h)$, that is slowly varying at zero, and $\alpha \in (-1/2,1/2)$.
\end{definition}
The concept of roughness is closely related to the behavior of the (second-order) variogram of a zero-mean real-valued stochastic process with stationary increments that was studied in \citet{gatheral-jaisson-rosenbaum:18a}. For example, (the increments of) a Brownian motion has $\alpha = 0$. Moreover, because it is a continuous Gaussian process, there is a one-to-one correspondence between the roughness index and the H\"{o}lder continuity of its sample paths. Thus, we refer to a process as being \textit{rough} if it has a negative roughness index, $\alpha < 0$, as its sample paths are then more irregular---i.e. less H\"{o}lder continuous---than those of a Brownian motion. Conversely, we call a process \textit{smooth} if it has a positive roughness index, $\alpha > 0$.

\begin{definition} \label{definition:memory}
A stationary stochastic process has \textit{long memory} of degree $\beta$ if its ACF decays at a polynomial rate as the lag length increases:
\begin{equation} \label{equation:memory}
\rho_{h} \sim L_{ \infty}(h) \vert h \vert^{- \beta}, \quad \vert h \vert \rightarrow \infty,
\end{equation}
for some function, $L_{ \infty}(h)$, that is slowly varying at infinity, and $\beta \in (0,1]$.
\end{definition}
The requirement on $\beta$ implies that the ACF is not integrable, which is another common way to define long memory in a time series. Here, we stick to Definition \ref{definition:memory}, which suffices for the purposes of this paper. If the ACF is integrable, we say the process has \textit{short memory}, even if it is does not vanish at an exponential rate.

\subsection{The fOU process} \label{section:fOU}

The first class of parametric stationary Gaussian processes that we entertain in this paper is the fOU process:
\begin{equation} \label{equation:fOU}
\mathrm{d}Y_{t} = - \kappa (Y_{t} - \mu) \mathrm{d}t + \tilde{ \nu} \mathrm{d}B_{t}^{H},
\end{equation}
where $B^{H}$ is a fractional Brownian motion (fBm).\footnote{A fBm is a centered Gaussian process, which is uniquely characterized by its covariance function $\mathbb{E} \big[B_{t}^{H}B_{s}^{H} \big] = \frac{1}{2} \big(t^{2H}+s^{2H}- \vert t-s \vert^{2H} \big)$. The parameter $H \in (0,1)$ is the Hurst exponent. The fBm has stationary increments called fractional Gaussian noise (fGn).}

The unique stationary solution to this stochastic differential equation was derived in \citet{cheridito-kawaguchi-maejima:03a}:\footnote{The fBm is not a semimartingale, except for $H = 1/2$ where it collapses to a sBm. Therefore, the solution in \eqref{equation:solution} cannot be defined as a standard It\^{o} integral, but it should be interpreted as a pathwise Riemann-Stieltjes integral. Although this is more restrictive, it is meaningful here, since the fBm is continuous and the exponential function is of bounded variation.}
\begin{equation} \label{equation:solution}
Y_{t} = \mu + \nu X_{t},
\end{equation}
where $X_{t} = b \int_{- \infty}^{t} e^{- \kappa(t-s)} \mathrm{d} B_{s}^{H}$, $\nu = \tilde{ \nu} / b$, $b = \sqrt{ \dfrac{ \kappa^{2H}}{ H \Gamma(2H)}}$, and $\Gamma$ is the Gamma function. The additional scaling in front of the stochastic integral in \eqref{equation:solution} relative to \eqref{equation:fOU} is a reparameterization of the model to ensure that $\nu$ is the standard deviation of $Y_{t}$. It follows from $\text{var}\left( \int_{- \infty}^{t} e^{- \kappa(t-s)} \mathrm{d} B_{s}^{H} \right) = \kappa^{-2H} H \Gamma(2H)$.

The ACF of the fOU is available from \citet{garnier-solna:18a}:
\begin{align} \label{equation:acf}
\begin{split}
\rho_{h} = \frac{1}{2 \kappa^{2H}} \left( \frac{1}{2} \int_{-\infty}^{ \infty} e^{- \vert y \vert} \vert \kappa h+y \vert^{2H} \mathrm{d}y - \vert \kappa h \vert^{2H} \right).
\end{split}
\end{align}
The ACF of the fBm decays hyperbolically with $\rho_{h} = O(h^{2(H-1)})$ for $H \neq 1/2$, whereas for $H = 1/2$ it is geometrically bounded. It follows from the Kolmogorov-Chentsov theorem that fBm has a modification, where the sample paths are H\"{o}lder continuous with exponent $\gamma \in(0,H)$, so $H$ determines the regularity of the process. In particular, the Hurst parameter is linked to the roughness index in Definition \ref{definition:roughness} through the relation $\alpha = H - 1/2$. Hence, for $H < 1/2$ the fBm is rough, whereas for $H > 1/2$ it is smooth. Furthermore, it is readily seen that its connection with Definition \ref{definition:memory} is $\beta = 2(1-H)$. This implies that the fBm has long memory for $H > 1/2$, whereas it possesses short memory for $H < 1/2$. As explained in \citet{cheridito-kawaguchi-maejima:03a}, the ACF of the fOU process inherits the decay rate of the background driving fBm, so these properties are passed on one-to-one.

The bottom line is that the Hurst exponent controls both the roughness and memory properties of the fOU process. This renders it capable of exhibiting \textit{either} roughness \textit{or} long memory, but not both. This is also rather intuitive, since an examination of the ACF of the time-changed process $B_{ct}^{H}$, for some $c > 0$, reveals that the fBm is self-similar.

\subsection{The Cauchy class} \label{section:cc}

There is, to the best of our knowledge, no known representation of these processes in the form of a stochastic differential equation describing their dynamic. Hence, the Cauchy class is solely defined via the correlation structure:
\begin{equation} \label{equation:cauchy-acf}
\rho_{h} = (1 + |h|^{2 \alpha+1})^{- \beta / (2 \alpha+1)},
\end{equation}
for $\alpha \in (-1/2, 1/2)$ and $\beta > 0$. It follows that $\rho_{h}$ behaves as \eqref{equation:roughnesss} for $h \rightarrow 0$ and as \eqref{equation:memory} for $h \rightarrow \infty$.

The Cauchy class can also exhibit roughness and long memory. However, whereas these properties are forged together by the Hurst exponent for the fOU process, here the features are decoupled and controlled by separate parameters, $\alpha$ and $\beta$, such that it can exhibit \textit{both} roughness ($\alpha < 0$) \textit{and} long memory ($\beta < 1$).

Motivated by its ability to decouple the short- and long-term persistence, this process was studied as an empirical model for log-spot variance in \citet{bennedsen-lunde-pakkanen:22a}. They fitted it with method-of-moments to high-frequency data from the E-mini S\&P 500 futures contract and indeed found evidence of both roughness and long memory.

\section{Monte Carlo analysis} \label{section:simulation}

\subsection{Simulation design}

To inspect the small sample properties of our CLE procedure, we simulate the stationary Gaussian processes from Section \ref{section:example}, i.e. the fOU process and Cauchy class. The parameter vector is four-dimensional, i.e. $\theta_{ \text{OU}} = ( \mu, \kappa, \nu, \alpha)^{ \top}$ and $\theta_{ \text{CC}} = ( \mu, \beta, \nu, \alpha)^{ \top}$. To shed light on the importance of Remark \ref{remark:mu}, we fix $\mu = 0$ and examine the impact of knowing this a priori, so $\mu$ is not estimated, and the setup where $\mu$ is inferred alongside the covariance-related parameters.

The fOU model was studied in \citet{bolko-christensen-pakkanen-veliyev:23a} and \citet{wang-xiao-yu:23a}, among others, to describe the stochastic log-variance of financial asset returns, $Y_{t} = \log \sigma_{t}^{2}$.\footnote{\citet{wang-xiao-yu-zhang:25a} prove the asymptotic theory for the time-domain MLE of the fOU process. In contrast to MCLE, the limiting distribution of the covariance-related part of $\hat{ \theta}_{ \text{MLE}}$ is normal with an $n^{-1/2}$ rate of convergence for every $H \in (0,1)$ whether or not the mean is estimated. The mean estimator itself has a slower convergence rate (continuous in $H$) for $H > 1/2$, but it stays Gaussian.} We follow this analogue here and adopt the Monte Carlo design of the former article by examining five different settings for $\theta_{ \text{OU}}$  (listed in Table \ref{table:sim-ou-q=3-N=12-mu=0}) ranging from very rough and short memory ($\alpha = -0.45$ or $H = 0.05$) to smooth and long memory ($\alpha = 0.20$ or $H = 0.70$).

We include an equal number of distinct parameter values for $\theta_{ \text{CC}}$ (listed in Table \ref{table:sim-cc-q=3-N=12-mu=0}), keeping the roughness index fixed at the values of the Hurst exponent from the fOU process, i.e. $\alpha = -0.45, -0.40, \dots, 0.20$, while moving the long-run persistence in the opposite direction, i.e. $\beta = 0.25, 0.50, \dots, 1.25$. This design ensures that the latter features multiple settings with both roughness and long memory.

We simulate over the interval $[0,T]$, where $T$ can be interpreted as the number of days, setting $T = 1095, 1825, 2555$. This amounts to three, five, and seven years worth of data. The largest value roughly matches the sample size in our empirical application to realized variance. We assume the process is observed discretely on an equispaced partition of $N$ points per day, so the $n = N \times T$ data vector is $(Y_{i \Delta})_{i = 1, \dots, N \times T}$, where $\Delta = 1/N$. We follow \citet{christensen-thyrsgaard-veliyev:19a} and base the analysis on $N = 12$ observations per unit interval, yielding a new observation of the log-spot variance every second hour.

We simulate the Cauchy class via circulant embedding, which is an exact discretization that runs in $O(n \log n)$ time. However, since the ACF of the fOU needs to be evaluated with the help of numerical integration we opt for a slightly different approach for that model. In particular, at a generic time $t$ the solution of the fOU process in \eqref{equation:solution} implies that:
\begin{equation*}
Y_{t} = \mu + (Y_{t- \Delta} - \mu) e^{- \kappa \Delta} + \nu \sqrt{ \dfrac{ \kappa^{2H}}{H \Gamma(2H)}} \int_{t- \Delta}^{t} e^{- \kappa (t-s)} \text{d}B_{s}^{H}.
\end{equation*}
We approximate the stochastic integral as $\int_{t- \Delta}^{t} e^{- \kappa (t-s)} \text{d}B_{s}^{H} \simeq e^{- \kappa \Delta / 2} (B_{t}^{H} - B_{t- \Delta}^{H})$ and simulate increments of the fBm with circulant embedding.

To implement MCLE, we need to decide the terms to include in \eqref{equation:qcl} through the set $Q$. We experiment here with triwise CLE (i.e. $q^{j} = q = 3$) with $K = 5$, where the elements of $k^{j}$ are separated by a fixed amount, i.e. $k^{j} = (0, \ell, 2 \ell)$ for $\ell = 1, 6, 12, 24, 60$. This is a sufficient number of lags to guarantee the identification of our models.\footnote{A routine, but tedious, derivation shows that the Cauchy class is identified with at least four different lags in the criterion function. The proof of this result is available at request. Identification of the fOU process is more difficult to establish, but it is examined in some length in Appendix B of \citet{bolko-christensen-pakkanen-veliyev:23a}.} This setup means that $cl( \theta; y)$ is a summation of densities of the form $f(y_{i}, y_{i+ \ell}, y_{i+2 \ell}; \theta)$. Apart from lag one with a two-hour gap, the other choices of $\ell$ correspond do a half-, whole-, two-, and five-day horizon.\footnote{In a previous version of this article, we relied on a more ``casual'' implementation of CLE that incorporated various lags of the ACF, including ones at the very long end of the curve, based on ``intuition.'' However, upon a closer examination we noted that the performance of CLE deteriorates with such a configuration. We therefore switched to the current setup to boost the performance of CLE and better illuminate its promise.} The reported results are rather robust to the exact configuration of the CL procedure.\footnote{Appendix \ref{appendix:simulation} reports results for the pairwise setting with $q = 2$. The conclusion from this analysis is that triwise CLE systematically leads to lower RMSE in the rough setting, but the pairwise cousin can also be a superior alternative. The change of going from $q = 2$ to $q = 3$, or vice versa, is often small, however. We leave the pursuit of, arguably model-dependent, optimal choices of $q$ and $K$ to a future endeavor.}

We benchmark MCLE against a method-of-moments estimator (MME). We follow the implementation of \citet{wang-xiao-yu:23a} for the fOU process and \citet{bennedsen-lunde-pakkanen:22a} for the Cauchy class.\footnote{We do not compare our MCLE approach to the GMM estimator of \citet{bolko-christensen-pakkanen-veliyev:23a}, since the latter relies on integrated variance, whereas we study spot variance. Neither do we benchmark it against the time-domain MLE of \citet{wang-xiao-yu-zhang:25a} or the frequency-domain approximate likelihood of \citet{fukasawa-takabatake-westphal:22a} and \citet{shi-yu-zhang:24a} for the fOU, since our procedure cannot be expected to be superior to these in terms of efficiency. As we argue, MCLE trades (a slight) efficiency loss against (a much) faster runtime. The MME of \citet{wang-xiao-yu:23a}, which is also extremely fast and has an asymptotic theory that closely resembles that of MCLE---with a discontinuity in the rate of convergence and the limiting distribution---represents a more level playing field and natural competitor in our view.} The details are reviewed in Appendix \ref{appendix:mm}. It is worth pointing out that the first-step MME estimator of $\alpha$ (or $H$) relies on change-of-frequency approach based on second-differenced data. Hence, it does not depend on whether the mean is known or estimated. We should also emphasize that we start the MCLE routine with the MME as initial value. The number of Monte Carlo replications is 10,000.

\subsection{Simulation outcome}

The Monte Carlo evidence for the fOU process and Cauchy class are summarized separately in a series of tables below. In each one, we report the outcome of the MCLE next to that of the MME. For every setting, we show the bias of the parameter estimator (i.e., sample average less true value) and its standard error across replica (in parentheses). To compare MCLE with MME, we add the relative root mean squared error, i.e., $\text{RMSE}( \hat{ \theta}_{ \text{MCLE}}) / \text{RMSE}( \hat{ \theta}_{ \text{MME}})$. A number below (above) one indicates that MCLE is more (less) efficient than MME.\footnote{The relative RMSE should be interpreted with caution. The actual RMSEs of MCLE and MME are often small, so it does not take much to move the ratio by a lot in either direction.}

\subsubsection{The fOU process}

Looking at the results in Tables \ref{table:sim-ou-q=3-N=12-mu=0} -- \ref{table:sim-ou-q=3-N=12-mu=1} for the fOU process first, we observe that both MCLE and MME perform well. The parameters are estimated with an immaterial bias and the standard errors are small, even with a limited sample. The sole exception is the mean-reversion parameter, $\kappa$, which is sometimes estimated with a slight upward bias, but it vanishes as the sample grows and is negligible for $T = 2{,}555$. Whether or not the mean is estimated, the conclusion from Panels A -- D is that MCLE is far superior to MME in terms of estimating the covariance-related parameters $(\kappa, \nu, H)$. This includes the relevant scenario with a Hurst exponent in the roughness territory. However, we also observe that its relative efficiency decreases as we move toward long memory. Indeed, in Panel E with $H = 0.7$ the MCLE is slightly inferior to MME for $(\kappa, H)$ that determine the correlation structure of $Y$. This observation is consistent with the evidence for the Cauchy class. In contrast to that model, however, there is only a minuscule change in performance on $(\kappa, \nu, H)$ between the known and estimated $\mu$. In the latter setting, the estimation of $\mu$ itself is a dead draw.

\begin{sidewaystable}[ht!]
\setlength{ \tabcolsep}{0.15cm}
\begin{center}
\caption{Parameter estimation of the fOU process [$q = 3$ and $\mu$ known].}
\label{table:sim-ou-q=3-N=12-mu=0}
\begin{footnotesize}
\begin{tabular}{lrrrrrrrrrrrrrr}
\hline \hline
Parameter & Value & \multicolumn{3}{c}{$T = 1{,}095$} && \multicolumn{3}{c}{$T = 1{,}825$} && \multicolumn{3}{c}{$T = 2{,}555$} \\
\cline{3-5} \cline{7-9} \cline{11-13}
&& \multicolumn{1}{c}{MCLE} & \multicolumn{1}{c}{MME} & \multicolumn{1}{c}{RMSE$_{r}$} && \multicolumn{1}{c}{MCLE} & \multicolumn{1}{c}{MME} & \multicolumn{1}{c}{RMSE$_{r}$} && \multicolumn{1}{c}{MCLE} & \multicolumn{1}{c}{MME} & \multicolumn{1}{c}{RMSE$_{r}$} \\
\hline 
Panel A: \\
$\mu$ & 0.000 & & & & & & & & & & & \\
$\kappa$ & 0.005 & -0.000 (0.005) & 0.006 (0.020) & 0.241 & & -0.000 (0.004) & 0.003 (0.013) & 0.270 & & -0.000 (0.003) & 0.003 (0.010) & 0.291 \\
$\nu$ & 1.250 & 0.000 (0.009) & 0.000 (0.041) & 0.214 & & 0.000 (0.007) & -0.000 (0.032) & 0.210 & & -0.000 (0.006) & 0.000 (0.027) & 0.208 \\
$\alpha$ & -0.450 & 0.000 (0.003) & -0.000 (0.013) & 0.206 & & 0.000 (0.002) & -0.000 (0.010) & 0.203 & & 0.000 (0.002) & -0.000 (0.009) & 0.200 \\
Panel B: \\
$\mu$ & 0.000 & & & & & & & & & & & \\
$\kappa$ & 0.010 & 0.001 (0.006) & 0.004 (0.014) & 0.438 & & 0.001 (0.005) & 0.002 (0.010) & 0.459 & & 0.001 (0.004) & 0.002 (0.008) & 0.476 \\
$\nu$ & 0.750 & 0.000 (0.006) & 0.000 (0.024) & 0.260 & & 0.000 (0.005) & 0.000 (0.019) & 0.256 & & -0.000 (0.004) & 0.000 (0.016) & 0.254 \\
$\alpha$ & -0.400 & 0.000 (0.004) & -0.000 (0.013) & 0.331 & & 0.000 (0.003) & -0.000 (0.010) & 0.326 & & 0.000 (0.003) & -0.000 (0.009) & 0.321 \\
Panel C: \\
$\mu$ & 0.000 & & & & & & & & & & & \\
$\kappa$ & 0.015 & 0.002 (0.006) & 0.002 (0.008) & 0.806 & & 0.001 (0.005) & 0.001 (0.006) & 0.804 & & 0.001 (0.004) & 0.001 (0.005) & 0.809 \\
$\nu$ & 0.500 & 0.000 (0.007) & 0.000 (0.016) & 0.440 & & 0.000 (0.006) & -0.000 (0.013) & 0.434 & & 0.000 (0.005) & 0.000 (0.011) & 0.428 \\
$\alpha$ & -0.200 & 0.000 (0.008) & -0.000 (0.012) & 0.647 & & 0.000 (0.006) & -0.000 (0.010) & 0.640 & & 0.000 (0.005) & -0.000 (0.008) & 0.630 \\
Panel D: \\
$\mu$ & 0.000 & & & & & & & & & & & \\
$\kappa$ & 0.035 & 0.002 (0.009) & 0.002 (0.010) & 0.928 & & 0.001 (0.007) & 0.001 (0.008) & 0.917 & & 0.001 (0.006) & 0.001 (0.007) & 0.918 \\
$\nu$ & 0.300 & 0.000 (0.006) & 0.000 (0.010) & 0.609 & & 0.000 (0.005) & -0.000 (0.008) & 0.603 & & 0.000 (0.004) & -0.000 (0.007) & 0.594 \\
$\alpha$ & 0.000 & 0.000 (0.010) & -0.000 (0.012) & 0.888 & & 0.000 (0.008) & -0.000 (0.009) & 0.882 & & 0.000 (0.007) & -0.000 (0.008) & 0.869 \\
Panel E: \\
$\mu$ & 0.000 & & & & & & & & & & & \\
$\kappa$ & 0.070 & 0.004 (0.018) & 0.003 (0.016) & 1.141 & & 0.003 (0.014) & 0.002 (0.012) & 1.132 & & 0.002 (0.012) & 0.001 (0.010) & 1.134 \\
$\nu$ & 0.200 & 0.001 (0.007) & 0.000 (0.008) & 0.927 & & 0.000 (0.006) & -0.000 (0.006) & 0.921 & & 0.000 (0.005) & -0.000 (0.005) & 0.913 \\
$\alpha$ & 0.200 & 0.001 (0.013) & -0.000 (0.011) & 1.229 & & 0.000 (0.010) & -0.000 (0.008) & 1.230 & & 0.000 (0.009) & -0.000 (0.007) & 1.224 \\
\hline \hline
\end{tabular}
\end{footnotesize}
\smallskip
\begin{scriptsize}
\parbox{0.98\textwidth}{\emph{Note.} 
We simulate the process in the caption of the table 10,000 times on the interval $[0,T]$, where $T$ is interpreted as the number of days. 
There are $N = 12$ observations per unit interval corresponding to twelve every day.
The true value of the parameter vector appears to the left in Panel A -- E. 
We estimate $\theta$ with the maximum composite likelihod estimation (MCLE) procedure developed in the main text, and benchmark it against a method-of-moments estimator (MME).
The table shows the Monte Carlo average of each parameter estimate across simulations (standard deviation in parenthesis).
The last column reports the RMSE ratio defined as $\text{RMSE}_{r} = \text{RMSE}( \hat{ \theta}_{ \text{MCLE}}) / \text{RMSE}( \hat{ \theta}_{ \text{MME}})$. 
}
\end{scriptsize}
\end{center}
\end{sidewaystable}

\begin{sidewaystable}[ht!]
\setlength{ \tabcolsep}{0.15cm}
\begin{center}
\caption{Parameter estimation of the fOU process [$q = 3$ and $\mu$ estimated].}
\label{table:sim-ou-q=3-N=12-mu=1}
\begin{footnotesize}
\begin{tabular}{lrrrrrrrrrrrrrr}
\hline \hline
Parameter & Value & \multicolumn{3}{c}{$T = 1{,}095$} && \multicolumn{3}{c}{$T = 1{,}825$} && \multicolumn{3}{c}{$T = 2{,}555$} \\
\cline{3-5} \cline{7-9} \cline{11-13}
&& \multicolumn{1}{c}{MCLE} & \multicolumn{1}{c}{MME} & \multicolumn{1}{c}{RMSE$_{r}$} && \multicolumn{1}{c}{MCLE} & \multicolumn{1}{c}{MME} & \multicolumn{1}{c}{RMSE$_{r}$} && \multicolumn{1}{c}{MCLE} & \multicolumn{1}{c}{MME} & \multicolumn{1}{c}{RMSE$_{r}$} \\
\hline 
Panel A: \\
$\mu$ & 0.000 & -0.001 (0.258) & -0.001 (0.258) & 0.998 & & 0.000 (0.164) & 0.000 (0.165) & 0.997 & & 0.000 (0.120) & 0.000 (0.121) & 0.997 \\
$\kappa$ & 0.005 & 0.001 (0.006) & 0.008 (0.023) & 0.240 & & 0.000 (0.004) & 0.004 (0.014) & 0.268 & & 0.000 (0.003) & 0.003 (0.011) & 0.290 \\
$\nu$ & 1.250 & 0.000 (0.009) & 0.000 (0.041) & 0.214 & & 0.000 (0.007) & -0.000 (0.032) & 0.210 & & -0.000 (0.006) & 0.000 (0.027) & 0.208 \\
$\alpha$ & -0.450 & 0.000 (0.003) & -0.000 (0.013) & 0.207 & & 0.000 (0.002) & -0.000 (0.010) & 0.204 & & 0.000 (0.002) & -0.000 (0.009) & 0.200 \\
Panel B: \\
$\mu$ & 0.000 & -0.000 (0.094) & -0.000 (0.094) & 0.999 & & 0.000 (0.062) & 0.000 (0.062) & 0.999 & & -0.000 (0.047) & -0.000 (0.047) & 0.999 \\
$\kappa$ & 0.010 & 0.002 (0.007) & 0.005 (0.015) & 0.443 & & 0.001 (0.005) & 0.003 (0.010) & 0.462 & & 0.001 (0.004) & 0.002 (0.008) & 0.479 \\
$\nu$ & 0.750 & 0.000 (0.006) & 0.000 (0.024) & 0.260 & & 0.000 (0.005) & 0.000 (0.019) & 0.256 & & -0.000 (0.004) & 0.000 (0.016) & 0.254 \\
$\alpha$ & -0.400 & 0.000 (0.004) & -0.000 (0.013) & 0.331 & & 0.000 (0.003) & -0.000 (0.010) & 0.327 & & 0.000 (0.003) & -0.000 (0.009) & 0.321 \\
Panel C: \\
$\mu$ & 0.000 & -0.000 (0.097) & -0.000 (0.097) & 1.002 & & 0.000 (0.070) & 0.000 (0.070) & 1.001 & & -0.000 (0.056) & -0.000 (0.056) & 1.001 \\
$\kappa$ & 0.015 & 0.003 (0.007) & 0.003 (0.009) & 0.814 & & 0.002 (0.005) & 0.002 (0.006) & 0.811 & & 0.001 (0.004) & 0.001 (0.005) & 0.815 \\
$\nu$ & 0.500 & 0.000 (0.007) & 0.000 (0.016) & 0.441 & & 0.000 (0.006) & -0.000 (0.013) & 0.435 & & 0.000 (0.005) & 0.000 (0.011) & 0.429 \\
$\alpha$ & -0.200 & 0.000 (0.008) & -0.000 (0.012) & 0.647 & & 0.000 (0.006) & -0.000 (0.010) & 0.641 & & 0.000 (0.005) & -0.000 (0.008) & 0.631 \\
Panel D: \\
$\mu$ & 0.000 & -0.000 (0.068) & -0.000 (0.068) & 1.001 & & 0.000 (0.053) & 0.000 (0.053) & 1.001 & & -0.000 (0.045) & -0.000 (0.045) & 1.001 \\
$\kappa$ & 0.035 & 0.004 (0.010) & 0.004 (0.011) & 0.938 & & 0.003 (0.007) & 0.002 (0.008) & 0.926 & & 0.002 (0.006) & 0.002 (0.007) & 0.927 \\
$\nu$ & 0.300 & 0.001 (0.006) & 0.000 (0.010) & 0.613 & & 0.000 (0.005) & -0.000 (0.008) & 0.605 & & 0.000 (0.004) & -0.000 (0.007) & 0.596 \\
$\alpha$ & 0.000 & 0.001 (0.010) & -0.000 (0.012) & 0.890 & & 0.001 (0.008) & -0.000 (0.009) & 0.883 & & 0.000 (0.007) & -0.000 (0.008) & 0.871 \\
Panel E: \\
$\mu$ & 0.000 & 0.000 (0.069) & 0.000 (0.069) & 1.001 & & 0.000 (0.059) & 0.000 (0.059) & 1.001 & & 0.000 (0.054) & 0.000 (0.054) & 1.000 \\
$\kappa$ & 0.070 & 0.012 (0.018) & 0.009 (0.016) & 1.157 & & 0.008 (0.013) & 0.006 (0.012) & 1.148 & & 0.007 (0.011) & 0.005 (0.010) & 1.151 \\
$\nu$ & 0.200 & 0.002 (0.007) & 0.000 (0.008) & 0.951 & & 0.002 (0.006) & -0.000 (0.006) & 0.938 & & 0.001 (0.005) & -0.000 (0.005) & 0.926 \\
$\alpha$ & 0.200 & 0.004 (0.013) & -0.000 (0.011) & 1.219 & & 0.003 (0.010) & -0.000 (0.008) & 1.222 & & 0.002 (0.009) & -0.000 (0.007) & 1.216 \\
\hline \hline
\end{tabular}
\end{footnotesize}
\smallskip
\begin{scriptsize}
\parbox{0.98\textwidth}{\emph{Note.} 
We simulate the process in the caption of the table 10,000 times on the interval $[0,T]$, where $T$ is interpreted as the number of days. 
There are $N = 12$ observations per unit interval corresponding to twelve every day.
The true value of the parameter vector appears to the left in Panel A -- E. 
We estimate $\theta$ with the maximum composite likelihod estimation (MCLE) procedure developed in the main text, and benchmark it against a method-of-moments estimator (MME).
The table shows the Monte Carlo average of each parameter estimate across simulations (standard deviation in parenthesis).
The last column reports the RMSE ratio defined as $\text{RMSE}_{r} = \text{RMSE}( \hat{ \theta}_{ \text{MCLE}}) / \text{RMSE}( \hat{ \theta}_{ \text{MME}})$. 
}
\end{scriptsize}
\end{center}
\end{sidewaystable}

\subsubsection{The Cauchy class}

\begin{sidewaystable}[ht!]
\setlength{ \tabcolsep}{0.15cm}
\begin{center}
\caption{Parameter estimation of the Cauchy class [$q = 3$ and $\mu$ known].}
\label{table:sim-cc-q=3-N=12-mu=0}
\begin{footnotesize}
\begin{tabular}{lrrrrrrrrrrrrrr}
\hline \hline
Parameter & Value & \multicolumn{3}{c}{$T = 1{,}095$} && \multicolumn{3}{c}{$T = 1{,}825$} && \multicolumn{3}{c}{$T = 2{,}555$} \\
\cline{3-5} \cline{7-9} \cline{11-13}
&& \multicolumn{1}{c}{MCLE} & \multicolumn{1}{c}{MME} & \multicolumn{1}{c}{RMSE$_{r}$} && \multicolumn{1}{c}{MCLE} & \multicolumn{1}{c}{MME} & \multicolumn{1}{c}{RMSE$_{r}$} && \multicolumn{1}{c}{MCLE} & \multicolumn{1}{c}{MME} & \multicolumn{1}{c}{RMSE$_{r}$} \\
\hline 
Panel A: \\
$\mu$ & 0.000 & & & & & & & & & & & \\
$\beta$ & 0.250 & 0.073 (0.141) & -0.077 (0.099) & 1.330 & & 0.061 (0.123) & -0.096 (0.082) & 1.226 & & 0.054 (0.114) & -0.105 (0.072) & 1.164 \\
$\nu$ & 1.250 & -0.001 (0.072) & -0.001 (0.072) & 1.000 & & -0.001 (0.066) & -0.001 (0.066) & 1.000 & & -0.001 (0.062) & -0.001 (0.062) & 1.000 \\
$\alpha$ & -0.450 & 0.006 (0.014) & -0.031 (0.010) & 0.622 & & 0.005 (0.012) & -0.032 (0.008) & 0.544 & & 0.005 (0.012) & -0.032 (0.007) & 0.503 \\
Panel B: \\
$\mu$ & 0.000 & & & & & & & & & & & \\
$\beta$ & 0.500 & 0.034 (0.131) & -0.147 (0.121) & 0.835 & & 0.024 (0.109) & -0.160 (0.102) & 0.727 & & 0.019 (0.098) & -0.168 (0.091) & 0.659 \\
$\nu$ & 0.750 & -0.000 (0.021) & -0.000 (0.021) & 1.001 & & -0.000 (0.018) & -0.000 (0.018) & 1.001 & & -0.000 (0.016) & -0.000 (0.016) & 1.000 \\
$\alpha$ & -0.400 & 0.003 (0.013) & -0.059 (0.013) & 0.303 & & 0.002 (0.011) & -0.059 (0.010) & 0.257 & & 0.002 (0.010) & -0.059 (0.009) & 0.230 \\
Panel C: \\
$\mu$ & 0.000 & & & & & & & & & & & \\
$\beta$ & 0.750 & 0.014 (0.105) & 0.023 (0.138) & 0.759 & & 0.009 (0.083) & 0.005 (0.123) & 0.675 & & 0.006 (0.071) & -0.005 (0.111) & 0.638 \\
$\nu$ & 0.500 & -0.000 (0.016) & -0.000 (0.016) & 1.004 & & -0.000 (0.013) & -0.000 (0.013) & 1.003 & & -0.000 (0.011) & -0.000 (0.011) & 1.003 \\
$\alpha$ & -0.200 & 0.001 (0.012) & -0.060 (0.013) & 0.265 & & 0.001 (0.009) & -0.060 (0.010) & 0.212 & & 0.000 (0.008) & -0.060 (0.008) & 0.182 \\
Panel D: \\
$\mu$ & 0.000 & & & & & & & & & & & \\
$\beta$ & 1.000 & 0.011 (0.104) & 0.025 (0.158) & 0.656 & & 0.006 (0.081) & 0.011 (0.132) & 0.614 & & 0.004 (0.069) & 0.005 (0.116) & 0.594 \\
$\nu$ & 0.300 & -0.000 (0.009) & -0.000 (0.009) & 1.004 & & -0.000 (0.007) & -0.000 (0.007) & 1.004 & & -0.000 (0.006) & -0.000 (0.006) & 1.004 \\
$\alpha$ & 0.000 & 0.000 (0.010) & -0.019 (0.012) & 0.585 & & 0.000 (0.008) & -0.018 (0.009) & 0.506 & & 0.000 (0.007) & -0.018 (0.008) & 0.451 \\
Panel E: \\
$\mu$ & 0.000 & & & & & & & & & & & \\
$\beta$ & 1.250 & 0.009 (0.109) & 0.038 (0.184) & 0.590 & & 0.005 (0.085) & 0.029 (0.146) & 0.577 & & 0.003 (0.072) & 0.025 (0.125) & 0.572 \\
$\nu$ & 0.200 & -0.000 (0.006) & -0.000 (0.006) & 1.003 & & -0.000 (0.005) & -0.000 (0.005) & 1.003 & & -0.000 (0.004) & -0.000 (0.004) & 1.003 \\
$\alpha$ & 0.200 & 0.000 (0.009) & 0.041 (0.011) & 0.279 & & -0.000 (0.007) & 0.041 (0.008) & 0.222 & & -0.000 (0.006) & 0.041 (0.007) & 0.190 \\
\hline \hline
\end{tabular}
\end{footnotesize}
\smallskip
\begin{scriptsize}
\parbox{0.98\textwidth}{\emph{Note.} 
We simulate the process in the caption of the table 10,000 times on the interval $[0,T]$, where $T$ is interpreted as the number of days. 
There are $N = 12$ observations per unit interval corresponding to twelve every day.
The true value of the parameter vector appears to the left in Panel A -- E. 
We estimate $\theta$ with the maximum composite likelihod estimation (MCLE) procedure developed in the main text, and benchmark it against a method-of-moments estimator (MME).
The table shows the Monte Carlo average of each parameter estimate across simulations (standard deviation in parenthesis).
The last column reports the RMSE ratio defined as $\text{RMSE}_{r} = \text{RMSE}( \hat{ \theta}_{ \text{MCLE}}) / \text{RMSE}( \hat{ \theta}_{ \text{MME}})$. 
}
\end{scriptsize}
\end{center}
\end{sidewaystable}

\begin{sidewaystable}[ht!]
\setlength{ \tabcolsep}{0.15cm}
\begin{center}
\caption{Parameter estimation of the Cauchy class [$q = 3$ and $\mu$ estimated].}
\label{table:sim-cc-q=3-N=12-mu=1}
\begin{footnotesize}
\begin{tabular}{lrrrrrrrrrrrrrr}
\hline \hline
Parameter & Value & \multicolumn{3}{c}{$T = 1{,}095$} && \multicolumn{3}{c}{$T = 1{,}825$} && \multicolumn{3}{c}{$T = 2{,}555$} \\
\cline{3-5} \cline{7-9} \cline{11-13}
&& \multicolumn{1}{c}{MCLE} & \multicolumn{1}{c}{MME} & \multicolumn{1}{c}{RMSE$_{r}$} && \multicolumn{1}{c}{MCLE} & \multicolumn{1}{c}{MME} & \multicolumn{1}{c}{RMSE$_{r}$} && \multicolumn{1}{c}{MCLE} & \multicolumn{1}{c}{MME} & \multicolumn{1}{c}{RMSE$_{r}$} \\
\hline 
Panel A: \\
$\mu$ & 0.000 & -0.000 (0.360) & -0.000 (0.360) & 1.000 & & -0.000 (0.345) & -0.000 (0.345) & 1.000 & & 0.000 (0.336) & 0.000 (0.336) & 1.000 \\
$\beta$ & 0.250 & 0.229 (0.107) & 0.038 (0.109) & 1.725 & & 0.196 (0.087) & -0.004 (0.093) & 1.753 & & 0.177 (0.077) & -0.030 (0.081) & 1.743 \\
$\nu$ & 1.250 & -0.053 (0.016) & -0.052 (0.016) & 1.000 & & -0.048 (0.015) & -0.048 (0.015) & 1.000 & & -0.046 (0.014) & -0.046 (0.014) & 1.000 \\
$\alpha$ & -0.450 & 0.021 (0.010) & -0.031 (0.010) & 0.754 & & 0.018 (0.008) & -0.032 (0.008) & 0.642 & & 0.016 (0.007) & -0.032 (0.007) & 0.579 \\
Panel B: \\
$\mu$ & 0.000 & 0.001 (0.139) & 0.001 (0.139) & 1.001 & & 0.001 (0.127) & 0.001 (0.127) & 1.000 & & 0.001 (0.119) & 0.001 (0.119) & 1.000 \\
$\beta$ & 0.500 & 0.131 (0.107) & -0.034 (0.107) & 1.287 & & 0.103 (0.087) & -0.067 (0.090) & 1.112 & & 0.088 (0.076) & -0.089 (0.081) & 0.961 \\
$\nu$ & 0.750 & -0.013 (0.011) & -0.013 (0.011) & 1.001 & & -0.011 (0.010) & -0.011 (0.010) & 1.001 & & -0.009 (0.009) & -0.009 (0.009) & 1.000 \\
$\alpha$ & -0.400 & 0.012 (0.011) & -0.059 (0.013) & 0.303 & & 0.009 (0.009) & -0.059 (0.010) & 0.250 & & 0.008 (0.008) & -0.059 (0.009) & 0.220 \\
Panel C: \\
$\mu$ & 0.000 & 0.001 (0.077) & 0.001 (0.077) & 1.001 & & 0.000 (0.065) & 0.000 (0.065) & 1.001 & & 0.000 (0.058) & 0.000 (0.058) & 1.001 \\
$\beta$ & 0.750 & 0.057 (0.100) & 0.140 (0.085) & 0.827 & & 0.039 (0.079) & 0.096 (0.085) & 0.768 & & 0.030 (0.068) & 0.071 (0.083) & 0.730 \\
$\nu$ & 0.500 & -0.006 (0.014) & -0.006 (0.014) & 1.004 & & -0.004 (0.011) & -0.004 (0.011) & 1.003 & & -0.003 (0.010) & -0.003 (0.010) & 1.003 \\
$\alpha$ & -0.200 & 0.005 (0.011) & -0.060 (0.013) & 0.261 & & 0.003 (0.009) & -0.060 (0.010) & 0.208 & & 0.003 (0.007) & -0.060 (0.008) & 0.179 \\
Panel D: \\
$\mu$ & 0.000 & 0.000 (0.032) & 0.000 (0.032) & 1.001 & & 0.000 (0.025) & 0.000 (0.025) & 1.001 & & 0.000 (0.022) & 0.000 (0.022) & 1.001 \\
$\beta$ & 1.000 & 0.034 (0.103) & 0.116 (0.119) & 0.732 & & 0.021 (0.080) & 0.076 (0.106) & 0.686 & & 0.015 (0.068) & 0.056 (0.097) & 0.660 \\
$\nu$ & 0.300 & -0.002 (0.009) & -0.002 (0.009) & 1.004 & & -0.001 (0.007) & -0.001 (0.007) & 1.003 & & -0.001 (0.006) & -0.001 (0.006) & 1.004 \\
$\alpha$ & 0.000 & 0.002 (0.010) & -0.018 (0.012) & 0.581 & & 0.001 (0.008) & -0.018 (0.009) & 0.501 & & 0.001 (0.007) & -0.018 (0.008) & 0.447 \\
Panel E: \\
$\mu$ & 0.000 & 0.000 (0.016) & 0.000 (0.016) & 1.001 & & 0.000 (0.013) & 0.000 (0.013) & 1.001 & & 0.000 (0.011) & 0.000 (0.011) & 1.001 \\
$\beta$ & 1.250 & 0.024 (0.110) & 0.108 (0.157) & 0.637 & & 0.014 (0.085) & 0.075 (0.131) & 0.610 & & 0.010 (0.072) & 0.059 (0.115) & 0.595 \\
$\nu$ & 0.200 & -0.001 (0.006) & -0.001 (0.006) & 1.003 & & -0.001 (0.005) & -0.000 (0.005) & 1.003 & & -0.000 (0.004) & -0.000 (0.004) & 1.003 \\
$\alpha$ & 0.200 & 0.001 (0.009) & 0.041 (0.011) & 0.278 & & 0.001 (0.007) & 0.041 (0.008) & 0.222 & & 0.000 (0.006) & 0.041 (0.007) & 0.190 \\
\hline \hline
\end{tabular}
\end{footnotesize}
\smallskip
\begin{scriptsize}
\parbox{0.98\textwidth}{\emph{Note.} 
We simulate the process in the caption of the table 10,000 times on the interval $[0,T]$, where $T$ is interpreted as the number of days. 
There are $N = 12$ observations per unit interval corresponding to twelve every day.
The true value of the parameter vector appears to the left in Panel A -- E. 
We estimate $\theta$ with the maximum composite likelihod estimation (MCLE) procedure developed in the main text, and benchmark it against a method-of-moments estimator (MME).
The table shows the Monte Carlo average of each parameter estimate across simulations (standard deviation in parenthesis).
The last column reports the RMSE ratio defined as $\text{RMSE}_{r} = \text{RMSE}( \hat{ \theta}_{ \text{MCLE}}) / \text{RMSE}( \hat{ \theta}_{ \text{MME}})$. 
}
\end{scriptsize}
\end{center}
\end{sidewaystable}

Turning next to the Cauchy class, the conclusion is two-fold. First, we see from Table \ref{table:sim-cc-q=3-N=12-mu=0}, where the mean is known, that the distribution of the covariance-related parameter estimators is again centered around the true value with minimal sampling variation for MCLE. The main difference is that we observe a slight upward bias in $\hat{ \beta}_{ \text{MCLE}}$, although it is well within a Monte Carlo standard error. By contrast, $\hat{ \beta}_{ \text{MME}}$ is regularly off target with a negative bias in Panel A reversing into a positive one in Panel E. It even moves in the wrong direction (toward a too low value) as $T$ increases for $\alpha = -0.45$ or $\alpha = -0.40$. The origin of this problem appears to be the first-stage MME of $\alpha$. As it turns out, in our numerical experiments $\hat{ \alpha}_{ \text{MME}}$ is biased for a fixed $\Delta$, and this causes the estimated persistence to deviate from the true one. In turn, this distorts the second-stage MME of $\beta$ with the latter acting as a suspension that pulls the model-implied ACF in the opposite direction. This explains the positive correlation between $\hat{ \beta}_{ \text{MME}}$ and $\hat{ \alpha}_{ \text{MME}}$. Since $\hat{ \alpha}_{ \text{MME}}$ is consistent in the infill asymptotic limit $\Delta \rightarrow 0$, this effect ought to disappear with more frequent sampling of the log-variance process, but that is of course not possible to enforce here, where $\Delta$ is fixed. This highlights a potential drawback of the two-step MME procedure.\footnote{To investigate this issue further, in Appendix \ref{appendix:simulation} we also report on a Monte Carlo study, where the volatility process is sampled once per day, i.e. $N = 1$. The broad conclusion is that the bias problems in $\alpha$ with MME are intensified, indicating that it may be more imprecise with less frequently sampled data. This is especially troublesome, since the MME is often applied to daily RV \citep[e.g.][]{gatheral-jaisson-rosenbaum:18a,bennedsen-lunde-pakkanen:22a,wang-xiao-yu:23a}.}

Second, as shown in Table \ref{table:sim-cc-q=3-N=12-mu=1}, where the mean is estimated, $\hat{ \beta}_{ \text{MCLE}}$ exhibits a more pronounced upward bias than before, in particular for the extreme persistence configurations in Panels A -- B, although it still dissipates as the sample size is increased. Measured by relative RMSE, the bias in $\hat{ \beta}_{ \text{MCLE}}$ is hurting a lot. Here, $\hat{ \beta}_{ \text{MME}}$ only has a slight bias, which is, however, again heading the wrong way as we collect more information. In Panel A, this is, in fact, initially helping to reduce the RMSE of $\hat{ \beta}_{ \text{MME}}$, because the bias crosses zero around $T = 1{,}825$. As consistent with the above, the sign of the bias in $\hat{ \beta}_{ \text{MME}}$ systematically is positive in Panels C -- E and is, in fact, more pronounced than before. Here, $\hat{ \beta}_{ \text{MCLE}}$ is evidently a better estimator both in terms of bias and variance.

The MCLE of the mean---a feasible Generalized Least Squares (GLS)-type weighted average---is unbiased \textit{across} replica, but it exhibits a great deal of sampling variation and is typically way off target in each \textit{individual} simulation, as evident from the magnitude of the standard error.\footnote{The stationary Gaussian process in \eqref{equation:Y} is basically a linear regression on a constant, for which we impose a parametric assumption on the variance-covariance matrix of the disturbance term. As the mean is unrelated to the other parameters in the models we inspect, it follows from \citet{magnus:78a} that the MLE of $\mu$ is the feasible GLS statistic calculated by plug-in of the MLEs of the parameters indexing $\Sigma$, i.e. $\hat{ \mu}_{ \text{MLE}} = ( \iota^{ \top} \hat{ \Sigma}^{-1} \iota)^{-1} \iota^{ \top} \hat{ \Sigma}^{-1} Y_{n}^{ \Delta}$, where $\iota = (1, \dots, 1)^{ \top}$. An inspection of the first-order condition $\partial cl( \theta;y) / \partial \mu = 0$ reveals that, since each summand in the composite log-likelihood function is Gaussian, $\hat{ \mu}_{ \text{MCLE}}$ also has a  GLS-type structure, but it entails a more complicated weighting scheme. The derivation of the exact expression is available at request. \label{footnote:mean}} In essence, the mean estimator is very noisy, especially for persistent processes. As explained in Remark \ref{remark:mu}, although this does not affect the convergence rate, \citet{hosking:96a} shows that in the long memory setting it induces a negative $O(n^{- \beta})$ bias in the ACF. The intuition is that when a process is highly persistent, even if it is far away from and never actually crosses the mean, it still tends to oscillate around the average. This is interpreted by the estimation procedure as shorter memory. The lower the $\beta$, the stronger is this effect.\footnote{To what extent does the increased roughness in the Cauchy class of processes explain the deterioration in $\hat{ \beta}_{ \text{MCLE}}$ in the presence of long memory? To shed light on this, we performed a robustness check, i.e. an extra simulation with the most long memory, $\beta = 0.25$ from Panel A, combined with the least roughness, $\alpha = 0.20$ from Panel E, together with an intermediate standard deviation, $\nu = 0.50$ from Panel C. In these experiments, the average $\hat{ \beta}_{ \text{MCLE}}$ value across different $T$ was around 0.37, so the bias was roughly cut in half relative to what we observe in Panel A of Table \ref{table:sim-cc-q=3-N=12-mu=1}, but it remained elevated.} We should note that while this effect is in principle also present for the fOU process, it is not visible. This is probably because the autoregressive parameter is increased in parallel with the Hurst index, forcing the process back toward its mean despite the increased dependence. In the Cauchy class, we do not exert direct control over the degree of mean reversion.

In the close, we should point out that if one is concerned about the slow convergence rate of the mean estimator under long memory it is possible to transform the problem by first-differencing the data. While this drops information about the mean, it should reduce the bias in the estimation of the autocovariance-related parameters at the expense of an increased variance. However, as the simulation results suggest that the MCLE performs adequately even with the original data in levels, also when the mean is estimated, we do not pursue this option here.

Overall, the results suggest that our MCLE framework works as intended, and it is at least on par with or even outperforms the MME approach.

\section{Empirical application} \label{section:empirical}

We implement our composite likelihood procedure on high-frequency data from the cryptocurrency market.\footnote{These markets are ideal to investigate the (pathwise) properties of a stochastic process, since trading is more or less never interrupted. In contrast, other asset classes feature periodic market closure. This can distort estimation of the persistence, since intermittent data are missing, potentially leading to incorrect matching of observations for the calculation of the sample ACF.} We look at the five largest free-floating coins in terms of market value at the end of 2024 (see, e.g., \url{coingecko.com}): Bitcoin (BTC), Ethereum (ETH), Solana (SOL), Ripple (XRP), and Binance (BNB). We examine the evolution of their spot exchange rate against Tether (USDT). The latter is a so-called ``stablecoin'', whose value is pegged at parity against the US dollar, i.e. 1USDT = 1USD.\footnote{According to Tether Limited, the token is backed one-to-one against fiat currency or cash equivalents (e.g., short-dated US Treasury bonds), although the alleged reserves were never audited. This is the cause of much controversy and has lead to occasional sell-offs in USDT, where its price temporarily dropped far below parity against the USD. By and large, however, Tether maintains a fairly constant price at the official peg with limited volatility.} We downloaded millisecond time-stamped tick-by-tick data free of charge from the Binance archive.\footnote{\url{https://data.binance.vision/}.} Following \citet{hansen-kim-kimbrough:22a}, we restrict attention to the period after January 1, 2019, where trading volume on the Binance platform was adequate. We collect data until March 27, 2025, so the sample consists of $T = 2{,}278$ days. The exception is SOLUDST that is only available from August 11, 2020 and therefore has a slightly shorter span of $T = 1{,}690$ days.

\begin{sidewaystable}[ht!]
\setlength{ \tabcolsep}{0.10cm}
\begin{center}
\caption{Descriptive statistics of cryptocurrency high-frequency data.}
\label{table:binance-descriptive}
\begin{small}
\begin{tabular}{lrrrrrrrrrrrrrrrrrrrr}
\hline \hline
&&&&&& \multicolumn{7}{c}{Realized variance} && \multicolumn{7}{c}{Trading volume} \\
\cline{7-13} \cline{15-21}
Ticker & $N$ & $\Delta_{N}^{ \text{ms}}$ & $\widebar{RV}$ & \multicolumn{1}{c}{$\widebar{TV}$}
&& \multicolumn{3}{c}{fOU} && \multicolumn{3}{c}{Cauchy}
&& \multicolumn{3}{c}{fOU} && \multicolumn{3}{c}{Cauchy} \\
\cline{7-9} \cline{11-13} \cline{15-17} \cline{19-21}
& & & &
&& $100 \times \kappa$ & $\nu$ & $\alpha$
&& $\beta$ & $\nu$ & $\alpha$
&& $100 \times \kappa$ & $\nu$ & $\alpha$
&& $\beta$ & $\nu$ & $\alpha$ \\
\hline
BTC & 2,044,826 & 42.3 & 52.0 & 174.2 && $\underset{(0.050)}{0.652}$ & $\underset{(0.008)}{1.182}$ & $\underset{(0.001)}{-0.375}$ && $\underset{(0.009)}{0.236}$ & $\underset{(0.017)}{1.181}$ & $\underset{(0.002)}{-0.293}$  && $\underset{(0.000)}{0.010}$ & $\underset{(0.006)}{0.972}$ & $\underset{(0.001)}{-0.415}$ && $\underset{(0.003)}{0.103}$ & $\underset{(0.009)}{0.972}$ & $\underset{(0.001)}{-0.364}$ \\
ETH & 977,713 & 88.4 & 65.2 & 80.6 && $\underset{(0.060)}{0.449}$ & $\underset{(0.008)}{1.143}$ & $\underset{(0.002)}{-0.377}$ && $\underset{(0.008)}{0.206}$ & $\underset{(0.016)}{1.142}$ & $\underset{(0.001)}{-0.300}$  && $\underset{(0.000)}{0.003}$ & $\underset{(0.009)}{1.119}$ & $\underset{(0.001)}{-0.417}$ && $\underset{(0.002)}{0.083}$ & $\underset{(0.013)}{1.119}$ & $\underset{(0.001)}{-0.371}$ \\
SOL & 693,803 & 124.5 & 110.7 & 29.8 && $\underset{(0.060)}{1.079}$ & $\underset{(0.006)}{1.016}$ & $\underset{(0.001)}{-0.345}$ && $\underset{(0.008)}{0.241}$ & $\underset{(0.013)}{1.015}$ & $\underset{(0.001)}{-0.256}$  && $\underset{(0.001)}{0.020}$ & $\underset{(0.015)}{1.353}$ & $\underset{(0.003)}{-0.386}$ && $\underset{(0.003)}{0.086}$ & $\underset{(0.023)}{1.353}$ & $\underset{(0.001)}{-0.329}$ \\
XRP & 466,774 & 185.1 & 82.9 & 24.1 && $\underset{(0.050)}{1.167}$ & $\underset{(0.007)}{1.154}$ & $\underset{(0.001)}{-0.355}$ && $\underset{(0.010)}{0.262}$ & $\underset{(0.016)}{1.153}$ & $\underset{(0.002)}{-0.266}$  && $\underset{(0.004)}{0.060}$ & $\underset{(0.008)}{1.153}$ & $\underset{(0.002)}{-0.385}$ && $\underset{(0.004)}{0.116}$ & $\underset{(0.016)}{1.152}$ & $\underset{(0.001)}{-0.323}$ \\
BNB & 438,524 & 197.0 & 69.8 & 22.8 && $\underset{(0.060)}{0.630}$ & $\underset{(0.007)}{1.093}$ & $\underset{(0.001)}{-0.362}$ && $\underset{(0.007)}{0.214}$ & $\underset{(0.014)}{1.092}$ & $\underset{(0.001)}{-0.279}$  && $\underset{(0.000)}{0.007}$ & $\underset{(0.009)}{1.149}$ & $\underset{(0.001)}{-0.409}$ && $\underset{(0.003)}{0.088}$ & $\underset{(0.014)}{1.148}$ & $\underset{(0.001)}{-0.358}$ \\
\hline \hline
\end{tabular}
\end{small}
\smallskip
\begin{scriptsize}
\parbox{\textwidth}{\emph{Note}. In the left-hand side of this table, we show descriptive statistics of the cryptocurrency high-frequency data. ``Ticker'' is the short name of the exchange rate. $N$ is the average daily number of transactions, while $\Delta_{N}^{ \text{ms}}$ in the associated intertrade duration (in milliseconds). $\widebar{RV}$ is the average two-hour RV (converted to an annualized standard deviation), while $\widebar{TV}$ is the average two-hour volume (measured in million USDT). In the right-hand side of the table, we report parameter estimates of the fOU process and the Cauchy class (standard errors are placed underneath in parentheses) for each two-hour log-RV and log-trading volume series.}
\end{scriptsize}
\end{center}
\end{sidewaystable}

In the left-hand side of Table \ref{table:binance-descriptive}, we provide a summary of the amount of transaction data available. The selected pairs are vastly liquid. For example, BTC averages 2,044,826 transactions per day (median of 1,257,976), equivalent to an intertrade duration of 42.3 milliseconds (median of 68.7). Meanwhile, the least liquid pair BNB still has more than five trades per second.\footnote{As a comparison, from January 1, 2019 to October 14, 2021 the average daily number of transactions in the front contract of the E-mini S\&P 500 futures (ES) was 318,157 (median of 267,100 and maximum of 1,604,309). The dollar volume in the ES was about 2.5 times larger than in the BTC.}

We follow \citet{wang-xiao-yu:23a, shi-yu-zhang:24a} and apply our composite likelihood framework to both the volatility and trading volume of these cryptocurrencies. While the dynamic of volatility is of primary interest, in practice the spot variance is latent, and it needs to be inferred from the discretely observed high-frequency data. This may distort the subsequent analysis, as further discussed below. By contrast, the trading volume is immediately accessible, and it is known to be strongly correlated with volatility. The evidence from the latter further buttresses our findings for the estimated spot variance.

\clearpage

\subsection{Analysis of spot variance}

In this subsection, we follow the simulation section and let $Y$ be the log-variance of the log-spot exchange rate process, meaning that $Y_{t} = \log \sigma_{t}^{2}$. In practice, this variable is unobserved, and to recover it, we employ a standard RV calculated over short non-overlapping time intervals, see, e.g., \citet{andersen-bollerslev:98a} and \citet{barndorff-nielsen-shephard:02a}.

Suppose there are $m$ equidistant increments of a log-price in the interval $[t, t + \Delta]$. Then, the RV over this interval is defined as
\begin{equation}
RV_{t,t+ \Delta}^{m} = \sum_{j=1}^{m} ( \log P_{t + \frac{j}{m} \Delta} - \log P_{t + \frac{j-1}{m} \Delta})^{2},
\end{equation}
for $t = \Delta, 2 \Delta, \dots, n \Delta$.

In theory, RV is able to estimate the latent spot, or point-in-time, variance. This follows from a double-asymptotic framework in which an increasing number of log-price increments are observed over a shrinking time interval, i.e. $\Delta \rightarrow 0$ and $m \rightarrow \infty$ such that $\Delta m \rightarrow \infty$. In this case, under weak regularity conditions, $RV_{t,t+ \Delta}^{m} \overset{ \mathbb{P}}{ \longrightarrow} \sigma_{t}^{2}$ \citep[see][]{jacod-protter:12a}. By the continuous mapping theorem, $\log RV_{t,t+ \Delta}^{m} \overset{ \mathbb{P}}{ \longrightarrow} \log \sigma_{t}^{2} = Y_{t}$. In fact, RV is the MLE in a parametric model with no drift and constant volatility, so by the invariance principle, the logarithm of RV is the MLE of the log-spot variance. In this sense, RV is the best possible proxy available.

In practice, however, RV contains at least two important sources of error. First, $\Delta$ is normally fixed and RV is instead consistent for the integrated variance (IV), $RV_{t,t+ \Delta}^{m} \overset{ \mathbb{P}}{ \longrightarrow} IV_{t,t+ \Delta} = \int_{t}^{t + \Delta} \sigma_{s}^{2} \mathrm{d}s$ as $m \rightarrow \infty$. The integration introduces a smoothing effect, which tends to bias estimators of the Hurst parameter upward. Second, with real data it is ill-advised to sample too often, because of microstructure noise \citep{hansen-lunde:06b}, which puts an upper bound on $m$. While RV converges in probability to IV in the infill asymptotic limit, $\Delta$ fixed and $m \rightarrow \infty$, with $m$ finite it contains an estimation error. Thus, the blockwise RV fluctuates around the true value of IV, which tends to bias estimators of the Hurst parameter downward.\footnote{This has been dubbed elusive roughness in \citet{fukasawa-takabatake-westphal:22a}.} Although these effects may partially cancel out, they can severely impair the estimation.

Even though the bid-ask spread in the included coins is routinely less than a basis point, the presence of other market frictions, such as price discreteness and bid-ask bounce, can evidently instill a nontrivial bias in RV if we insist on sampling the spot exchange rate process at the maximum available tick-by-tick level in order to mitigate the above concerns. To circumvent this problem and strike a balance, we construct a 15-second equidistant log-price series, corresponding to 5,760 daily log-price increments. This choice was guided by a volatility signature plot, which is a diagnostic tool designed to gauge the impact of microstructure noise on RV  \citep[see, e.g.,][]{hansen-lunde:06b}. It computes the sample average daily RV as a function of the sampling frequency:
\begin{equation}
\overline{RV}(m) = T^{-1} \sum_{t=1}^{T} RV_{t-1,t}^{m},
\end{equation}
where $RV_{t-1,t}^{m}$ is the RV over the whole day $t$ sampled with a time gap $1/m$.

In general, the behavior of a volatility signature plot depends on the properties of the noise. If it is absent (or negligible), $\overline{RV}(m)$ is an estimator of the average IV, so the graph of the function should be a flat line up to sampling variation. A noise that is independent of the efficient price causes an upward bias in the estimate and, if the noise is itself i.i.d., $\overline{RV}(m) \overset{ \mathbb{P}}{ \longrightarrow} \infty$ as $m \rightarrow \infty$.

\begin{figure}[ht!]
\begin{center}
\caption{Microstructure noise and intraday variation in volatility.}
\label{figure:noise}
\begin{tabular}{cc}
\small{Panel A: Volatility signature.} & \small{Panel B: Diurnal variation.} \\
\includegraphics[height=8.00cm,width=0.48\textwidth]{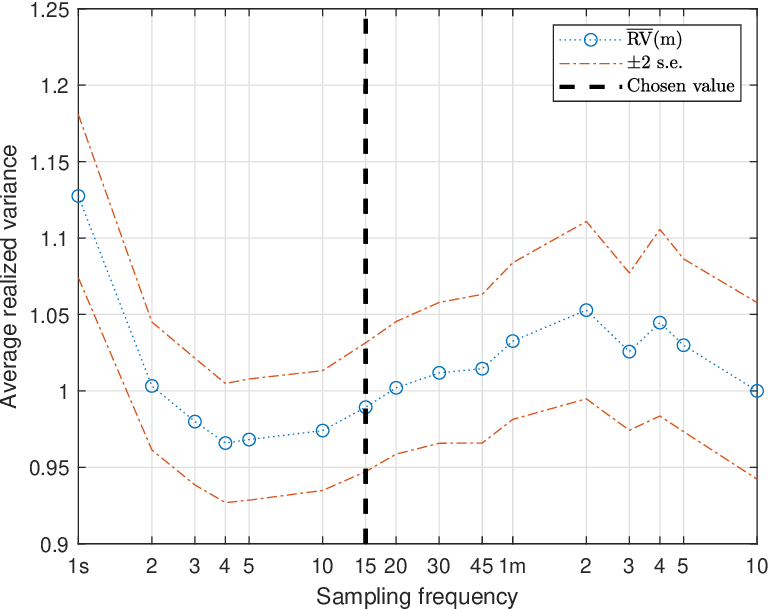} &
\includegraphics[height=8.00cm,width=0.48\textwidth]{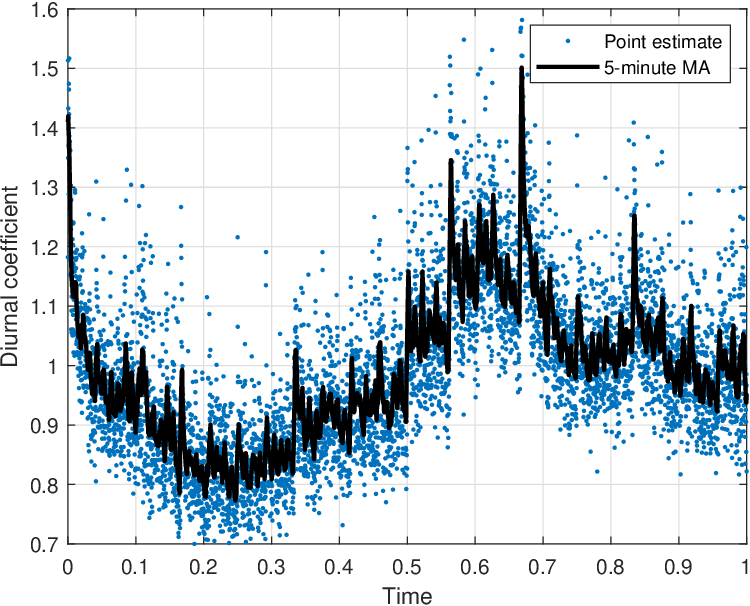} \\
\end{tabular}
\begin{scriptsize}
\parbox{\textwidth}{\emph{Note.} In Panel A, we present a volatility signature plot of BTC, i.e. the sample average of daily RV as a function of the sampling frequency ranging from 10-minute (slowest) to one-second (fastest). Each observation is scaled by the 10-minute value. We show a two standard error band around the point estimate. In Panel B, we plot a nonparametric estimator of the intraday periodicity in the associated 15-second squared log-return process. The line is a 5-minute moving average. The sample runs from January 1, 2019 to March 27, 2025. Time 0 is midnight in Coordinated Universal Time (UTC).}
\end{scriptsize}
\end{center}
\end{figure}

In Panel A of Figure \ref{figure:noise}, we present the volatility signature plot of BTC, which is the coin used throughout as an illustration. The charts for the other cryptocurrencies are contained in Appendix \ref{appendix:noise}. We vary $m$ from 144 to 86,400, which correspond to 10-minute sampling as the slowest to 1-second sampling as the fastest. Each $\overline{RV}(m)$ has been scaled by the 10-minute estimate such that, as $m$ increases, the evolution of the graph can be interpreted as the relative bias induced by the accumulation of microstructure noise. To get a crude impression of the sampling variation, we surround each point estimate by a two standard error confidence band. Overall, there is little evidence to suggest that microstructure noise wreaks havoc on the RV of BTC with 15-second---or even much faster---sampling of the price process. This impression is corroborated by the remaining figures in the appendix. The exception is SOL, where we do notice an uptick in $\overline{RV}(m)$ at the 15-second horizon, and the estimate is slightly outside the 10-minute confidence interval, so we stop there.\footnote{As an extra robustness check, we computed the Hausman test for the presence of microstructure noise in high-frequency data developed by \citet{ait-sahalia-xiu:19a}. A caveat with their theoretical framework is that the volatility process is assumed to be an It\^{o} semimartingale, so this piece of analysis should be taken with a grain of salt. Nonetheless, when evaluating their stochastic volatility- and price jump-robust test statistic at the 1\% significance level, the null hypothesis of no microstructure noise is rejected only 2.19\% of the days in the BTC sample at the 15-second frequency. This is rather close to the nominal level, especially since their test statistic was found to be slightly oversized in finite samples.}

To remove traces of noise even at the 15-second horizon, we follow \citet{jacod-li-mykland-podolskij-vetter:09a} by pre-averaging transaction prices that are closest to each time point on the sampling grid.

In the Panel B of Figure \ref{figure:noise}, we plot an estimate of the intraday periodicity in the second moment of the BTC log-return series.\footnote{In general, the intraday volatility of the various exchange rates exhibit near-identical behavior, indicative of a strong commonality.} We employ a nonparametric estimator that averages the time-of-the-day squared 15-second log-returns (a proxy for the average point-in-time intraday variance) and normalizes the sum of these numbers to one to form a seasonality estimate. As readily seen, there is a discernible nonstationary component in the intraday evolution of the volatility of BTC, which features a complex behavior. This clashes with our assumed stationarity in \eqref{equation:Y}, which can distort the estimation of the stationary Gaussian processes. To avoid this, we pre-filter the 15-second log-return series with the estimated diurnal coefficient to ensure that the rescaled time series is closer to stationary. In addition to this, we observe a marked day-of-the-week effect in the variance process, which we too eliminate.

We next construct a non-overlapping two-hour RV for the corrected 15-second log-returns.\footnote{To ensure that price jumps do not interfere with our estimation results, we also zero out too large absolute log-returns with the truncation approach of \citet{mancini:09a}.} Although the block size is a bit arbitrary, it is an attempt to balance the inherent errors in the RV measure, as discussed above. The concrete choice follows extant literature on spot volatility estimation \citep*[see, e.g.,][]{christensen-thyrsgaard-veliyev:19a}. The typical value of RV across currencies at this sampling frequency is shown in Table \ref{table:binance-descriptive}. The estimates imply that the diffusive variance is rather high on average and exceeds the overall level of the total quadratic return variation often seen in even very risky individual stocks.

\begin{figure}[ht!]
\begin{center}
\caption{Distribution and persistence of log-RV.}
\label{figure:distribution}
\begin{tabular}{cc}
\small{Panel A: Kernel density estimate.} & \small{Panel B: Sample ACF.} \\
\includegraphics[height=8.00cm,width=0.48\textwidth]{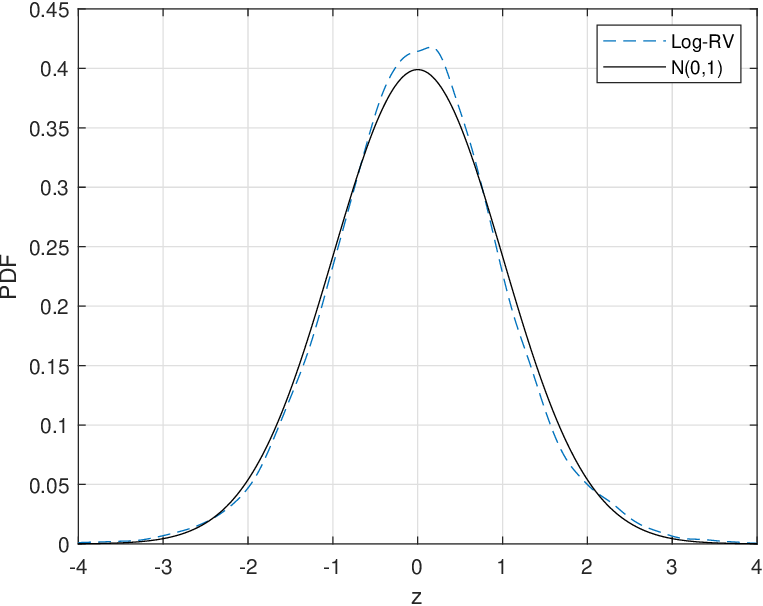} &
\includegraphics[height=8.00cm,width=0.48\textwidth]{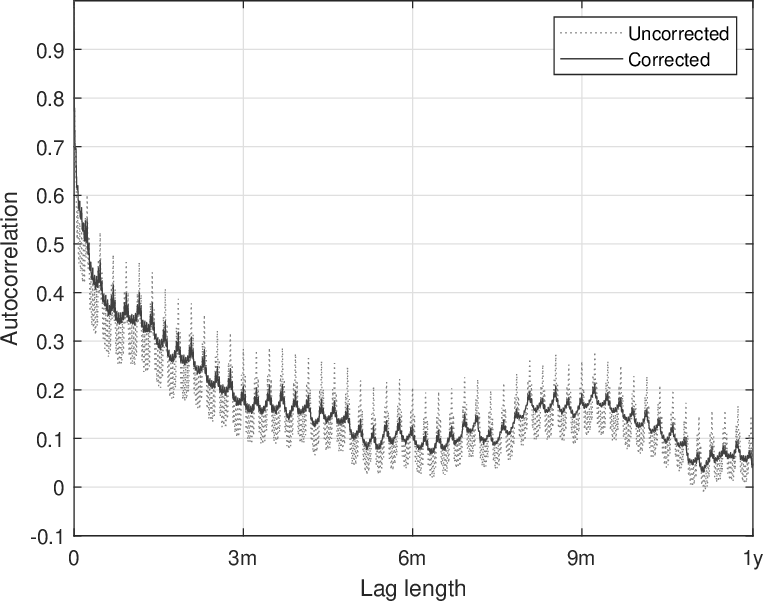} \\
\end{tabular}
\begin{scriptsize}
\parbox{\textwidth}{\emph{Note.} In Panel A, we present a kernel density estimate of the two-hour log-RV. A standard normal distribution is superimposed as a benchmark. In Panel B, we plot the sample ACF of this measure. ``Corrected'' is the version that is purged from intraday periodicity and a day-of-the-week effect in volatility. As a comparison, we also show the ACF of the raw ``Uncorrected'' RV.}
\end{scriptsize}
\end{center}
\end{figure}

How close to Gaussian is the two-hour log-RV? To get a crude impression, we center and scale the time series by its sample average and sample standard deviation. In Panel A of Figure \ref{figure:distribution}, we plot the histogram of the standardized statistic. The distribution of spot log-RV is very close to the bell-shaped curve of the Gaussian density function, which is consistent with previous literature \citep[e.g.][]{andersen-bollerslev-diebold-ebens:01a, andersen-bollerslev-diebold-labys:03a}.

We next examine the correlation structure of the process. In Panel B of Figure \ref{figure:distribution}, we present a correlogram of the spot log-RV. The lag length goes up to a full year. Here, ``Uncorrected'' represents the raw estimator, while ``Corrected'' is based on the version that accounts for intraday periodicity and the day-of-the-week effect. Both point toward a strong persistence in the extracted log-spot variance process, which---as noted above---tends to understate the population autocorrelation in the presence of long memory. This effect is arguably excerbated here, because we are working with a noisy proxy of the volatility \citep[e.g.,][]{hansen-lunde:14a}. It is evidently hard to say what type of model is ``best'' based on the graph alone, but the ability of being able to disentangle the short- and long-end of the memory lane does appear useful in order to fit the initial steep decline of the curve combined with the slow leveling off.

The two stationary Gaussian processes from Section \ref{section:example} are estimated following the design from Section \ref{section:simulation}. We input the corrected log-RV into the numerical optimization procedure. To get a gauge at the magnitude of the standard errors inherent in the parameter estimates, we use the limiting covariance expressions from Theorem \ref{theorem:clt}. Here, we plug our estimates into the limiting covariance expression, where we numerically compute the derivatives of the autocovariance terms. The results are presented in the right-hand side of Table \ref{table:binance-descriptive}. We only include the parameters relating to the ACF, since the mean is always estimated in line with the sample average of log-RV (with a weighted average of log-RV, as explained in Footnote \ref{footnote:mean}).

We observe that the roughness index of spot log-RV is located around $-0.35$ for the fOU process, which we recall is related to the Hurst exponent via the relation $\alpha = H - 1/2$. Hence, as consistent with the recent literature on rough volatility, we uncover that there is overwhelming evidence in support of the variance process being more erratic than a sBm at short time scales \citep[see, e.g.][]{bolko-christensen-pakkanen-veliyev:23a, fukasawa-takabatake-westphal:22a, gatheral-jaisson-rosenbaum:18a, wang-xiao-yu:23a}. The $\kappa$ parameter is very close to zero, so the process exhibits near-unit root behavior.

Turning next to the Cauchy class, the results are striking. On the one hand, $\alpha$ is now estimated around $-0.27$. While the values are higher than for the fOU process, they are still located far in roughness space. On the other hand, $\beta$ is estimated far in the long memory range around 0.20 -- 0.25. Noting that the link between $\beta$ and the fractional differencing parameter $d$ in a discrete-time ARFIMA($p$,$d$,$q$) process is $d = (1 - \beta)/2$ \citep[see][equation 13.2.1]{brockwell-davis:91a}, this corresponds to a $d$ close to 0.40 on average. This agrees with early evidence from the literature on modeling and forecasting log-RV as a fractionally integrated process, see, e.g., \citet{andersen-bollerslev-diebold-ebens:01a, andersen-bollerslev-diebold-labys:03a}.

\begin{figure}[ht!]
\begin{center}
\caption{Heatmap of composite likelihood function.}
\label{figure:heatmap}
\begin{tabular}{cc}
\small{Panel A: fOU.} & \small{Panel B: Cauchy.} \\
\includegraphics[height=8.00cm,width=0.48\textwidth]{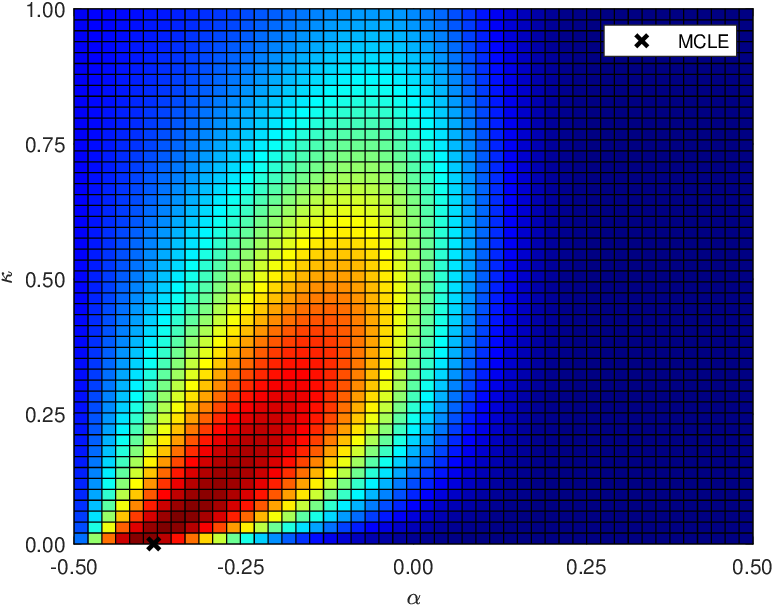} &
\includegraphics[height=8.00cm,width=0.48\textwidth]{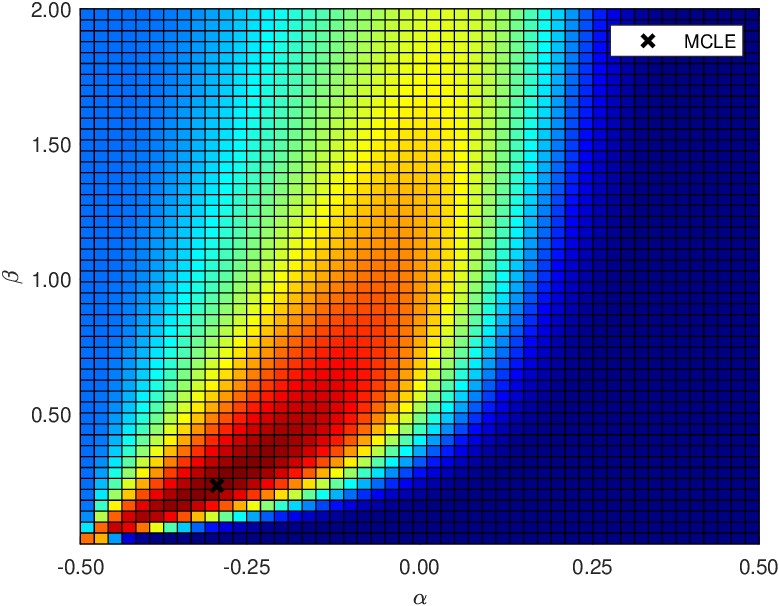} \\
\end{tabular}
\begin{scriptsize}
\parbox{\textwidth}{\emph{Note.} We create a heatmap of the composite likelihoood function for the fOU process (in Panel A) and Cauchy class (in Panel B). A cooler/warmer temperature (blue/red) indicates that the likelihood is lower/higher. We fix the mean and standard deviation ($\mu$ and $\nu$) at the sample average and sample standard deviation of the log-RV and vary the remaining free parameters over a broad range of the parameter space. The MCLE is reported with a black cross.}
\end{scriptsize}
\end{center}
\end{figure}

In Figure \ref{figure:heatmap}, we construct a heatmap of the composite likelihood function for the log-RV of the BTC spot exchange rate. Panel A is for the fOU process, while Panel B is for the Cauchy class. To reduce the plot to a two-dimensional plane, we fix $\mu$ and $\nu$ at the sample average and sample standard deviation of the log-RV, while varying $\kappa$ and $\beta$ over a broad range of the parameter space and $\alpha$ through the permissible area. The color code indicates the height of the likelihood function with higher (lower) contour values being indicated as a warmer (cooler) temperature.\footnote{We should emphasize that this piece of analysis is not readily available for the MME approach.}

As consistent with the parameter estimates of BTC from Table \ref{table:binance-descriptive}, Panel A of Figure \ref{figure:heatmap} indicates that the composite likelihood of the fOU process prefers a rough dynamic with a Hurst exponent less than a half. The optimal value of the mean reversion hovers close to the boundary of the parameter space at zero, where the fOU process becomes a scaled version of the non-stationary fBm. \citet{shi-yu:23a} look at the discrete-time ARFIMA($1$,$d$,$0$) model: $(1- \phi L) (Y_{t}- \mu) = \sigma_{ \epsilon} (1-L)^{-d} \epsilon_{t}$, where $L$ is the lag operator and $\epsilon_{t} \sim D(0,1)$ is a white noise. They note that $\phi = 0$ and $d = 0.5$ is observationally equivalent to $\phi = 1$ and $d = -0.5$, leading to identification failure, see also \citet{liu-shi-yu:20a} and \cite{li-phillips-shi-yu:25a}. This can be interpreted as either fast mean reversion and long memory or no mean reversion and anti-persistence. This agrees with a first impression of the likelihood function: the fOU model is prepared to trade roughness for additional mean reversion with only a minimal drop in likelihood and the surface appears to exhibit a plateau over an extended region in that direction. This is a symptom that the model is scrambling to fit both ends of the ACF.

To get a gauge at the exact curvature, in Figure \ref{figure:profile-likelihood} we include the profile composite likelihood function of the fOU process in Panel A, $CL_{ \alpha}( \kappa) = \max_{ \alpha} CL( \alpha, \kappa)$ and the Cauchy class in Panel B, $CL_{ \alpha}( \beta) = \max_{ \alpha} CL( \alpha, \beta)$. We again set $\mu$ and $\nu$ equal to the sample average and sample standard deviation of the log-RV and suppress their influence on the composite likelihood. We then plot this as a function of $\kappa$ or $\beta$ after maximizing with respect to $\alpha$, where the optimal value of the latter is shown on the right-hand $y$-axis. To make the graphs comparable, we normalize the profile likelihood to reach a maximum of one. Overall, the evolution of the fOU profile likelihood in Panel A shows that it is, in fact, monotonically increasing toward the boundary of the parameter space, whereas the Cauchy class has a maximum in the interior. It is interesting to observe that as $\kappa$ increases for the fOU, the optimal value of $\alpha$ does not venture into the long memory region but levels out around zero, as implied by a sBm.

\begin{figure}[ht!]
\begin{center}
\caption{Profile composite likelihood function.}
\label{figure:profile-likelihood}
\begin{tabular}{cc}
\small{Panel A: fOU.} & \small{Panel B: Cauchy.} \\
\includegraphics[height=8.00cm,width=0.48\textwidth]{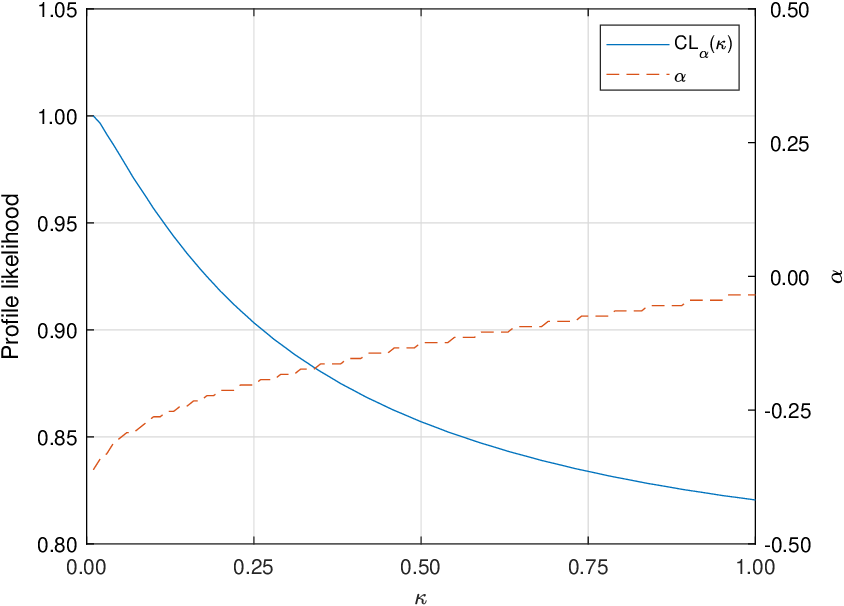} &
\includegraphics[height=8.00cm,width=0.48\textwidth]{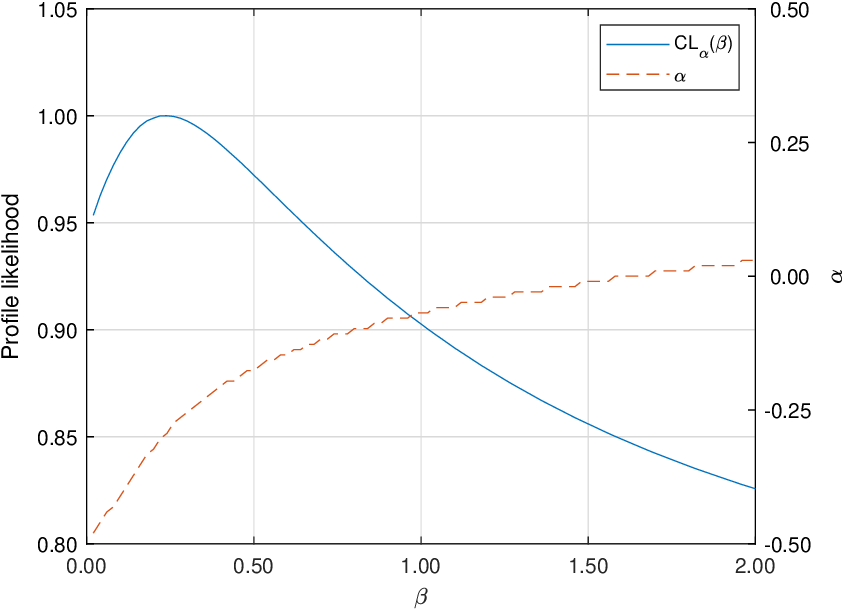} \\
\end{tabular}
\begin{scriptsize}
\parbox{\textwidth}{\emph{Note.} We show the profile composite likelihood function of the fOU process in Panel A, $CL_{ \alpha}( \kappa) = \max_{ \alpha} CL( \alpha, \kappa)$ and the Cauchy class in Panel B, $CL_{ \alpha}( \beta) = \max_{ \alpha} CL( \alpha, \beta)$. We fix $\mu$ and $\nu$ at the sample average and sample standard deviation of the log-RV. We then plot the composite likelihood as a function of $\kappa$ or $\beta$ after maximizing with respect to $\alpha$, where the optimal value of the latter is shown on the right-hand $y$-axis.}
\end{scriptsize}
\end{center}
\end{figure}

At the end of the day, the fOU process has to make a decision: \textit{either} roughness \textit{or} long memory. It selects roughness to capture the initial decline in the ACF and to fit the subsequent slow decay it forces the mean reversion parameter near zero to approximate long-range dependence. However, the Cauchy class does not suffer from this shortcoming and is able to separate these effects to demonstrate that both are required to describe the data. The $( \alpha, \beta)$ estimate is within the interior of the parameter space and appears to be identified with a high degree of accuracy, gauging from the heatmap, profile likelihood, and the magnitude of the standard errors.

In sum, our empirical results from the Cauchy class of stationary Gaussian processes translate to a non-integrable ACF with a sharp initial decline, i.e. roughness, followed by a slow decay toward zero, i.e. long memory. Our results thus provide compelling evidence that spot variance exhibits \textit{both} roughness \textit{and} long memory.

\subsection{Analysis of trading volume}

There is abundant empirical evidence that the trading volume of an asset is closely connected to its return volatility \citep[e.g.,][]{epps-epps:76a,hasbrouck:91a}. At a theoretical foundation, this observation is embedded in the so-called mixture of distribution hypothesis \citep[e.g.,][]{clark:73a,tauchen-pitts:83a}, where trading volume and volatility are subordinated by a ``news'' process. A positive correlation between them is also a common prediction in market microstructure theory \citep[e.g.,][]{glosten-milgrom:85a,easley-ohara:92a}. In contrast to spot volatility, though, trading volume is observed.

In this section, we apply our MCLE procedure to the trading volume (measured in USD) of the five cryptocurrencies with the aim of establishing whether roughness and long memory is also present in this variable. We follow the implementation from above and first remove intraday periodicity in the 15-second spot trading volume, which is nearly identical to that observed in the 15-second squared log-return process, adding further support to their strong relationship. We cumulate trading volume over each non-overlapping two-hour block matching the RV estimation and log-transform the end product. However, we observe a slight uptick in the log-trading volume over time, which we remove by fitting a regression with a deterministic linear trend to the series.

The right-hand side of Table \ref{table:binance-descriptive} reports the estimated parameter vector of the fOU process and Cauchy class based on the de-trended and de-seasonalized log-trading volume. The results are more or less in tandem with those obtained for log-RV. In particular, there is strong evidence of roughness in log-trading volume with the roughness index being estimated around $-0.35$ to $-0.40$ for both models. Moreover, the mean reversion parameter in the fOU model is close to the zero lower bound, again indicative of near-unit root behavior. Furthermore, the memory parameter of the Cauchy class is assessed to be small, pointing to long memory, and is, in fact, even lower than for the log-RV, but it remains squarely within the interior of the parameter space.

The analysis of two-hour log-RV sought to balance the ``smoothing'' and ``roughing'' effect in the volatility proxy, while dodging microstructure noise with 15-second sampling of the log-price process. We do not encounter this trade-off with the trading volume, so we can sample it at a faster pace, which can help to gauge whether our conclusion about roughness and long memory is an artifact of temporal aggregation. To shed light on this, we sample trading volume starting at the two-hour horizon and moving down to the 15-second window, and we repeat the MCLE procedure at every step. In the last instance, the optimizer is confronted with $5{,}760 \times 2{,}278 = 13{,}121{,}280$ observations, but it nonetheless manages to converge in a few minutes.

In Figure \ref{figure:temporal-aggregation}, we present the estimated roughness index and memory parameter ($\kappa$ for the fOU process and $\beta$ for the Cauchy class) as a function of the sampling frequency. Several noteworthy observations emerge. First, the parameter estimates of the five included cryptocurrencies exhibit near-identical behavior, consistent with previous ``universality'' evidence from other asset classes \citep[e.g.,][]{bennedsen-lunde-pakkanen:22a, rosenbaum-zhang:24a}. Second, the estimated roughness index is slightly decreasing, as we sample more often, and it ends at around $\widehat \alpha \approx -0.47$ for the fOU and $\widehat \alpha \approx -0.40$ for the Cauchy class. This indicates an extreme degree of roughness over short time intervals, echoing the empirical evidence from volatility in \cite{bolko-christensen-pakkanen-veliyev:23a}. Third, the estimated $\kappa$ and $\beta$ are rather robust to the choice of time aggregation.

In summary, our parameter estimates based on the log-trading volume corroborate the previous application to the log-RV measure.

\begin{figure}[ht!]
\begin{center}
\caption{Parameter estimates for log-trading volume.}
\label{figure:temporal-aggregation}
\begin{tabular}{cc}
\small{Panel A: fOU.} & \small{Panel B: fOU.} \\
\includegraphics[height=8.00cm,width=0.48\textwidth]{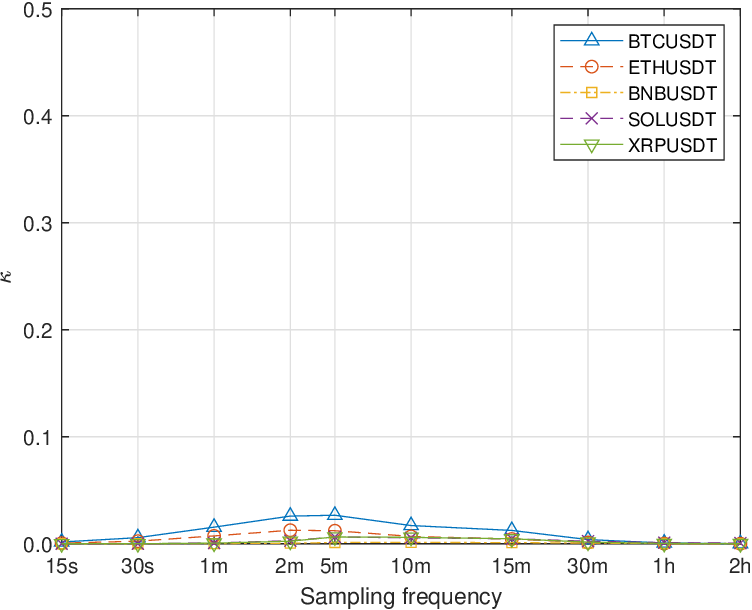} &
\includegraphics[height=8.00cm,width=0.48\textwidth]{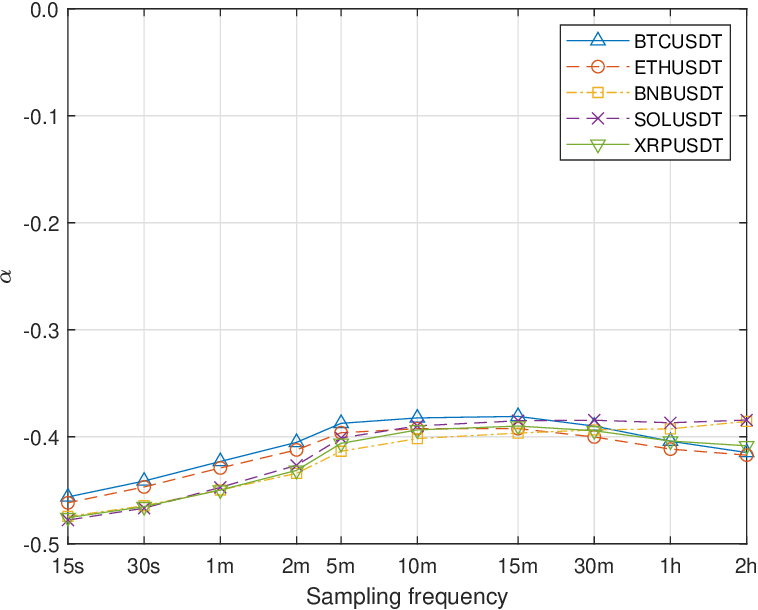} \\
\small{Panel C: Cauchy.} & \small{Panel D: Cauchy.} \\
\includegraphics[height=8.00cm,width=0.48\textwidth]{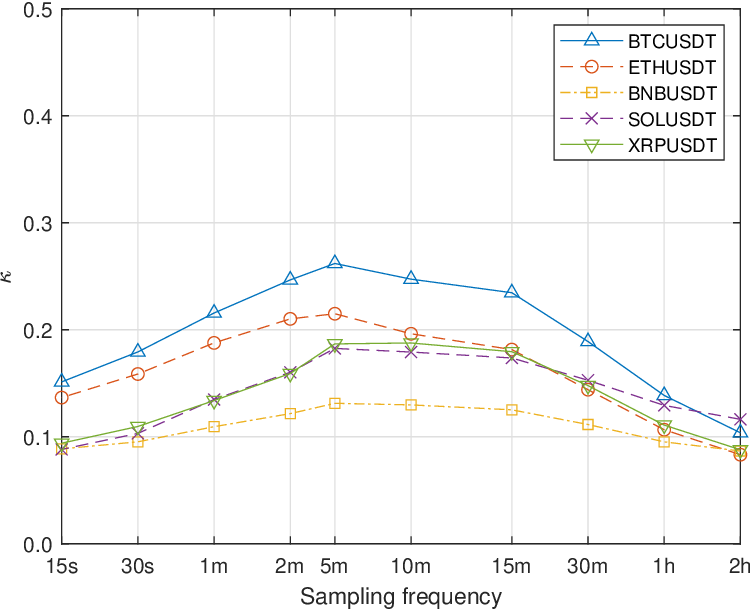} &
\includegraphics[height=8.00cm,width=0.48\textwidth]{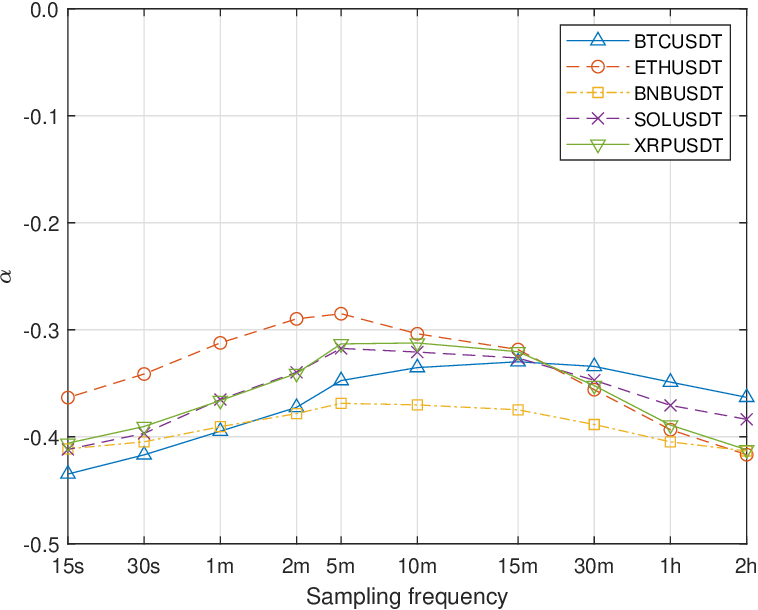} \\
\end{tabular}
\begin{scriptsize}
\parbox{\textwidth}{\emph{Note.} We show the parameter estimates of $\kappa$ (Panel A) and $\alpha$ (Panel B) for the fOU process and $\beta$ (Panel C) and $\alpha$ (Panel D) for the Cauchy class based on log-trading volume. They are plotted as a function of the sampling frequency going from a two-hour down to a 15-second horizon. Note that the $x$-axis is on a logarithmic scale.}
\end{scriptsize}
\end{center}
\end{figure}

\section{Conclusion} \label{section:conclusion}

In this paper, we develop a framework for doing composite likelihood estimation of a parametric class of stationary Gaussian processes. We derive the consistency and asymptotic distribution of the maximum composite likelihood estimator (MCLE) under various degrees of persistence.

As an adaptation, we back out the parameters of a fractional Ornstein-Uhlenbeck (fOU) process. In comparison to the method-of-moments estimator (MME) from \citet{wang-xiao-yu:23a}, the composite likelihood estimator has the advantage that the full parameter vector is estimated in a single loop rather than via a two-step approach. Moreover, in a simulation study we demonstrate superior finite sample properties of the MCLE over the MME. This is possibly because the MME requires in-fill asymptotic theory, which may not kick in fast enough with realistic choices of sampling frequency.

In our empirical implementation of the fOU process, we discover that the log-variance of the large-cap segment of the cryptocurrency spot exchange rate market exhibits roughness. On the one hand, this agrees with \citet{gatheral-jaisson-rosenbaum:18a} and the universality of the volatility formation process, see \citet{rosenbaum-zhang:24a}. On the other hand, it stands in contrast to the long memory version of the fOU process, proposed as a model for stochastic volatility in \citet{comte-renault:98a}, and the abundant literature on modeling and forecasting RV as a long-range dependent variable, e.g. \citet{andersen-bollerslev-diebold-labys:03a} and \citet{corsi:09a}. This ``volatility puzzle'' is scrutinized further in \citet{shi-yu:23a}, who note that the likelihood function of a discrete-time version of the fOU is bimodal, leading to near-observational equivalence between these two opposing regimes. \citet{li-phillips-shi-yu:25a} explore the issue in terms of weak identification and the empirical evidence of their identification robust inference is rooting for long memory.

In our opinion, the source of the dispute is that the fOU specification is not flexible enough. It has to control both the short- and long-term persistence in a single parameter---the Hurst exponent. As it cannot accommodate both, it therefore needs to make a stance: \textit{either} roughness \textit{or} long memory.

Set against this backdrop, a logical step is to divert more attention toward models that decouple the short- and long-run persistence. There are numerous ways one can do that. A standard approach in the SV literature is to superposition independent driving factors \citep[see, e.g.,][]{alizadeh-brandt-diebold:02a, barndorff-nielsen-shephard:02a}. However, such multi-factor models usually exhibit exponentially decaying ACF and, hence, at best approximate roughness and long memory. An intriguing avenue for further research is to superposition two independent fOU processes and check whether the empirical estimates of the Hurst indexes are located on either side of the memory spectrum. This idea is being pursued in \citet{christensen-espinosa-pakkanen-veliyev:25a}, where the GMM approach of \cite{bolko-christensen-pakkanen-veliyev:23a} is extended to a two-factor setting.

\citet{gneiting-schlather:04a} proposed a so-called Cauchy class of Gaussian processes. As the fOU, it has two parameters to fit the ACF, but---opposite the fOU---it reserves one to fit the short-run persistence and another one to fit the long-run persistence. \citet{bennedsen-lunde-pakkanen:22a} implement this model with MME on a vast cross-sectional set of empirical high-frequency equity data and find that roughness and long memory are embedded in the log-variance process. In the present paper, we also adopt this strategy to explore the issue based on transaction price data from the cryptocurrency market. Interestingly, our empirical MCLE of the Cauchy class also point toward \textit{both} roughness \textit{and} long memory being present in the log-spot variance. Thus, our results can help to reconcile the conflicting evidence in the previous literature.

At the moment, our procedure suffers from the fact that spot volatility is latent. The sampling variation inherent in estimators, such as RV, impairs the MCLE and causes a potentially severe degree of spurious roughness. \citet{bolko-christensen-pakkanen-veliyev:23a} account for this in their GMM estimation of the fOU model. In future research, we should attempt to do this as well. In lieu of this, we here include an application to the log-trading volume, which confirms our findings for the log-variance.

\pagebreak

\appendix

\section{Mathematical proofs} \label{appendix:proof}

\begin{proof}[Proof of Lemma \ref{lemma:identification}]

In the proof, we look at a single tuple, $k$, of length $q$ only. The general case follows by extension of this result to the sum defining the log-composite likelihood. As explained in the proof of Theorem \ref{theorem:lln}, a law of large number implies that, uniformly in $\theta$,
\begin{equation*}
\frac{1}{n} cl( \theta;Y) \overset{ \mathbb{P}}{ \longrightarrow} \mathbb{E}_{ \theta_{0}} \left( \log f_{k}(Y^{k}; \theta) \right),
\end{equation*}
as $n \rightarrow \infty$, where $Y^{k}$ denotes the selected $k$-tuple of the random vector $Y$.

Now, recall the information inequality:
\begin{equation*}
\mathbb{E}_{ \theta_{0}} \left[ \log \frac{f_{k}(Y^{k}; \theta)}{f_{k}(Y^{k}; \theta_{0})} \right] \leq \log \mathbb{E}_{ \theta_{0}} \left[ \frac{f_{k}(Y^{k}; \theta)}{f_{k}(Y^{k}; \theta_{0})} \right] = 0,
\end{equation*}
by Jensen's inequality. This implies that
\begin{equation*}
\mathbb{E}_{ \theta_{0}} \left[ \log f_{k}(Y^{k}; \theta) \right] \leq \mathbb{E}_{ \theta_{0}} \left[ \log f_{k}(Y^{k}; \theta_{0}) \right].
\end{equation*}
As the natural logarithm is strictly concave, the above inequality is also strict if $\mathbb{P} \big( \log f_{k}( Y^{k}; \theta) = \log f_{k}( Y^{k} ; \theta_{0}) \big) \neq 1$. Recall now that
\begin{equation*}
\log f_k(y;\theta) = C - \frac{1}{2}\log \vert \Sigma(\theta) \vert - \frac{1}{2}y^{\top} \Sigma^{-1}(\theta) y.
\end{equation*}
This expression depends on $\theta$ only through $\gamma_{h}( \theta)$, and viewed as a function of $y$ it is a second-order polynomial with an intercept. As our assumption implies that for $\theta \neq \theta_{0}$ there exists at least one $h$ such that $\gamma_{h}( \theta_{0}) \neq \gamma_{h}( \theta)$, it follows that $\Sigma( \theta) \neq \Sigma( \theta_{0})$, and, therefore, $\Sigma( \theta)^{-1} \neq \Sigma( \theta_{0})^{-1}$ by uniqueness of the inverse. The proof is then completed by noting that a quadratic function can agree for at most finitely many (in fact, two) values of $y$.
\end{proof}

\begin{proof}[Proof of Theorem \ref{theorem:lln}]

We observe that a stationary Gaussian process for which $\lim_{h \rightarrow \infty} \rho_{h} = 0$ is strongly mixing and, hence, ergodic \citep[see, e.g.,][Theorem 9(ii)]{maruyama:49a}. Combined with the stationarity assumption, the ergodic theorem implies that the sample average log-composite likelihood converges in probability to its population counterpart:
\begin{equation*}
C_{n}( \theta) \equiv \frac{1}{n} cl( \theta; Y) = \frac{1}{n} \sum_{k_{j} \in Q} \sum_{i=1}^{n-k_{q^{j}}^{j}} \log f_{k^{j}} \big(Y_{i}^{k^{j}}; \theta \big) \overset{ \mathbb{P}}{ \longrightarrow} \mathbb{E} \Bigg( \sum_{k^{j} \in Q} \log f_{k^{j}} \big(Y_{1}^{k^{j}}; \theta \big) \Bigg) \equiv C( \theta),
\end{equation*}
as $n \rightarrow \infty$.

If we let $\theta_{0}$ denote the true data-generating value of $\theta$, the information inequality implies that $C( \theta)$ is uniquely maximized at $\theta_{0}$ \citep[e.g.,][Lemma 2.2]{newey-mcfadden:94a}. This requires $\theta_{0}$ is identified from the density function, which holds by Assumption \ref{assumption:identification}. Theorem 4.1 and 4.3 in \citet{wooldridge:94a} then establishes the uniform weak law of large numbers for the sequence of $q$-wise likelihood functions, from which it follows that $\hat{\theta}_{ \text{MCLE}} \overset{ \mathbb{P}}{ \longrightarrow} \theta_{0}$.
\end{proof}

\begin{proof}[Proof of Theorem \ref{theorem:clt}]

We recall that
\begin{equation*}
cl( \theta;y) = \sum_{k^{j} \in Q} \sum_{i=1}^{n-k_{q^{j}}^{j}} \log f_{k^{j}}(y_{i}^{k^{j}}; \theta),
\end{equation*}
where
\begin{equation*}
f_{k^{j}}(y_{i}^{k^{j}}; \theta) = \frac{1}{ \sqrt{(2 \pi)^{q^{j}} \det \big( \Sigma_{k^{j}}( \theta) \big)}} \exp \left(- \frac{1}{2}(y_{i}^{k^{j}})^{ \top} \Sigma_{k^{j}}^{-1}( \theta)y_{i}^{k^{j}} \right),
\end{equation*}
and $\Sigma_{k^{j}} = \text{cov}(Y_{i}^{k^{j}})$, which is independent of $i$ by stationarity. Here, we again emphasize that the dependence on $\theta$ is solely through $\Sigma( \theta)$.

The idea is now to compute the score of $cl( \theta;y)$ and expand it around $\theta_{0}$. We next derive an expression for the variance of the score to determine the rate of convergence. The limiting distribution is then found with the help of \citet{breuer-major:83a}, \citet{arcones:94a}, \citet{hosking:96a}, and \citet{beran-feng-ghosh-kulik:13a} for the various settings covered in the theorem.

To calculate the score, we write
\begin{equation*}
cl( \theta;y) = \sum_{k^{j} \in Q} \sum_{i=1}^{n-k_{q^{j}}^{j}} \left(C - \frac{1}{2} \log \det \big( \Sigma_{k^{j}}( \theta) \big) - \frac{1}{2}(y_{i}^{k^{j}})^{ \top} \Sigma_{k^{j}}^{-1}( \theta)y_{i}^{k^{j}} \right),
\end{equation*}
where $C$ is a constant.

Now, define
\begin{align} \label{equation:score}
\begin{split}
s_{n}( \theta) &= \frac{ \partial}{ \partial \theta}cl( \theta;y) \\
&= \left[ \sum_{k^{j} \in Q} \sum_{i=1}^{n-k_{q^{j}}^{j}} \left(- \frac{1}{2} \frac{ \partial}{ \partial \theta_{r}} \log \det \big( \Sigma_{k^{j}}( \theta) \big) - \frac{1}{2}(y_{i}^{k^{j}})^{ \top} \frac{ \partial}{ \partial \theta_{r}} \Sigma_{k^{j}}^{-1}( \theta)y_{i}^{k^{j}} \right) \right]_{r=1}^{p}.
\end{split}
\end{align}
By definition $s_{n}( \hat{ \theta}_{ \text{MCLE}}) = 0$, so from the mean value theorem there exists an interior point, $\bar{ \theta}$, on the line segment in $\mathbb{R}^{p}$ connecting $\theta_{0}$ and $\hat{ \theta}_{ \text{MCLE}}$, such that
\begin{equation*}
0 = s_{n}( \hat{ \theta}_{ \text{MCLE}}) = s_{n}( \theta_{0}) + \frac{ \partial}{ \partial \theta^{ \top}}s_{n}( \bar{ \theta})( \hat{ \theta}_{ \text{MCLE}} - \theta_{0}).
\end{equation*}
After rearranging this expression and multiplying by $\sqrt{n}$, we get:
\begin{equation*}
\sqrt{n}( \hat{ \theta}_{ \text{MCLE}} - \theta_{0}) = - \left( \frac{1}{n} \frac{ \partial}{ \partial \theta^{ \top}}s_{n}( \bar{ \theta}) \right)^{-1} \frac{1}{ \sqrt{n}} s_{n}( \theta_{0}).
\end{equation*}
Since $\hat{ \theta}_{ \text{MCLE}}$ is consistent by Theorem \ref{theorem:lln}, we know from the squeeze theorem for convergence in probability that $\bar{ \theta} \overset{ \mathbb{P}}{ \longrightarrow} \theta_{0}$, as $n \rightarrow \infty$, so by stationarity and ergodicity
\begin{equation*}
- \frac{1}{n} \frac{ \partial}{ \partial \theta^{ \top}}s_{n}( \bar{ \theta}) \overset{ \mathbb{P}}{ \longrightarrow}H( \theta_{0}),
\end{equation*}
as $n \rightarrow \infty$, where
\begin{equation*}
H( \theta_{0}) = - \mathbb{E} \left( \sum_{k^{j} \in Q} \frac{ \partial^{2}}{\partial \theta \partial \theta^{ \top}} \log f_{k^{j}}(Y_{1}^{k^{j}}; \theta) \mid_{ \theta = \theta_{0}} \right)
\end{equation*}
is the negative value of the expected Hessian matrix of $cl( \theta; Y)$.

Hence, the main challenge is to derive an expression for the covariance structure of
\begin{equation*}
\frac{1}{ \sqrt{n}}s_{n}( \theta) = \frac{1}{ \sqrt{n}} \left[ \sum_{k^{j} \in Q} \sum_{i=1}^{n-k_{q^{j}}^{j}} \left(- \frac{1}{2} \frac{ \partial}{ \partial \theta_{r}} \log \det \big( \Sigma_{k^{j}}( \theta) \big) - \frac{1}{2} \frac{ \partial}{ \partial \theta_{r}}(Y_{i}^{k^{j}})^{ \top} \Sigma_{k^{j}}^{-1}( \theta)Y_{i}^{k^{j}} \right) \right]_{r=1}^{p}.
\end{equation*}
Now,
\begin{equation*}
(Y_{i}^{k^{j}})^{ \top} \frac{ \partial}{ \partial \theta_{r}} \Sigma_{k^{j}}^{-1}( \theta)Y_{i}^{k^{j}} = \sum_{j_{1}, j_{2} \in k^{j}} Y_{(i+j_{1})\Delta}Y_{(i+j_{2})\Delta} \frac{ \partial}{ \partial \theta_{r}} ( \Sigma_{k^{j}}^{-1}( \theta))_{j_{1},j_{2}},
\end{equation*}
where $\big( \Sigma_{k^{j}}^{-1}( \theta) \big)_{j_{1},j_{2}}$ is shorthand notation for the entry in $\Sigma_{k^{j}}^{-1}( \theta)$ corresponding to the covariance between $Y_{j_{1}\Delta}$ and $Y_{j_{2}\Delta}$.

This means we can write
\begin{align*}
\frac{1}{ \sqrt{n}} s_n( \theta) = \frac{1}{ \sqrt{n}} \sum_{k^{j} \in Q} \sum_{i=1}^{n-k_{q^{j}}^{j}} \Bigg(&- \frac{1}{2} \left[ \frac{ \partial}{ \partial \theta_{r}} \log \det \big( \Sigma_{k^{j}}( \theta) \big) \right]_{r=1}^{p} \\ &- \frac{1}{2} \sum_{j_{1}, j_{2} \in k^{j}} Y_{(i+j_{1})\Delta} Y_{(i+j_{2})\Delta} \left[ \frac{ \partial}{ \partial \theta_{r}} \big( \Sigma_{k^{j}}^{-1}( \theta) \big)_{j_{1},j_{2}} \right]_{r=1}^{p} \Bigg),
\end{align*}
so we need to calculate
\begin{equation*}
\text{var} \left( \frac{1}{ \sqrt{n}} \sum_{k^{j} \in Q}\sum_{i=1}^{n-k_{q^{j}}} \sum_{j_{1}, j_{2} \in k^{j}}Y_{(i+j_{1})\Delta}Y_{(i+j_{2})\Delta} \right),
\end{equation*}
as the rest are additive or multiplicative constants.

Expanding the variance operator, and ignoring the $\sqrt{n}$ for the time being, yields the following calculation:
\begin{align} \label{equation:covariance}
\begin{split}
\text{var} & \left( \sum_{k^{j} \in Q}\sum_{i=1}^{n-k_{q^{j}}} \sum_{j_{1}, j_{2} \in k^{j}}Y_{(i+j_{1})\Delta}Y_{(i+j_{2})\Delta} \right) \\ &= \sum_{k^{j_{1}}, k^{j_{2}} \in Q} \sum_{i_{1} = 1}^{n-k^{j_{1}}_{q^{j_{1}}}} \sum_{i_{2}=1}^{n-k^{j_{2}}_{q^{j_{2}}}} \sum_{ \iota_{1}, \iota_{2} \in k^{j_{1}}} \sum_{ \iota_{3}, \iota_{4} \in k^{j_{2}}} \mathbb{E} [Y_{(i_{1}+ \iota_{1})\Delta} Y_{(i_{1}+ \iota_{2})\Delta} Y_{(i_{2}+ \iota_{3})\Delta} Y_{(i_{2}+ \iota_{4})\Delta}] \\&- \mathbb{E} [Y_{(i_{1}+ \iota_{1})\Delta} Y_{(i_{1}+ \iota_{2})\Delta}] \mathbb{E}[ Y_{(i_{2}+ \iota_{3})\Delta} Y_{(i_{2}+ \iota_{4})\Delta}] \\
&= \sum_{k^{j_{1}}, k^{j_{2}} \in Q} \sum_{i_{1}=1}^{n-k^{j_{1}}_{q^{j_{1}}}} \sum_{i_{2}=1}^{n-k^{j_{2}}_{q^{j_{2}}}} \sum_{ \iota_{1}, \iota_{2} \in k^{j_{1}}} \sum_{ \iota_{3}, \iota_{4} \in k^{j_{2}}} \big( \mathbb{E}[Y_{(i_{1}+ \iota_{1})\Delta} Y_{(i_{1}+ \iota_{2})\Delta}] \mathbb{E}[Y_{(i_{2}+ \iota_{3})\Delta} Y_{(i_{2} + \iota_{4})\Delta}] \\
&+ \mathbb{E}[Y_{(i_{1}+ \iota_{1})\Delta} Y_{(i_{2}+ \iota_{3})\Delta}] \mathbb{E}[Y_{(i_{1}+ \iota_{2})\Delta} Y_{(i_{2}+ \iota_{4})\Delta}] + \mathbb{E}[Y_{(i_{1}+ \iota_{1})\Delta} Y_{(i_{2}+ \iota_{4})\Delta}] \mathbb{E}[Y_{(i_{1}+ \iota_{2})\Delta} Y_{(i_{2}+ \iota_{3})\Delta}] \big) \\&- \mathbb{E} [Y_{(i_{1}+ \iota_{1})\Delta} Y_{(i_{1}+ \iota_{2})\Delta}] \mathbb{E}[Y_{(i_{2}+ \iota_{3})\Delta} Y_{(i_{2}+ \iota_{4})\Delta}] \\
&= \sum_{k^{j_{1}}, k^{j_{2}} \in Q} \sum_{i_{1}=1}^{n-k^{j_{1}}_{q^{j_{1}}}} \sum_{i_{2}=1}^{n-k^{j_{2}}_{q^{j_{2}}}} \sum_{ \iota_{1}, \iota_{2} \in k^{j_{1}}} \sum_{ \iota_{3}, \iota_{4} \in k^{j_{2}}} \gamma_{(i_{1}+ \iota_{1} - (i_{2}+ \iota_{3}))\Delta} \gamma_{(i_{1}+ \iota_{2} - (i_{2}+ \iota_{4}))\Delta} \\&+ \gamma_{(i_{1}+ \iota_{1} - (i_{2}+ \iota_{4}))\Delta} \gamma_{(i_{1}+ \iota_{2} - (i_{2}+ \iota_{3}))\Delta} \\
&\sim \sum_{k^{j_{1}}, k^{j_{2}} \in Q} \sum_{l=-n}^{n} \sum_{ \iota_{1}, \iota_{2} \in k^{j_{1}}} \sum_{ \iota_{3}, \iota_{4} \in k^{j_{2}}} (n-l) \left( \gamma_{(l+ \iota_{1}- \iota_{3})\Delta} \gamma_{(l+ \iota_{2}- \iota_{4})\Delta} + \gamma_{(l+ \iota_{1} - \iota_{4})\Delta} \gamma_{)l+ \iota_{2} - \iota_{3})\Delta} \right),
\end{split}
\end{align}
where $\gamma_{ \ell}$ is the autocovariance of $Y$ at lag $\ell$, and Isserlis' theorem helps to express higher-order moments of the multivariate normal distribution in terms of its covariance matrix. Thus, convergence of the variance amounts to studying the limiting behavior of the sum of products of autocovariances in \eqref{equation:covariance}.

In the first part of Theorem \ref{theorem:clt}, i.e. the short memory setting and long memory setting with $\beta > 1/2$, the sum converges. Here, we can derive the asymptotic distribution using a Breuer-Major theorem, which was introduced in the one-dimensional case in \citet{breuer-major:83a} and extended to the multivariate setting in \citet{arcones:94a}. From \eqref{equation:score}, we see that the score is a quadratic form of $Y$ and, hence, consists of functionals that are either sums of squares or sums of products that are $k^{j}$ periods apart. Since $\rho(h) = O(h^{- \beta})$ for some $\beta>1/2$, the autocovariance function is square-integrable, so we are done if we can prove that the functions from $\mathbb{R}^{2} \rightarrow \mathbb{R}$ given by
\begin{equation*}
f:(x,y) \mapsto x^{2}+y^{2} \qquad \text{and} \qquad g:(x,y) \mapsto xy
\end{equation*}
are of Hermite rank 2.

We recall that a function $h$ is of Hermite rank $q$ with respect to a Gaussian process $X$ if (a) $\mathbb{E} \big([h(X)-\mathbb{E}(h(X))]p_{m}(X) \big) = 0$ for every polynomial $p_{m}$ of degree $m \leq q-1$ and (b) there exists a polynomial $p_{q}$ of degree $q$ such that $\mathbb{E} \big([h(X)-\mathbb{E}(h(X))]p_{q}(X) \big) \neq 0$. One can always transform the problem to the multivariate standard normal distribution, because the Hermite rank is invariant under linear mappings. To prove (a), note that
\begin{equation} \label{equation:hermite}
\mathbb{E} \bigg[h(Z) \prod_{i=0}^{1}H_{ \alpha_{i}}(Z_{i}) \bigg] = 0,
\end{equation}
where $h \in \{f,g\}$, $\alpha = ( \alpha_{0}, \alpha_{1}) \in \{(1,0), (0,1) \}$, $H_{i}$ is the $i$th Hermite polynomial, and $Z = (Z_{1}, Z_{2})^{ \top} \sim N(0, I_{2})$. \eqref{equation:hermite} follows from independence and the zero skewness of the normal distribution, since $H_{0}(x) = 1$ and $H_{1}(x) = x$. The claim in (b) follows, because the expectation is nonzero with a second-degree Hermite polynomial, $H_{2}(x) = x^{2}-1$. So the score converges to a Gaussian distribution, and by Slutsky's theorem so does $\hat{ \theta}_{ \text{MCLE}}$:
\begin{equation*}
\frac{1}{ \sqrt{n}}s_{n}( \theta_{0}) \overset{d}{ \longrightarrow}N(0, V_{0}),
\end{equation*}
where
\begin{align*}
V_{0} &= \frac{1}{4} \sum_{k^{j_{1}}, k^{j_{2}} \in Q} \sum_{l=- \infty}^{ \infty} \sum_{ \iota_{1}, \iota_{2} \in k^{j_{1}}} \sum_{ \iota_{3}, \iota_{4} \in k^{j_{2}}} \left[ \frac{ \partial}{ \partial \theta_{r}} \big( \Sigma_{k^{j_{1}}}^{-1}( \theta) \big)_{ \iota_{1}, \iota_{2}} \vert_{ \theta = \theta_{0}} \right]_{r=1}^{p} \\
&\times \left( \left[ \frac{ \partial}{ \partial \theta_{r}} \big( \Sigma_{k^{j_{2}}}^{-1}( \theta) \big)_{ \iota_{3}, \iota_{4}} \vert_{ \theta = \theta_{0}} \right]_{r=1}^{p} \right)^{ \top} \left( \gamma_{(l+ \iota_{1}- \iota_{3})\Delta} \gamma_{(l+ \iota_{2}- \iota_{4})\Delta}+ \gamma_{(l+ \iota_{1}- \iota_{4})\Delta} \gamma_{(l+ \iota_{2}- \iota_{3})\Delta} \right),
\end{align*}
see Lemma \ref{lemma:convergence} below.

Next, take $\beta=1/2$. Here, we know from the derivations in \eqref{equation:covariance} that the variance of the score, for $n$ large, essentially behaves as
\begin{align*}
\text{var} & \left( \sum_{k^{j} \in Q}\sum_{i=1}^{n-k_{q^{j}}} \sum_{j_{1}, j_{2} \in k^{j}}Y_{(i+j_{1})\Delta}Y_{(i+j_{2})\Delta)\Delta} \right) \\
&\sim \sum_{k^{j_{1}}, k^{j_{2}} \in Q} \sum_{l=-n}^{n} \sum_{ \iota_{1}, \iota_{2} \in k^{j_{1}}} \sum_{ \iota_{3}, \iota_{4} \in k^{j_{2}}} (n-l) \left( \gamma_{(l+ \iota_{1}- \iota_{3})\Delta} \gamma_{(l+ \iota_{2}- \iota_{4})\Delta} + \gamma_{(l+ \iota_{1} - \iota_{4})\Delta} \gamma_{(l+ \iota_{2} - \iota_{3})\Delta} \right) \\
&\sim n \text{var}(Y) + \sum_{k^{j_{1}}, k^{j_{2}} \in Q} \sum_{l=-n}^{n} \sum_{ \iota_{1}, \iota_{2} \in k^{j_{1}}} \sum_{ \iota_{3}, \iota_{4} \in k^{j_{2}}} \mathbf{1}_{ \mathcal{S}}(l, \iota_{1}, \iota_{2}, \iota_{3}, \iota_{4})(n-l) \\
&\times \left( \frac{L_{ \infty}((l+ \iota_{1}- \iota_{3})\Delta)}{ \sqrt{|l+ \iota_{1}- \iota_{3}|\Delta}} \frac{L_{ \infty}((l+ \iota_{2}- \iota_{4})\Delta)}{ \sqrt{|l+ \iota_{2}- \iota_{4}|\Delta}} + \frac{L_{ \infty}((l+ \iota_{1}- \iota_{4})\Delta)}{ \sqrt{|l+ \iota_{1}- \iota_{4}|\Delta}} \frac{L_{ \infty}((l+ \iota_{2} - \iota_3)\Delta)}{ \sqrt{|l+ \iota_{2}- \iota_{3}|\Delta}} \right) \\
&\sim n \text{var}(Y) + \sum_{l=1}^{n} (n-l) L_{ \infty}^{2}(l\Delta)(l\Delta)^{-1},
\end{align*}
where $\mathcal{S} = \{l+ \iota_{1}- \iota_{3} \neq 0, l+ \iota_{2}- \iota_{4} \neq 0, l+ \iota_{1}- \iota_{4} \neq 0, l+ \iota_{2}- \iota_{3} \neq0 \}$.

We conclude from this and Lemma \ref{lemma:convergence} that $\text{var}(n^{-1/2}s_{n}( \theta)) = O(L_{ \gamma}(n))$, where $L_{ \gamma}(n) = \sum_{l=1}^{n} L_{ \infty}^{2}(l\Delta)(l\Delta)^{-1}$. By formula 1.5.8 and Proposition 1.5.9a in \citet{bingham-goldie-teugels:89a}, $L_{ \gamma}$ is slowly varying and growing faster than $L_{ \infty}^{2}$, so the correct scaling of $s_{n}( \theta)$ is $n^{-1/2}L_{ \gamma}^{-1/2}(n)$.

To derive the limiting distribution, we rely on Theorem 4 in \cite{hosking:96a}, which shows the rather striking result that for $\beta \leq 1/2$, $\hat{ \gamma}_{ \ell} - \hat{ \gamma}_{ \ell'} \overset{ \mathbb{P}}{ \longrightarrow} 0$ for $\ell \neq \ell'$. Hence, the sample autocovariances have a common asymptotic distribution at any lag, which is Gaussian for $\beta = 1/2$, and their multivariate normal distribution function is degenerate. This implies that the entries in the score also exhibit this property, as they are a weighted sum of the sample autocovariances. Consequently,
\begin{equation*}
n^{-1/2}L_{ \gamma}^{-1/2}(n)( \hat{ \theta}_{ \text{MCLE}}- \theta_{0}) \overset{d}{ \longrightarrow} \sum_{k^{j} \in Q} \sum_{j_{1}, j_{2} \in k^{j}} \left[ \frac{ \partial}{ \partial \theta_{r}} \big( \Sigma_{k}^{-1}( \theta) \big)_{j_{1},j_{2}} \vert_{ \theta = \theta_{0}} \right]_{r=1}^{p} N(0,1).
\end{equation*}
At last, we handle the very long memory setting with $\beta \in (0,1/2)$. First, we write
\begin{equation*}
n^{ \beta} L_{2}^{-1/2}(n)( \hat{ \theta}_{ \text{MCLE}}- \theta_{0}) = - \left( \frac{1}{n} \frac{ \partial}{ \partial \theta^{ \top}}s_{n}( \bar{ \theta}) \right)^{-1}n^{ \beta-1}L_{2}^{-1/2}(n)s_{n}( \theta_{0}).
\end{equation*}
By a weak law of large numbers, as $n \rightarrow \infty$,
\begin{equation*}
- \frac{1}{n} \frac{ \partial}{ \partial \theta^{ \top}}s_{n}( \bar{ \theta}) \overset{ \mathbb{P}}{ \longrightarrow}H( \theta_{0}).
\end{equation*}
Looking at the last factor in the above expression, we note that the rate normalization is exactly as required for sample covariances of totally dependent Hermite-Rosenblatt processes to converge in law, see, e.g., Section 4.4.1.3 of \citet{beran-feng-ghosh-kulik:13a} (in their notation, $\beta = 1-2d$):
\begin{equation*}
n^{ \beta-1}L_{2}^{-1/2}(n)s_{n}( \theta_{0}) \overset{d}{ \longrightarrow} \frac{1}{2} \sum_{k^{j} \in Q} \sum_{j_{1}, j_{2} \in k^{j}} \left[ \frac{ \partial}{ \partial \theta_{r}} \big( \Sigma_{k}^{-1}( \theta) \big)_{j_{1},j_{2}} \vert_{ \theta = \theta_{0}} \right]_{r=1}^{p}Z_{2,H}(1).
\end{equation*}
\end{proof}

\begin{lemma} \label{lemma:convergence}
Let $\{x_{k} \}_{k=1}^{ \infty}$ be a sequence of positive numbers. Then, it holds that
\begin{equation*}
X_{n} = \sum_{k=1}^{n} x_{k} \text{ converges } \Leftrightarrow \tilde{X}_{n} = \sum_{k=1}^{n} \frac{n-k}{n}x_{k} \text{ converges.}
\end{equation*}
Furthermore, if they converge, the limit is identical.
\end{lemma}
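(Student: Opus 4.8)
The plan is to exploit positivity throughout and to sandwich $\tilde X_n$ between $X_n$ (from above) and the truncated partial sums $X_m$ (from below, in the $n\to\infty$ limit). First I would record the elementary identity $X_n-\tilde X_n=\tfrac1n\sum_{k=1}^n k\,x_k$, which immediately gives $\tilde X_n\le X_n$ since each weight $\tfrac{n-k}{n}$ lies in $[0,1)$ and $x_k\ge 0$. Along the way one also notes $\tilde X_{n+1}-\tilde X_n=\tfrac1{n(n+1)}\sum_{k=1}^n k\,x_k\ge 0$, so $\tilde X_n$ is itself nondecreasing and therefore either converges or diverges to $+\infty$, exactly as $X_n$ does; this reduces both directions to controlling limits rather than fussing over existence.

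The second ingredient is a lower bound obtained by discarding the tail: for any fixed $m\le n$,
$\tilde X_n\ge\sum_{k=1}^m\tfrac{n-k}{n}x_k=X_m-\tfrac1n\sum_{k=1}^m k\,x_k$, and letting $n\to\infty$ with $m$ held fixed yields $\liminf_{n\to\infty}\tilde X_n\ge X_m$ for every $m$.

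Combining the two bounds closes the argument. If $X_n\to X<\infty$, then $\limsup_n\tilde X_n\le X$ by the upper bound and $\liminf_n\tilde X_n\ge\sup_m X_m=X$ by the lower bound, so $\tilde X_n\to X$. Conversely, if $\tilde X_n\to\tilde X<\infty$, the lower bound forces $X_m\le\tilde X$ for all $m$, so the nondecreasing sequence $X_n$ is bounded and hence converges to some $X$; applying the forward direction then gives $\tilde X_n\to X$, so $X=\tilde X$. This simultaneously proves the equivalence and the equality of the two limits.

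I expect no real obstacle here; the only subtlety is that $\tilde X_n$ is a triangular-array sum—the index $n$ appears both in the range of summation and in the weights—so one cannot pass to the limit term by term without justification, and it is precisely the monotonicity in $m$ of the truncated sums together with positivity of the $x_k$ that licenses the interchange. As an aside, the forward implication is nothing but Kronecker's lemma with weights $b_k=k\uparrow\infty$, but the self-contained sandwich argument above is shorter and also delivers the equality of limits for free.
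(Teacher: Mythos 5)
Your proof is correct, and it rests on the same two observations as the paper's: $\tilde X_n\le X_n$ by positivity, and the fact that for fixed $m$ the initial segment of $\tilde X_n$ recovers $X_m$ as $n\to\infty$ because the weights $\tfrac{n-k}{n}$ tend to $1$. The difference is one of packaging. The paper proves the backward implication by contradiction (if $X_{N_1}>2C$, then for $N_2>2N_1$ the weights on the first $N_1$ terms exceed $1/2$, so $\tilde X_{N_2}>\tfrac12 X_{N_1}>C$) and then establishes equality of the limits by a separate $\epsilon$-argument with the factor $\tfrac{N_2-N_1}{N_2}>1-\epsilon$. You instead use the exact identity $\sum_{k=1}^m\tfrac{n-k}{n}x_k=X_m-\tfrac1n\sum_{k=1}^m k\,x_k$ to get $\liminf_n\tilde X_n\ge X_m$ for every $m$, which, combined with $\tilde X_n\le X_n$, delivers both directions and the equality of limits in a single sandwich; you also verify the monotonicity of $\tilde X_n$ explicitly (via $\tilde X_{n+1}-\tilde X_n=\tfrac{1}{n(n+1)}\sum_{k=1}^n k\,x_k$), which the paper asserts without computation. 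Your organization is slightly tighter and avoids the paper's small slip of writing $\tilde X_{N_1}$ where $\tilde X_{N_2}$ is meant; the substance, however, is the same elementary argument, and your closing remark relating the forward direction to Kronecker's lemma is apt.
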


\begin{proof} As the terms in the sequence are positive, both $X_{n}$ and $\tilde{X}_{n}$ are increasing. Hence, convergence is equivalent to boundedness by monotone convergence theorem. The implication $\Rightarrow$ follows immediately, since $\tilde{X}_{n} \leq X_{n}$.

To show the $\Leftarrow$ part, assume that $\tilde{X}_{n}$ converges. Then it is bounded by some number $C$. Suppose that $X_{n}$ does not converge, which means that it is unbounded, so there is an $N_{1}$ such that $X_{N_{1}} > 2C$. Moreover, for any $N_{2} > 2N_{1}$, the first $N_{1}$ terms in $\tilde{X}_{N_{2}}$ are at least half as large as the corresponding ones in $X_{N_{2}}$, since $1-N_{1}/N_{2} > 1/2$. This implies that $\tilde{X}_{N_{1}} > \frac{1}{2} X_{N_{1}} > C$, which is a contradiction.

To show the limit is identical, suppose that $X_{n} \rightarrow x_{0}$ as $n \rightarrow \infty$. Then, by definition, for any $\epsilon > 0$ there exists $N_{1} \in \mathbb{N}$ such that for $n \geq N_{1}$:
\begin{equation*}
x_{0} - X_{n} < \epsilon,
\end{equation*}
Pick some $N_{2} > N_{1}$ such that $\frac{N_{2}-N_{1}}{N_{2}} > 1- \epsilon$. Then, we observe that
\begin{align*}
0 \leq x_{0} - \tilde{X}_{N_{2}} &= x_{0} - \sum_{k=1}^{N_{2}} \frac{N_{2}-k}{N_{2}}x_{k} \\ &\leq x_{0} - \sum_{k=1}^{N_{1}} \frac{N_{2}-k}{N_{2}}x_{k} \\
&\leq x_{0} - \frac{N_{2}-N_{1}}{N_{2}} \sum_{k=1}^{N_{1}}x_{k} \\
&= x_{0} - \frac{N_{2}-N_{1}}{N_{2}} X_{N_{1}} \\
&\leq x_{0} - \frac{N_{2}-N_{1}}{N_{2}}(x_{0}- \epsilon) \\
&\leq x_{0} - (1- \epsilon)(x_{0}- \epsilon) = \epsilon (1+x_{0}- \epsilon).
\end{align*}
As $\epsilon (1+x_{0}- \epsilon)$ can be made arbitrarily small,  $\tilde{X}_{n} \rightarrow x_{0}$.
\end{proof}

\clearpage

\section{Method of moments-based estimator} \label{appendix:mm}

In this appendix, we review the method of moments-based estimator of the parameters of the fOU process and the Cauchy class, which serve as a benchmark for our MCLE approach. Our description is based on \citet{wang-xiao-yu:23a} and \citet{bennedsen-lunde-pakkanen:22a}. It is not too detailed, but the reader can find further information and references in their papers.

The estimator of the mean, $\mu$, is the sample average,
\begin{equation*}\hat{ \mu} = \frac{1}{n} \sum_{i=1}^{n} Y_{i \Delta}.
\end{equation*}
The roughness index, $\alpha$, is estimated by a change-of-frequency (COF) approach \citep[see, e.g.][]{lang-roueff:01a,barndorff-nielsen-corcuera-podolskij:13a}. To write this down succinctly, we need a notion of the $k$th-order difference, for any $k \in \mathbb{N}$, of a time series observed with time gap $\Delta$ and sampled at frequency $\eta$, where $\eta \in \mathbb{N}$. At stage $i \geq \eta k$, this can be defined as
\begin{equation*}
(1-L^{ \eta})^{k} Y_{i \Delta} \equiv \sum_{j=0}^{k} (-1)^{j} \binom{k}{j} Y_{(i- \eta j) \Delta},
\end{equation*}
where $L$ is the lag operator.

We now introduce the $p$th-order realized power variation:
\begin{equation*}
V(Y,p,k, \eta; \Delta)_{t} = \sum_{i = \eta k}^{[t/ \Delta]} \vert (1-L^{ \eta})^{k} Y_{i \Delta} \vert^{p},
\end{equation*}
for $p>0$.

The COF estimator of $\alpha$ is then
\begin{equation*}
\hat{ \alpha} = \frac{ \log_2( \text{COF}(Y,p, \Delta)_{t})}{p} - \frac{1}{2}, \quad \text{where }
\text{COF}(Y,p, \Delta)_{t} = \frac{V(Y,p,2,2; \Delta)_{t}}{V(Y,p,2,1; \Delta)_{t}}, \quad \text{for } t > 0.
\end{equation*}
$\hat{ \alpha}$ is a consistent estimator of $\alpha$ in the infill limit as $\Delta \rightarrow 0$. We follow \citet{wang-xiao-yu:23a} and implement $\hat{ \alpha}$ with $p = 2$. We remark that this estimator is a special case of the one from \citet{bennedsen-lunde-pakkanen:22a}.

In the fOU, the standard deviation, $\nu$, and mean reversion, $\kappa$, are estimated as:
\begin{equation*}
\hat{ \nu} = \sqrt{ \frac{ \sum_{i=3}^{n}(Y_{i \Delta} - 2Y_{(i-1) \Delta} + Y_{(i-2) \Delta})^{2}}{n \big(4-2^{2 \hat{H}} \big) \Delta^{2 \hat{H}}}}
\quad \text{and} \quad \hat{ \kappa} = \left( \frac{n \sum_{i=1}^{n} Y_{i \Delta}^{2}- \left( \sum_{i=1}^{n}Y_{i \Delta} \right)^{2}}{n^{2} \hat{ \nu}^{2} \hat{H} \Gamma(2 \hat{H})} \right)^{-1/(2 \hat{H})},
\end{equation*}
where $\hat{H} = \hat{ \alpha}+1/2$.

In the Cauchy class, $\nu$ is recovered directly as the sample standard deviation, $\hat{ \nu} = n^{-1} \sum_{i=1}^{n} (Y_{i \Delta} - \hat{ \mu})^{2}$. $\beta$ is estimated by matching the theoretical ACF from \eqref{equation:cauchy-acf} to the empirical one with the first-stage estimator of $\alpha$ plugged in, leaving it as a function of $\beta$.

\clearpage

\section{Additional Monte Carlo analysis} \label{appendix:simulation}

In this appendix, we provide some supplemental simulation evidence to highlight the performance of our MCLE approach under alternative configurations:

\begin{itemize}
\item Table \ref{table:sim-ou-q=2-N=12-mu=0} -- Table \ref{table:sim-cc-q=2-N=12-mu=1} report the pairwise setting ($q = 2$ and $N = 12$).
\item Table \ref{table:sim-ou-q=3-N=1-mu=0} -- Table \ref{table:sim-cc-q=3-N=1-mu=1} consider the setting with daily data ($q = 3$ and $N = 1$).

\end{itemize}

\noindent The tables are presented without comment.

\begin{sidewaystable}[ht!]
\setlength{ \tabcolsep}{0.15cm}
\begin{center}
\caption{Parameter estimation of the fOU process [$q = 2$ and $\mu$ known].}
\label{table:sim-ou-q=2-N=12-mu=0}
\begin{footnotesize}
\begin{tabular}{lrrrrrrrrrrrrrr}
\hline \hline
Parameter & Value & \multicolumn{3}{c}{$T = 1{,}095$} && \multicolumn{3}{c}{$T = 1{,}825$} && \multicolumn{3}{c}{$T = 2{,}555$} \\
\cline{3-5} \cline{7-9} \cline{11-13}
&& \multicolumn{1}{c}{MCLE} & \multicolumn{1}{c}{MME} & \multicolumn{1}{c}{RMSE$_{r}$} && \multicolumn{1}{c}{MCLE} & \multicolumn{1}{c}{MME} & \multicolumn{1}{c}{RMSE$_{r}$} && \multicolumn{1}{c}{MCLE} & \multicolumn{1}{c}{MME} & \multicolumn{1}{c}{RMSE$_{r}$} \\
\hline 
Panel A: \\
$\mu$ & 0.000 & & & & & & & & & & & \\
$\kappa$ & 0.005 & 0.000 (0.005) & 0.006 (0.020) & 0.240 & & -0.000 (0.004) & 0.004 (0.013) & 0.273 & & -0.000 (0.003) & 0.003 (0.010) & 0.297 \\
$\nu$ & 1.250 & 0.000 (0.009) & 0.000 (0.041) & 0.214 & & 0.000 (0.007) & 0.000 (0.032) & 0.212 & & 0.000 (0.006) & -0.000 (0.027) & 0.211 \\
$\alpha$ & -0.450 & 0.000 (0.003) & -0.000 (0.013) & 0.200 & & 0.000 (0.002) & -0.000 (0.010) & 0.199 & & 0.000 (0.002) & -0.000 (0.009) & 0.197 \\
Panel B: \\
$\mu$ & 0.000 & & & & & & & & & & & \\
$\kappa$ & 0.010 & 0.001 (0.006) & 0.004 (0.014) & 0.442 & & 0.001 (0.005) & 0.002 (0.010) & 0.467 & & 0.001 (0.004) & 0.002 (0.008) & 0.486 \\
$\nu$ & 0.750 & 0.000 (0.006) & 0.000 (0.024) & 0.260 & & 0.000 (0.005) & 0.000 (0.019) & 0.257 & & -0.000 (0.004) & 0.000 (0.016) & 0.256 \\
$\alpha$ & -0.400 & 0.000 (0.004) & -0.000 (0.013) & 0.304 & & -0.000 (0.003) & -0.000 (0.010) & 0.303 & & 0.000 (0.003) & -0.000 (0.009) & 0.300 \\
Panel C: \\
$\mu$ & 0.000 & & & & & & & & & & & \\
$\kappa$ & 0.015 & 0.002 (0.006) & 0.002 (0.008) & 0.806 & & 0.001 (0.005) & 0.001 (0.006) & 0.804 & & 0.001 (0.004) & 0.001 (0.005) & 0.811 \\
$\nu$ & 0.500 & 0.000 (0.007) & 0.000 (0.016) & 0.436 & & -0.000 (0.006) & -0.000 (0.013) & 0.430 & & -0.000 (0.005) & -0.000 (0.011) & 0.427 \\
$\alpha$ & -0.200 & 0.000 (0.007) & -0.000 (0.012) & 0.589 & & -0.000 (0.006) & -0.000 (0.010) & 0.583 & & -0.000 (0.005) & -0.000 (0.008) & 0.578 \\
Panel D: \\
$\mu$ & 0.000 & & & & & & & & & & & \\
$\kappa$ & 0.035 & 0.002 (0.009) & 0.002 (0.010) & 0.923 & & 0.001 (0.007) & 0.001 (0.008) & 0.913 & & 0.001 (0.006) & 0.001 (0.007) & 0.915 \\
$\nu$ & 0.300 & 0.000 (0.006) & 0.000 (0.010) & 0.602 & & 0.000 (0.005) & -0.000 (0.008) & 0.599 & & 0.000 (0.004) & -0.000 (0.007) & 0.593 \\
$\alpha$ & 0.000 & 0.000 (0.010) & -0.000 (0.012) & 0.838 & & 0.000 (0.008) & -0.000 (0.009) & 0.836 & & 0.000 (0.006) & -0.000 (0.008) & 0.827 \\
Panel E: \\
$\mu$ & 0.000 & & & & & & & & & & & \\
$\kappa$ & 0.070 & 0.005 (0.019) & 0.003 (0.016) & 1.202 & & 0.003 (0.014) & 0.002 (0.012) & 1.183 & & 0.002 (0.012) & 0.001 (0.010) & 1.182 \\
$\nu$ & 0.200 & 0.001 (0.008) & 0.000 (0.008) & 0.998 & & 0.001 (0.006) & -0.000 (0.006) & 0.986 & & 0.000 (0.005) & -0.000 (0.005) & 0.976 \\
$\alpha$ & 0.200 & 0.001 (0.015) & -0.000 (0.011) & 1.362 & & 0.001 (0.011) & -0.000 (0.008) & 1.359 & & 0.001 (0.010) & -0.000 (0.007) & 1.348 \\
\hline \hline
\end{tabular}
\end{footnotesize}
\smallskip
\begin{scriptsize}
\parbox{0.98\textwidth}{\emph{Note.} 
We simulate the process in the caption of the table 10,000 times on the interval $[0,T]$, where $T$ is interpreted as the number of days. 
There are $N = 12$ observations per unit interval corresponding to twelve every day.
The true value of the parameter vector appears to the left in Panel A -- E. 
We estimate $\theta$ with the maximum composite likelihod estimation (MCLE) procedure developed in the main text, and benchmark it against a method-of-moments estimator (MME).
The table shows the Monte Carlo average of each parameter estimate across simulations (standard deviation in parenthesis).
The last column reports the RMSE ratio defined as $\text{RMSE}_{r} = \text{RMSE}( \hat{ \theta}_{ \text{MCLE}}) / \text{RMSE}( \hat{ \theta}_{ \text{MME}})$. 
}
\end{scriptsize}
\end{center}
\end{sidewaystable}

\begin{sidewaystable}[ht!]
\setlength{ \tabcolsep}{0.15cm}
\begin{center}
\caption{Parameter estimation of the fOU process [$q = 2$ and $\mu$ estimated].}
\label{table:sim-ou-q=2-N=12-mu=1}
\begin{footnotesize}
\begin{tabular}{lrrrrrrrrrrrrrr}
\hline \hline
Parameter & Value & \multicolumn{3}{c}{$T = 1{,}095$} && \multicolumn{3}{c}{$T = 1{,}825$} && \multicolumn{3}{c}{$T = 2{,}555$} \\
\cline{3-5} \cline{7-9} \cline{11-13}
&& \multicolumn{1}{c}{MCLE} & \multicolumn{1}{c}{MME} & \multicolumn{1}{c}{RMSE$_{r}$} && \multicolumn{1}{c}{MCLE} & \multicolumn{1}{c}{MME} & \multicolumn{1}{c}{RMSE$_{r}$} && \multicolumn{1}{c}{MCLE} & \multicolumn{1}{c}{MME} & \multicolumn{1}{c}{RMSE$_{r}$} \\
\hline 
Panel A: \\
$\mu$ & 0.000 & -0.000 (0.257) & -0.000 (0.258) & 0.999 & & 0.000 (0.164) & 0.000 (0.164) & 0.999 & & 0.000 (0.120) & 0.000 (0.120) & 0.998 \\
$\kappa$ & 0.005 & 0.001 (0.006) & 0.008 (0.023) & 0.242 & & 0.001 (0.004) & 0.004 (0.014) & 0.273 & & 0.000 (0.003) & 0.003 (0.011) & 0.296 \\
$\nu$ & 1.250 & 0.000 (0.009) & 0.000 (0.041) & 0.216 & & 0.000 (0.007) & 0.000 (0.032) & 0.212 & & 0.000 (0.006) & 0.000 (0.027) & 0.210 \\
$\alpha$ & -0.450 & 0.000 (0.003) & -0.000 (0.013) & 0.202 & & 0.000 (0.002) & -0.000 (0.010) & 0.200 & & 0.000 (0.002) & -0.000 (0.009) & 0.197 \\
Panel B: \\
$\mu$ & 0.000 & -0.000 (0.094) & -0.000 (0.094) & 1.000 & & 0.000 (0.062) & 0.000 (0.062) & 1.000 & & 0.000 (0.047) & 0.000 (0.047) & 0.999 \\
$\kappa$ & 0.010 & 0.002 (0.007) & 0.005 (0.015) & 0.448 & & 0.001 (0.005) & 0.003 (0.010) & 0.470 & & 0.001 (0.004) & 0.002 (0.008) & 0.490 \\
$\nu$ & 0.750 & 0.000 (0.006) & 0.000 (0.024) & 0.262 & & 0.000 (0.005) & 0.000 (0.019) & 0.258 & & 0.000 (0.004) & -0.000 (0.016) & 0.256 \\
$\alpha$ & -0.400 & 0.000 (0.004) & -0.000 (0.013) & 0.305 & & -0.000 (0.003) & -0.000 (0.010) & 0.303 & & 0.000 (0.003) & -0.000 (0.009) & 0.300 \\
Panel C: \\
$\mu$ & 0.000 & 0.000 (0.098) & 0.000 (0.098) & 1.001 & & 0.000 (0.070) & 0.000 (0.070) & 1.001 & & 0.000 (0.056) & 0.000 (0.056) & 1.000 \\
$\kappa$ & 0.015 & 0.003 (0.007) & 0.003 (0.009) & 0.812 & & 0.002 (0.005) & 0.002 (0.006) & 0.813 & & 0.001 (0.004) & 0.001 (0.005) & 0.819 \\
$\nu$ & 0.500 & 0.000 (0.007) & 0.000 (0.016) & 0.437 & & 0.000 (0.006) & 0.000 (0.013) & 0.433 & & 0.000 (0.005) & -0.000 (0.011) & 0.430 \\
$\alpha$ & -0.200 & 0.000 (0.007) & -0.000 (0.012) & 0.590 & & 0.000 (0.006) & -0.000 (0.010) & 0.586 & & 0.000 (0.005) & -0.000 (0.008) & 0.582 \\
Panel D: \\
$\mu$ & 0.000 & 0.000 (0.068) & 0.000 (0.068) & 1.001 & & 0.000 (0.053) & 0.000 (0.053) & 1.000 & & 0.000 (0.045) & 0.000 (0.045) & 1.000 \\
$\kappa$ & 0.035 & 0.004 (0.010) & 0.004 (0.011) & 0.929 & & 0.003 (0.007) & 0.002 (0.008) & 0.923 & & 0.002 (0.006) & 0.002 (0.007) & 0.923 \\
$\nu$ & 0.300 & 0.001 (0.006) & 0.000 (0.010) & 0.607 & & 0.000 (0.005) & -0.000 (0.008) & 0.604 & & 0.000 (0.004) & -0.000 (0.007) & 0.596 \\
$\alpha$ & 0.000 & 0.001 (0.010) & -0.000 (0.012) & 0.845 & & 0.000 (0.008) & -0.000 (0.009) & 0.841 & & 0.000 (0.006) & -0.000 (0.008) & 0.830 \\
Panel E: \\
$\mu$ & 0.000 & 0.000 (0.069) & 0.000 (0.069) & 1.000 & & 0.000 (0.059) & 0.000 (0.059) & 1.000 & & 0.000 (0.054) & 0.000 (0.054) & 1.000 \\
$\kappa$ & 0.070 & 0.013 (0.019) & 0.009 (0.016) & 1.238 & & 0.009 (0.014) & 0.006 (0.012) & 1.213 & & 0.007 (0.012) & 0.005 (0.010) & 1.210 \\
$\nu$ & 0.200 & 0.003 (0.008) & 0.000 (0.008) & 1.045 & & 0.002 (0.006) & -0.000 (0.006) & 1.019 & & 0.002 (0.005) & -0.000 (0.005) & 1.002 \\
$\alpha$ & 0.200 & 0.005 (0.015) & -0.000 (0.011) & 1.388 & & 0.003 (0.011) & -0.000 (0.008) & 1.376 & & 0.003 (0.009) & -0.000 (0.007) & 1.361 \\
\hline \hline
\end{tabular}
\end{footnotesize}
\smallskip
\begin{scriptsize}
\parbox{0.98\textwidth}{\emph{Note.} 
We simulate the process in the caption of the table 10,000 times on the interval $[0,T]$, where $T$ is interpreted as the number of days. 
There are $N = 12$ observations per unit interval corresponding to twelve every day.
The true value of the parameter vector appears to the left in Panel A -- E. 
We estimate $\theta$ with the maximum composite likelihod estimation (MCLE) procedure developed in the main text, and benchmark it against a method-of-moments estimator (MME).
The table shows the Monte Carlo average of each parameter estimate across simulations (standard deviation in parenthesis).
The last column reports the RMSE ratio defined as $\text{RMSE}_{r} = \text{RMSE}( \hat{ \theta}_{ \text{MCLE}}) / \text{RMSE}( \hat{ \theta}_{ \text{MME}})$. 
}
\end{scriptsize}
\end{center}
\end{sidewaystable}

\begin{sidewaystable}[ht!]
\setlength{ \tabcolsep}{0.15cm}
\begin{center}
\caption{Parameter estimation of the Cauchy class [$q = 2$ and $\mu$ known].}
\label{table:sim-cc-q=2-N=12-mu=0}
\begin{footnotesize}
\begin{tabular}{lrrrrrrrrrrrrrr}
\hline \hline
Parameter & Value & \multicolumn{3}{c}{$T = 1{,}095$} && \multicolumn{3}{c}{$T = 1{,}825$} && \multicolumn{3}{c}{$T = 2{,}555$} \\
\cline{3-5} \cline{7-9} \cline{11-13}
&& \multicolumn{1}{c}{MCLE} & \multicolumn{1}{c}{MME} & \multicolumn{1}{c}{RMSE$_{r}$} && \multicolumn{1}{c}{MCLE} & \multicolumn{1}{c}{MME} & \multicolumn{1}{c}{RMSE$_{r}$} && \multicolumn{1}{c}{MCLE} & \multicolumn{1}{c}{MME} & \multicolumn{1}{c}{RMSE$_{r}$} \\
\hline 
Panel A: \\
$\mu$ & 0.000 & & & & & & & & & & & \\
$\beta$ & 0.250 & 0.069 (0.138) & -0.078 (0.098) & 1.299 & & 0.058 (0.120) & -0.096 (0.082) & 1.196 & & 0.052 (0.112) & -0.105 (0.072) & 1.137 \\
$\nu$ & 1.250 & -0.001 (0.072) & -0.001 (0.072) & 1.000 & & -0.001 (0.067) & -0.001 (0.067) & 1.000 & & -0.001 (0.063) & -0.001 (0.063) & 1.000 \\
$\alpha$ & -0.450 & 0.006 (0.014) & -0.031 (0.010) & 0.607 & & 0.005 (0.012) & -0.032 (0.008) & 0.528 & & 0.004 (0.011) & -0.032 (0.007) & 0.489 \\
Panel B: \\
$\mu$ & 0.000 & & & & & & & & & & & \\
$\beta$ & 0.500 & 0.032 (0.127) & -0.147 (0.121) & 0.810 & & 0.023 (0.106) & -0.159 (0.102) & 0.704 & & 0.019 (0.094) & -0.167 (0.091) & 0.639 \\
$\nu$ & 0.750 & -0.000 (0.021) & -0.000 (0.021) & 1.000 & & -0.000 (0.018) & -0.000 (0.018) & 1.000 & & -0.000 (0.016) & -0.000 (0.016) & 1.000 \\
$\alpha$ & -0.400 & 0.002 (0.013) & -0.059 (0.013) & 0.290 & & 0.002 (0.010) & -0.059 (0.010) & 0.244 & & 0.001 (0.009) & -0.059 (0.009) & 0.218 \\
Panel C: \\
$\mu$ & 0.000 & & & & & & & & & & & \\
$\beta$ & 0.750 & 0.014 (0.102) & 0.023 (0.138) & 0.740 & & 0.008 (0.081) & 0.004 (0.123) & 0.658 & & 0.006 (0.069) & -0.005 (0.111) & 0.622 \\
$\nu$ & 0.500 & -0.000 (0.016) & -0.000 (0.016) & 1.002 & & -0.000 (0.013) & -0.000 (0.013) & 1.002 & & -0.000 (0.011) & -0.000 (0.011) & 1.002 \\
$\alpha$ & -0.200 & 0.001 (0.011) & -0.060 (0.013) & 0.248 & & 0.000 (0.009) & -0.060 (0.010) & 0.199 & & 0.000 (0.007) & -0.060 (0.008) & 0.170 \\
Panel D: \\
$\mu$ & 0.000 & & & & & & & & & & & \\
$\beta$ & 1.000 & 0.011 (0.103) & 0.025 (0.158) & 0.650 & & 0.006 (0.081) & 0.012 (0.132) & 0.609 & & 0.004 (0.068) & 0.006 (0.116) & 0.591 \\
$\nu$ & 0.300 & -0.000 (0.009) & -0.000 (0.009) & 1.002 & & -0.000 (0.007) & -0.000 (0.007) & 1.002 & & -0.000 (0.006) & -0.000 (0.006) & 1.002 \\
$\alpha$ & 0.000 & 0.001 (0.010) & -0.019 (0.012) & 0.587 & & 0.000 (0.008) & -0.018 (0.009) & 0.507 & & 0.000 (0.007) & -0.018 (0.008) & 0.450 \\
Panel E: \\
$\mu$ & 0.000 & & & & & & & & & & & \\
$\beta$ & 1.250 & 0.010 (0.113) & 0.040 (0.184) & 0.606 & & 0.006 (0.087) & 0.030 (0.146) & 0.592 & & 0.004 (0.074) & 0.026 (0.125) & 0.586 \\
$\nu$ & 0.200 & -0.000 (0.006) & -0.000 (0.006) & 1.002 & & -0.000 (0.005) & -0.000 (0.005) & 1.002 & & -0.000 (0.004) & -0.000 (0.004) & 1.002 \\
$\alpha$ & 0.200 & 0.000 (0.010) & 0.041 (0.011) & 0.326 & & 0.000 (0.008) & 0.041 (0.008) & 0.259 & & 0.000 (0.007) & 0.041 (0.007) & 0.222 \\
\hline \hline
\end{tabular}
\end{footnotesize}
\smallskip
\begin{scriptsize}
\parbox{0.98\textwidth}{\emph{Note.} 
We simulate the process in the caption of the table 10,000 times on the interval $[0,T]$, where $T$ is interpreted as the number of days. 
There are $N = 12$ observations per unit interval corresponding to twelve every day.
The true value of the parameter vector appears to the left in Panel A -- E. 
We estimate $\theta$ with the maximum composite likelihod estimation (MCLE) procedure developed in the main text, and benchmark it against a method-of-moments estimator (MME).
The table shows the Monte Carlo average of each parameter estimate across simulations (standard deviation in parenthesis).
The last column reports the RMSE ratio defined as $\text{RMSE}_{r} = \text{RMSE}( \hat{ \theta}_{ \text{MCLE}}) / \text{RMSE}( \hat{ \theta}_{ \text{MME}})$. 
}
\end{scriptsize}
\end{center}
\end{sidewaystable}

\begin{sidewaystable}[ht!]
\setlength{ \tabcolsep}{0.15cm}
\begin{center}
\caption{Parameter estimation of the Cauchy class [$q = 2$ and $\mu$ estimated].}
\label{table:sim-cc-q=2-N=12-mu=1}
\begin{footnotesize}
\begin{tabular}{lrrrrrrrrrrrrrr}
\hline \hline
Parameter & Value & \multicolumn{3}{c}{$T = 1{,}095$} && \multicolumn{3}{c}{$T = 1{,}825$} && \multicolumn{3}{c}{$T = 2{,}555$} \\
\cline{3-5} \cline{7-9} \cline{11-13}
&& \multicolumn{1}{c}{MCLE} & \multicolumn{1}{c}{MME} & \multicolumn{1}{c}{RMSE$_{r}$} && \multicolumn{1}{c}{MCLE} & \multicolumn{1}{c}{MME} & \multicolumn{1}{c}{RMSE$_{r}$} && \multicolumn{1}{c}{MCLE} & \multicolumn{1}{c}{MME} & \multicolumn{1}{c}{RMSE$_{r}$} \\
\hline 
Panel A: \\
$\mu$ & 0.000 & 0.001 (0.360) & 0.001 (0.360) & 1.000 & & 0.000 (0.345) & 0.000 (0.345) & 1.000 & & 0.000 (0.336) & 0.000 (0.336) & 1.000 \\
$\beta$ & 0.250 & 0.218 (0.107) & 0.038 (0.109) & 1.668 & & 0.187 (0.086) & -0.005 (0.093) & 1.688 & & 0.170 (0.075) & -0.031 (0.081) & 1.684 \\
$\nu$ & 1.250 & -0.052 (0.016) & -0.052 (0.016) & 1.000 & & -0.048 (0.015) & -0.048 (0.015) & 1.000 & & -0.046 (0.014) & -0.046 (0.014) & 1.000 \\
$\alpha$ & -0.450 & 0.019 (0.011) & -0.031 (0.010) & 0.725 & & 0.017 (0.009) & -0.032 (0.008) & 0.614 & & 0.015 (0.007) & -0.032 (0.007) & 0.552 \\
Panel B: \\
$\mu$ & 0.000 & 0.001 (0.139) & 0.001 (0.139) & 1.000 & & 0.001 (0.127) & 0.001 (0.127) & 1.000 & & 0.001 (0.119) & 0.001 (0.119) & 1.000 \\
$\beta$ & 0.500 & 0.123 (0.104) & -0.033 (0.107) & 1.237 & & 0.097 (0.084) & -0.066 (0.090) & 1.066 & & 0.083 (0.074) & -0.089 (0.081) & 0.920 \\
$\nu$ & 0.750 & -0.013 (0.011) & -0.013 (0.011) & 1.000 & & -0.011 (0.010) & -0.011 (0.010) & 1.000 & & -0.009 (0.009) & -0.009 (0.009) & 1.000 \\
$\alpha$ & -0.400 & 0.011 (0.010) & -0.059 (0.013) & 0.289 & & 0.008 (0.008) & -0.059 (0.010) & 0.236 & & 0.007 (0.007) & -0.059 (0.009) & 0.207 \\
Panel C: \\
$\mu$ & 0.000 & 0.001 (0.077) & 0.001 (0.077) & 1.001 & & 0.001 (0.065) & 0.001 (0.065) & 1.000 & & 0.000 (0.058) & 0.000 (0.058) & 1.000 \\
$\beta$ & 0.750 & 0.054 (0.097) & 0.139 (0.085) & 0.802 & & 0.037 (0.077) & 0.096 (0.085) & 0.746 & & 0.028 (0.066) & 0.070 (0.083) & 0.710 \\
$\nu$ & 0.500 & -0.006 (0.014) & -0.006 (0.014) & 1.001 & & -0.004 (0.011) & -0.004 (0.011) & 1.001 & & -0.003 (0.010) & -0.003 (0.010) & 1.001 \\
$\alpha$ & -0.200 & 0.005 (0.010) & -0.060 (0.013) & 0.244 & & 0.003 (0.008) & -0.060 (0.010) & 0.195 & & 0.002 (0.007) & -0.060 (0.008) & 0.167 \\
Panel D: \\
$\mu$ & 0.000 & 0.000 (0.032) & 0.000 (0.032) & 1.001 & & 0.000 (0.026) & 0.000 (0.026) & 1.000 & & 0.000 (0.022) & 0.000 (0.022) & 1.000 \\
$\beta$ & 1.000 & 0.033 (0.102) & 0.116 (0.120) & 0.723 & & 0.020 (0.080) & 0.077 (0.106) & 0.681 & & 0.014 (0.068) & 0.056 (0.097) & 0.654 \\
$\nu$ & 0.300 & -0.002 (0.009) & -0.002 (0.009) & 1.002 & & -0.001 (0.007) & -0.001 (0.007) & 1.002 & & -0.001 (0.006) & -0.001 (0.006) & 1.002 \\
$\alpha$ & 0.000 & 0.002 (0.010) & -0.019 (0.012) & 0.581 & & 0.001 (0.008) & -0.018 (0.009) & 0.500 & & 0.001 (0.007) & -0.018 (0.008) & 0.446 \\
Panel E: \\
$\mu$ & 0.000 & 0.000 (0.016) & 0.000 (0.016) & 1.001 & & 0.000 (0.013) & 0.000 (0.013) & 1.000 & & 0.000 (0.011) & 0.000 (0.011) & 1.000 \\
$\beta$ & 1.250 & 0.025 (0.113) & 0.108 (0.157) & 0.657 & & 0.015 (0.088) & 0.074 (0.130) & 0.629 & & 0.010 (0.074) & 0.059 (0.115) & 0.613 \\
$\nu$ & 0.200 & -0.001 (0.006) & -0.001 (0.006) & 1.002 & & -0.001 (0.005) & -0.000 (0.005) & 1.002 & & -0.000 (0.004) & -0.000 (0.004) & 1.002 \\
$\alpha$ & 0.200 & 0.001 (0.010) & 0.041 (0.011) & 0.325 & & 0.001 (0.008) & 0.041 (0.008) & 0.259 & & 0.000 (0.007) & 0.041 (0.007) & 0.222 \\
\hline \hline
\end{tabular}
\end{footnotesize}
\smallskip
\begin{scriptsize}
\parbox{0.98\textwidth}{\emph{Note.} 
We simulate the process in the caption of the table 10,000 times on the interval $[0,T]$, where $T$ is interpreted as the number of days. 
There are $N = 12$ observations per unit interval corresponding to twelve every day.
The true value of the parameter vector appears to the left in Panel A -- E. 
We estimate $\theta$ with the maximum composite likelihod estimation (MCLE) procedure developed in the main text, and benchmark it against a method-of-moments estimator (MME).
The table shows the Monte Carlo average of each parameter estimate across simulations (standard deviation in parenthesis).
The last column reports the RMSE ratio defined as $\text{RMSE}_{r} = \text{RMSE}( \hat{ \theta}_{ \text{MCLE}}) / \text{RMSE}( \hat{ \theta}_{ \text{MME}})$. 
}
\end{scriptsize}
\end{center}
\end{sidewaystable}

\begin{sidewaystable}[ht!]
\setlength{ \tabcolsep}{0.15cm}
\begin{center}
\caption{Parameter estimation of the fOU process [$q = 3$ and $\mu$ known].}
\label{table:sim-ou-q=3-N=1-mu=0}
\begin{footnotesize}
\begin{tabular}{lrrrrrrrrrrrrrr}
\hline \hline
Parameter & Value & \multicolumn{3}{c}{$T = 1{,}095$} && \multicolumn{3}{c}{$T = 1{,}825$} && \multicolumn{3}{c}{$T = 2{,}555$} \\
\cline{3-5} \cline{7-9} \cline{11-13}
&& \multicolumn{1}{c}{MCLE} & \multicolumn{1}{c}{MME} & \multicolumn{1}{c}{RMSE$_{r}$} && \multicolumn{1}{c}{MCLE} & \multicolumn{1}{c}{MME} & \multicolumn{1}{c}{RMSE$_{r}$} && \multicolumn{1}{c}{MCLE} & \multicolumn{1}{c}{MME} & \multicolumn{1}{c}{RMSE$_{r}$} \\
\hline 
Panel A: \\
$\mu$ & 0.000 & & & & & & & & & & & \\
$\kappa$ & 0.005 & 0.000 (0.006) & 0.014 (0.033) & 0.158 & & 0.000 (0.004) & 0.010 (0.024) & 0.152 & & 0.000 (0.003) & 0.008 (0.020) & 0.148 \\
$\nu$ & 1.250 & -0.005 (0.033) & -0.000 (0.032) & 1.036 & & -0.003 (0.026) & -0.000 (0.025) & 1.039 & & -0.003 (0.022) & -0.001 (0.021) & 1.041 \\
$\alpha$ & -0.450 & 0.002 (0.010) & 0.004 (0.038) & 0.270 & & 0.001 (0.008) & 0.002 (0.032) & 0.250 & & 0.001 (0.007) & 0.001 (0.028) & 0.237 \\
Panel B: \\
$\mu$ & 0.000 & & & & & & & & & & & \\
$\kappa$ & 0.010 & 0.002 (0.007) & 0.009 (0.025) & 0.278 & & 0.001 (0.005) & 0.006 (0.018) & 0.268 & & 0.001 (0.004) & 0.004 (0.015) & 0.265 \\
$\nu$ & 0.750 & -0.002 (0.021) & -0.001 (0.019) & 1.141 & & -0.001 (0.017) & -0.001 (0.015) & 1.141 & & -0.001 (0.014) & -0.001 (0.012) & 1.141 \\
$\alpha$ & -0.400 & 0.002 (0.016) & -0.001 (0.045) & 0.361 & & 0.001 (0.012) & -0.000 (0.035) & 0.347 & & 0.001 (0.010) & -0.000 (0.030) & 0.345 \\
Panel C: \\
$\mu$ & 0.000 & & & & & & & & & & & \\
$\kappa$ & 0.015 & 0.002 (0.008) & 0.003 (0.013) & 0.645 & & 0.001 (0.006) & 0.002 (0.009) & 0.632 & & 0.001 (0.005) & 0.001 (0.008) & 0.633 \\
$\nu$ & 0.500 & -0.000 (0.016) & -0.001 (0.012) & 1.308 & & -0.000 (0.012) & -0.000 (0.009) & 1.304 & & -0.000 (0.010) & -0.000 (0.008) & 1.300 \\
$\alpha$ & -0.200 & 0.002 (0.031) & -0.002 (0.044) & 0.705 & & 0.001 (0.024) & -0.001 (0.034) & 0.698 & & 0.001 (0.020) & -0.000 (0.029) & 0.703 \\
Panel D: \\
$\mu$ & 0.000 & & & & & & & & & & & \\
$\kappa$ & 0.035 & 0.004 (0.016) & 0.002 (0.015) & 1.093 & & 0.002 (0.011) & 0.001 (0.011) & 0.948 & & 0.001 (0.009) & 0.001 (0.010) & 0.941 \\
$\nu$ & 0.300 & 0.002 (0.009) & -0.000 (0.009) & 1.035 & & 0.001 (0.006) & -0.000 (0.007) & 0.920 & & 0.001 (0.005) & -0.000 (0.006) & 0.902 \\
$\alpha$ & 0.000 & -0.000 (0.041) & -0.002 (0.041) & 1.002 & & -0.001 (0.031) & -0.002 (0.032) & 0.977 & & -0.000 (0.026) & -0.001 (0.027) & 0.975 \\
Panel E: \\
$\mu$ & 0.000 & & & & & & & & & & & \\
$\kappa$ & 0.070 & 0.011 (0.039) & 0.002 (0.022) & 1.841 & & 0.008 (0.031) & 0.000 (0.017) & 1.907 & & 0.006 (0.026) & -0.001 (0.014) & 1.894 \\
$\nu$ & 0.200 & 0.010 (0.022) & -0.001 (0.010) & 2.414 & & 0.007 (0.017) & -0.001 (0.007) & 2.385 & & 0.005 (0.014) & -0.001 (0.006) & 2.282 \\
$\alpha$ & 0.200 & -0.003 (0.073) & -0.006 (0.038) & 1.900 & & 0.001 (0.062) & -0.005 (0.030) & 2.068 & & 0.002 (0.054) & -0.005 (0.025) & 2.140 \\
\hline \hline
\end{tabular}
\end{footnotesize}
\smallskip
\begin{scriptsize}
\parbox{0.98\textwidth}{\emph{Note.} 
We simulate the process in the caption of the table 10,000 times on the interval $[0,T]$, where $T$ is interpreted as the number of days. 
There is $N = 1$ observation per unit interval, corresponding to one every day.
The true value of the parameter vector appears to the left in Panel A -- E. 
We estimate $\theta$ with the maximum composite likelihod estimation (MCLE) procedure developed in the main text, and benchmark it against a method-of-moments estimator (MME).
The table shows the Monte Carlo average of each parameter estimate across simulations (standard deviation in parenthesis).
The last column reports the RMSE ratio defined as $\text{RMSE}_{r} = \text{RMSE}( \hat{ \theta}_{ \text{MCLE}}) / \text{RMSE}( \hat{ \theta}_{ \text{MME}})$. 
}
\end{scriptsize}
\end{center}
\end{sidewaystable}

\begin{sidewaystable}[ht!]
\setlength{ \tabcolsep}{0.15cm}
\begin{center}
\caption{Parameter estimation of the fOU process [$q = 3$ and $\mu$ estimated].}
\label{table:sim-ou-q=3-N=1-mu=1}
\begin{footnotesize}
\begin{tabular}{lrrrrrrrrrrrrrr}
\hline \hline
Parameter & Value & \multicolumn{3}{c}{$T = 1{,}095$} && \multicolumn{3}{c}{$T = 1{,}825$} && \multicolumn{3}{c}{$T = 2{,}555$} \\
\cline{3-5} \cline{7-9} \cline{11-13}
&& \multicolumn{1}{c}{MCLE} & \multicolumn{1}{c}{MME} & \multicolumn{1}{c}{RMSE$_{r}$} && \multicolumn{1}{c}{MCLE} & \multicolumn{1}{c}{MME} & \multicolumn{1}{c}{RMSE$_{r}$} && \multicolumn{1}{c}{MCLE} & \multicolumn{1}{c}{MME} & \multicolumn{1}{c}{RMSE$_{r}$} \\
\hline 
Panel A: \\
$\mu$ & 0.000 & 0.001 (0.252) & 0.000 (0.259) & 0.975 & & -0.001 (0.158) & -0.001 (0.164) & 0.966 & & -0.001 (0.116) & -0.001 (0.120) & 0.964 \\
$\kappa$ & 0.005 & 0.002 (0.007) & 0.017 (0.037) & 0.179 & & 0.001 (0.004) & 0.011 (0.026) & 0.158 & & 0.001 (0.003) & 0.009 (0.021) & 0.153 \\
$\nu$ & 1.250 & -0.006 (0.033) & -0.000 (0.032) & 1.032 & & -0.004 (0.026) & -0.000 (0.025) & 1.037 & & -0.003 (0.022) & -0.001 (0.021) & 1.036 \\
$\alpha$ & -0.450 & 0.003 (0.010) & 0.004 (0.038) & 0.275 & & 0.002 (0.008) & 0.002 (0.032) & 0.251 & & 0.001 (0.007) & 0.001 (0.028) & 0.237 \\
Panel B: \\
$\mu$ & 0.000 & 0.001 (0.093) & 0.001 (0.094) & 0.992 & & -0.001 (0.061) & -0.001 (0.062) & 0.985 & & -0.000 (0.046) & -0.000 (0.047) & 0.983 \\
$\kappa$ & 0.010 & 0.003 (0.008) & 0.010 (0.025) & 0.300 & & 0.002 (0.005) & 0.006 (0.018) & 0.276 & & 0.001 (0.004) & 0.005 (0.015) & 0.271 \\
$\nu$ & 0.750 & -0.003 (0.021) & -0.001 (0.019) & 1.137 & & -0.002 (0.017) & -0.001 (0.015) & 1.140 & & -0.001 (0.014) & -0.001 (0.012) & 1.142 \\
$\alpha$ & -0.400 & 0.003 (0.016) & -0.001 (0.045) & 0.363 & & 0.002 (0.012) & -0.001 (0.035) & 0.348 & & 0.001 (0.010) & -0.000 (0.030) & 0.346 \\
Panel C: \\
$\mu$ & 0.000 & 0.003 (0.098) & 0.002 (0.097) & 1.015 & & 0.000 (0.070) & 0.000 (0.069) & 1.006 & & 0.000 (0.055) & 0.000 (0.055) & 1.004 \\
$\kappa$ & 0.015 & 0.004 (0.009) & 0.004 (0.013) & 0.692 & & 0.002 (0.006) & 0.002 (0.009) & 0.658 & & 0.002 (0.005) & 0.002 (0.008) & 0.650 \\
$\nu$ & 0.500 & -0.001 (0.015) & -0.001 (0.012) & 1.261 & & -0.001 (0.012) & -0.000 (0.009) & 1.286 & & -0.000 (0.010) & -0.000 (0.008) & 1.295 \\
$\alpha$ & -0.200 & 0.006 (0.030) & -0.002 (0.044) & 0.702 & & 0.003 (0.024) & -0.001 (0.034) & 0.698 & & 0.002 (0.020) & -0.000 (0.029) & 0.704 \\
Panel D: \\
$\mu$ & 0.000 & 0.001 (0.068) & 0.001 (0.067) & 1.014 & & 0.000 (0.053) & 0.000 (0.052) & 1.008 & & 0.000 (0.044) & 0.000 (0.044) & 1.006 \\
$\kappa$ & 0.035 & 0.007 (0.015) & 0.004 (0.016) & 1.006 & & 0.004 (0.011) & 0.002 (0.012) & 0.950 & & 0.003 (0.009) & 0.002 (0.010) & 0.945 \\
$\nu$ & 0.300 & 0.003 (0.008) & -0.000 (0.009) & 1.002 & & 0.002 (0.006) & -0.000 (0.007) & 0.933 & & 0.001 (0.005) & -0.000 (0.006) & 0.910 \\
$\alpha$ & 0.000 & 0.004 (0.038) & -0.002 (0.041) & 0.919 & & 0.002 (0.030) & -0.002 (0.032) & 0.933 & & 0.002 (0.025) & -0.001 (0.027) & 0.947 \\
Panel E: \\
$\mu$ & 0.000 & 0.001 (0.069) & 0.001 (0.068) & 1.010 & & 0.000 (0.059) & 0.000 (0.059) & 1.006 & & 0.000 (0.053) & 0.000 (0.053) & 1.005 \\
$\kappa$ & 0.070 & 0.006 (0.026) & 0.007 (0.022) & 1.150 & & 0.005 (0.021) & 0.004 (0.017) & 1.267 & & 0.004 (0.018) & 0.003 (0.014) & 1.315 \\
$\nu$ & 0.200 & 0.009 (0.013) & -0.001 (0.010) & 1.522 & & 0.006 (0.011) & -0.001 (0.007) & 1.565 & & 0.005 (0.009) & -0.001 (0.006) & 1.574 \\
$\alpha$ & 0.200 & -0.025 (0.049) & -0.006 (0.038) & 1.350 & & -0.017 (0.043) & -0.005 (0.030) & 1.478 & & -0.013 (0.039) & -0.005 (0.025) & 1.579 \\
\hline \hline
\end{tabular}
\end{footnotesize}
\smallskip
\begin{scriptsize}
\parbox{0.98\textwidth}{\emph{Note.} 
We simulate the process in the caption of the table 10,000 times on the interval $[0,T]$, where $T$ is interpreted as the number of days. 
There is $N = 1$ observation per unit interval, corresponding to one every day.
The true value of the parameter vector appears to the left in Panel A -- E. 
We estimate $\theta$ with the maximum composite likelihod estimation (MCLE) procedure developed in the main text, and benchmark it against a method-of-moments estimator (MME).
The table shows the Monte Carlo average of each parameter estimate across simulations (standard deviation in parenthesis).
The last column reports the RMSE ratio defined as $\text{RMSE}_{r} = \text{RMSE}( \hat{ \theta}_{ \text{MCLE}}) / \text{RMSE}( \hat{ \theta}_{ \text{MME}})$. 
}
\end{scriptsize}
\end{center}
\end{sidewaystable}

\begin{sidewaystable}[ht!]
\setlength{ \tabcolsep}{0.15cm}
\begin{center}
\caption{Parameter estimation of the Cauchy class [$q = 3$ and $\mu$ known].}
\label{table:sim-cc-q=3-N=1-mu=0}
\begin{footnotesize}
\begin{tabular}{lrrrrrrrrrrrrrr}
\hline \hline
Parameter & Value & \multicolumn{3}{c}{$T = 1{,}095$} && \multicolumn{3}{c}{$T = 1{,}825$} && \multicolumn{3}{c}{$T = 2{,}555$} \\
\cline{3-5} \cline{7-9} \cline{11-13}
&& \multicolumn{1}{c}{MCLE} & \multicolumn{1}{c}{MME} & \multicolumn{1}{c}{RMSE$_{r}$} && \multicolumn{1}{c}{MCLE} & \multicolumn{1}{c}{MME} & \multicolumn{1}{c}{RMSE$_{r}$} && \multicolumn{1}{c}{MCLE} & \multicolumn{1}{c}{MME} & \multicolumn{1}{c}{RMSE$_{r}$} \\
\hline 
Panel A: \\
$\mu$ & 0.000 & & & & & & & & & & & \\
$\beta$ & 0.250 & 0.160 (0.374) & 0.051 (0.182) & 2.104 & & 0.120 (0.247) & 0.016 (0.154) & 1.690 & & 0.101 (0.203) & -0.005 (0.140) & 1.534 \\
$\nu$ & 1.250 & -0.002 (0.075) & -0.001 (0.074) & 1.003 & & -0.001 (0.068) & -0.001 (0.068) & 1.002 & & -0.001 (0.064) & -0.001 (0.064) & 1.002 \\
$\alpha$ & -0.450 & 0.022 (0.055) & -0.020 (0.028) & 1.832 & & 0.016 (0.036) & -0.024 (0.022) & 1.359 & & 0.013 (0.029) & -0.026 (0.019) & 1.164 \\
Panel B: \\
$\mu$ & 0.000 & & & & & & & & & & & \\
$\beta$ & 0.500 & 0.112 (0.341) & -0.124 (0.204) & 1.576 & & 0.074 (0.232) & -0.149 (0.178) & 1.147 & & 0.056 (0.193) & -0.163 (0.165) & 0.980 \\
$\nu$ & 0.750 & -0.000 (0.026) & -0.000 (0.026) & 1.006 & & -0.000 (0.021) & -0.000 (0.021) & 1.004 & & -0.000 (0.018) & -0.000 (0.018) & 1.003 \\
$\alpha$ & -0.400 & 0.018 (0.059) & -0.063 (0.032) & 1.098 & & 0.012 (0.040) & -0.067 (0.026) & 0.764 & & 0.009 (0.033) & -0.068 (0.023) & 0.635 \\
Panel C: \\
$\mu$ & 0.000 & & & & & & & & & & & \\
$\beta$ & 0.750 & 0.045 (0.196) & -0.057 (0.122) & 1.541 & & 0.028 (0.151) & -0.062 (0.097) & 1.431 & & 0.021 (0.128) & -0.066 (0.088) & 1.295 \\
$\nu$ & 0.500 & -0.000 (0.018) & -0.000 (0.018) & 1.007 & & -0.000 (0.014) & -0.000 (0.014) & 1.005 & & -0.000 (0.012) & -0.000 (0.012) & 1.004 \\
$\alpha$ & -0.200 & 0.013 (0.065) & -0.170 (0.045) & 0.513 & & 0.008 (0.050) & -0.169 (0.035) & 0.406 & & 0.006 (0.043) & -0.168 (0.030) & 0.350 \\
Panel D: \\
$\mu$ & 0.000 & & & & & & & & & & & \\
$\beta$ & 1.000 & 0.030 (0.197) & -0.118 (0.106) & 1.476 & & 0.019 (0.150) & -0.126 (0.088) & 1.206 & & 0.013 (0.127) & -0.131 (0.079) & 1.048 \\
$\nu$ & 0.300 & -0.000 (0.010) & -0.000 (0.010) & 1.007 & & -0.000 (0.008) & -0.000 (0.008) & 1.005 & & -0.000 (0.006) & -0.000 (0.006) & 1.003 \\
$\alpha$ & 0.000 & 0.010 (0.082) & -0.259 (0.044) & 0.439 & & 0.006 (0.063) & -0.258 (0.034) & 0.341 & & 0.004 (0.054) & -0.258 (0.029) & 0.292 \\
Panel E: \\
$\mu$ & 0.000 & & & & & & & & & & & \\
$\beta$ & 1.250 & 0.026 (0.215) & -0.196 (0.115) & 1.198 & & 0.016 (0.165) & -0.202 (0.092) & 0.972 & & 0.012 (0.140) & -0.204 (0.081) & 0.846 \\
$\nu$ & 0.200 & -0.000 (0.006) & -0.000 (0.006) & 1.006 & & -0.000 (0.005) & -0.000 (0.005) & 1.004 & & -0.000 (0.004) & -0.000 (0.004) & 1.003 \\
$\alpha$ & 0.200 & 0.008 (0.097) & -0.349 (0.042) & 0.386 & & 0.005 (0.074) & -0.348 (0.033) & 0.300 & & 0.004 (0.063) & -0.348 (0.028) & 0.256 \\
\hline \hline
\end{tabular}
\end{footnotesize}
\smallskip
\begin{scriptsize}
\parbox{0.98\textwidth}{\emph{Note.} 
We simulate the process in the caption of the table 10,000 times on the interval $[0,T]$, where $T$ is interpreted as the number of days. 
There is $N = 1$ observation per unit interval, corresponding to one every day.
The true value of the parameter vector appears to the left in Panel A -- E. 
We estimate $\theta$ with the maximum composite likelihod estimation (MCLE) procedure developed in the main text, and benchmark it against a method-of-moments estimator (MME).
The table shows the Monte Carlo average of each parameter estimate across simulations (standard deviation in parenthesis).
The last column reports the RMSE ratio defined as $\text{RMSE}_{r} = \text{RMSE}( \hat{ \theta}_{ \text{MCLE}}) / \text{RMSE}( \hat{ \theta}_{ \text{MME}})$. 
}
\end{scriptsize}
\end{center}
\end{sidewaystable}

\begin{sidewaystable}[ht!]
\setlength{ \tabcolsep}{0.15cm}
\begin{center}
\caption{Parameter estimation of the Cauchy class [$q = 3$ and $\mu$ estimated].}
\label{table:sim-cc-q=3-N=1-mu=1}
\begin{footnotesize}
\begin{tabular}{lrrrrrrrrrrrrrr}
\hline \hline
Parameter & Value & \multicolumn{3}{c}{$T = 1{,}095$} && \multicolumn{3}{c}{$T = 1{,}825$} && \multicolumn{3}{c}{$T = 2{,}555$} \\
\cline{3-5} \cline{7-9} \cline{11-13}
&& \multicolumn{1}{c}{MCLE} & \multicolumn{1}{c}{MME} & \multicolumn{1}{c}{RMSE$_{r}$} && \multicolumn{1}{c}{MCLE} & \multicolumn{1}{c}{MME} & \multicolumn{1}{c}{RMSE$_{r}$} && \multicolumn{1}{c}{MCLE} & \multicolumn{1}{c}{MME} & \multicolumn{1}{c}{RMSE$_{r}$} \\
\hline 
Panel A: \\
$\mu$ & 0.000 & 0.005 (0.361) & 0.005 (0.360) & 1.002 & & 0.005 (0.346) & 0.005 (0.345) & 1.001 & & 0.005 (0.336) & 0.005 (0.336) & 1.001 \\
$\beta$ & 0.250 & 0.484 (0.588) & 0.205 (0.197) & 2.785 & & 0.368 (0.301) & 0.153 (0.164) & 2.023 & & 0.316 (0.218) & 0.126 (0.150) & 1.790 \\
$\nu$ & 1.250 & -0.053 (0.029) & -0.053 (0.029) & 1.004 & & -0.049 (0.024) & -0.049 (0.024) & 1.003 & & -0.046 (0.021) & -0.046 (0.021) & 1.002 \\
$\alpha$ & -0.450 & 0.061 (0.087) & -0.020 (0.028) & 3.101 & & 0.047 (0.045) & -0.024 (0.022) & 2.015 & & 0.040 (0.033) & -0.026 (0.019) & 1.641 \\
Panel B: \\
$\mu$ & 0.000 & 0.002 (0.141) & 0.002 (0.140) & 1.005 & & 0.002 (0.127) & 0.002 (0.127) & 1.003 & & 0.002 (0.120) & 0.002 (0.119) & 1.002 \\
$\beta$ & 0.500 & 0.324 (0.410) & -0.011 (0.205) & 2.292 & & 0.230 (0.244) & -0.046 (0.176) & 1.635 & & 0.186 (0.193) & -0.066 (0.163) & 1.377 \\
$\nu$ & 0.750 & -0.013 (0.019) & -0.013 (0.018) & 1.005 & & -0.011 (0.015) & -0.011 (0.015) & 1.004 & & -0.010 (0.013) & -0.010 (0.013) & 1.003 \\
$\alpha$ & -0.400 & 0.049 (0.071) & -0.063 (0.032) & 1.434 & & 0.035 (0.043) & -0.067 (0.026) & 0.926 & & 0.028 (0.034) & -0.068 (0.023) & 0.741 \\
Panel C: \\
$\mu$ & 0.000 & 0.001 (0.078) & 0.001 (0.077) & 1.009 & & 0.001 (0.065) & 0.001 (0.065) & 1.005 & & 0.001 (0.058) & 0.001 (0.058) & 1.004 \\
$\beta$ & 0.750 & 0.138 (0.195) & 0.026 (0.108) & 1.998 & & 0.092 (0.148) & 0.011 (0.077) & 2.097 & & 0.071 (0.125) & 0.002 (0.064) & 2.112 \\
$\nu$ & 0.500 & -0.006 (0.016) & -0.006 (0.016) & 1.005 & & -0.004 (0.013) & -0.004 (0.013) & 1.005 & & -0.004 (0.011) & -0.003 (0.011) & 1.003 \\
$\alpha$ & -0.200 & 0.039 (0.064) & -0.169 (0.045) & 0.544 & & 0.026 (0.049) & -0.169 (0.035) & 0.420 & & 0.020 (0.041) & -0.168 (0.030) & 0.357 \\
Panel D: \\
$\mu$ & 0.000 & -0.000 (0.032) & -0.000 (0.032) & 1.008 & & 0.000 (0.026) & -0.000 (0.026) & 1.006 & & 0.000 (0.022) & 0.000 (0.022) & 1.004 \\
$\beta$ & 1.000 & 0.090 (0.199) & -0.051 (0.084) & 2.273 & & 0.057 (0.151) & -0.073 (0.068) & 1.835 & & 0.042 (0.127) & -0.084 (0.060) & 1.544 \\
$\nu$ & 0.300 & -0.002 (0.010) & -0.002 (0.009) & 1.006 & & -0.001 (0.007) & -0.001 (0.007) & 1.005 & & -0.001 (0.006) & -0.001 (0.006) & 1.003 \\
$\alpha$ & 0.000 & 0.031 (0.081) & -0.259 (0.044) & 0.445 & & 0.019 (0.062) & -0.258 (0.034) & 0.344 & & 0.014 (0.053) & -0.258 (0.029) & 0.294 \\
Panel E: \\
$\mu$ & 0.000 & 0.000 (0.016) & 0.000 (0.016) & 1.008 & & 0.000 (0.013) & 0.000 (0.013) & 1.005 & & 0.000 (0.011) & 0.000 (0.011) & 1.004 \\
$\beta$ & 1.250 & 0.069 (0.215) & -0.142 (0.097) & 1.576 & & 0.042 (0.164) & -0.163 (0.078) & 1.201 & & 0.031 (0.139) & -0.173 (0.068) & 1.007 \\
$\nu$ & 0.200 & -0.001 (0.006) & -0.001 (0.006) & 1.005 & & -0.000 (0.005) & -0.000 (0.005) & 1.005 & & -0.000 (0.004) & -0.000 (0.004) & 1.003 \\
$\alpha$ & 0.200 & 0.024 (0.095) & -0.349 (0.042) & 0.385 & & 0.014 (0.074) & -0.348 (0.033) & 0.299 & & 0.010 (0.063) & -0.348 (0.028) & 0.256 \\
\hline \hline
\end{tabular}
\end{footnotesize}
\smallskip
\begin{scriptsize}
\parbox{0.98\textwidth}{\emph{Note.} 
We simulate the process in the caption of the table 10,000 times on the interval $[0,T]$, where $T$ is interpreted as the number of days. 
There is $N = 1$ observation per unit interval, corresponding to one every day.
The true value of the parameter vector appears to the left in Panel A -- E. 
We estimate $\theta$ with the maximum composite likelihod estimation (MCLE) procedure developed in the main text, and benchmark it against a method-of-moments estimator (MME).
The table shows the Monte Carlo average of each parameter estimate across simulations (standard deviation in parenthesis).
The last column reports the RMSE ratio defined as $\text{RMSE}_{r} = \text{RMSE}( \hat{ \theta}_{ \text{MCLE}}) / \text{RMSE}( \hat{ \theta}_{ \text{MME}})$. 
}
\end{scriptsize}
\end{center}
\end{sidewaystable}

\clearpage

\section{The impact of market microstructure noise} \label{appendix:noise}

\begin{figure}[ht!]
\begin{center}
\caption{Volatility signature plot.}
\label{figure:volatility-signature}
\begin{tabular}{cc}
\small{Panel A: ETH.} & \small{Panel B: SOL.} \\
\includegraphics[height=8.00cm,width=0.48\textwidth]{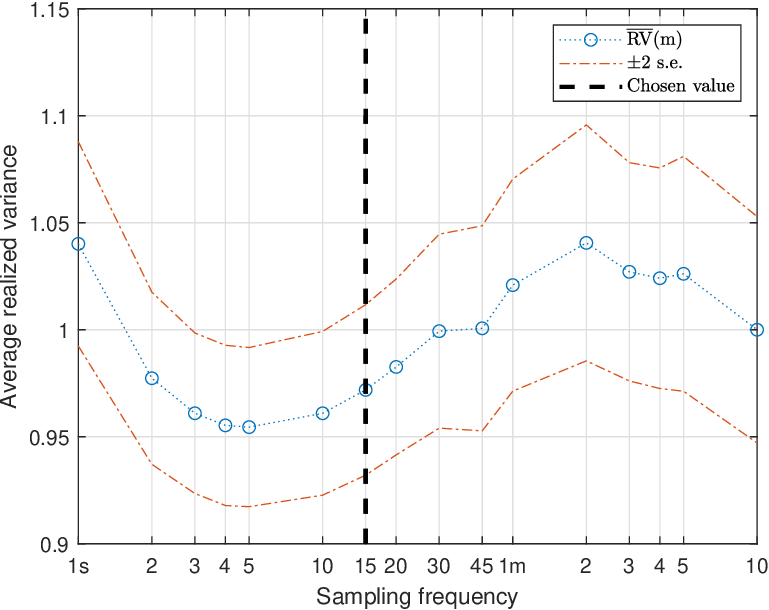} &
\includegraphics[height=8.00cm,width=0.48\textwidth]{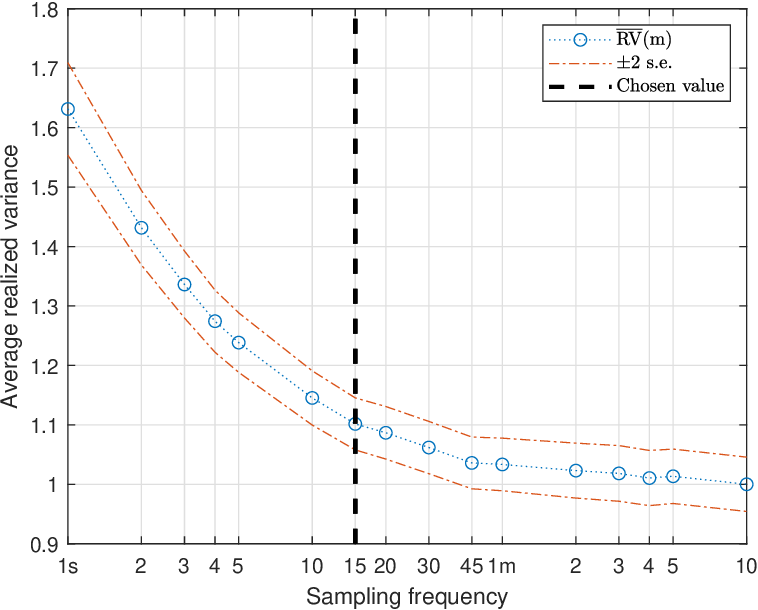} \\
\small{Panel C: XRP.} & \small{Panel D: BNB.} \\
\includegraphics[height=8.00cm,width=0.48\textwidth]{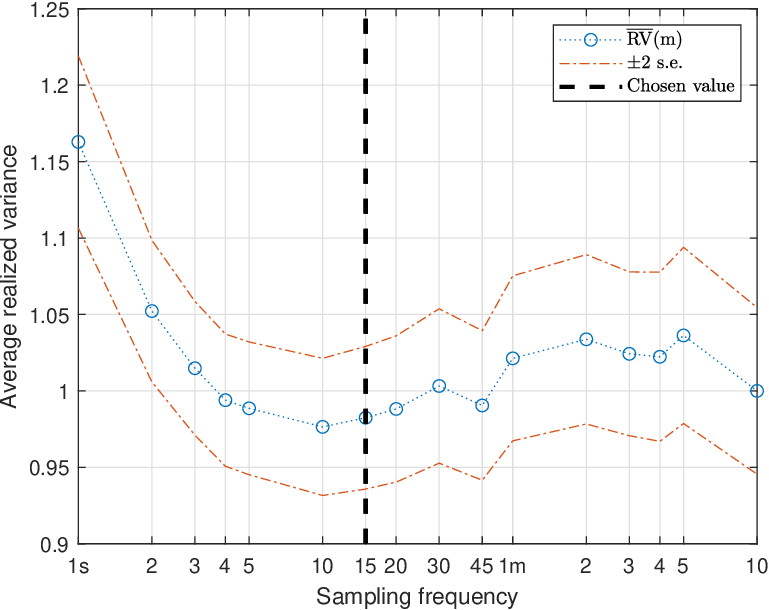} &
\includegraphics[height=8.00cm,width=0.48\textwidth]{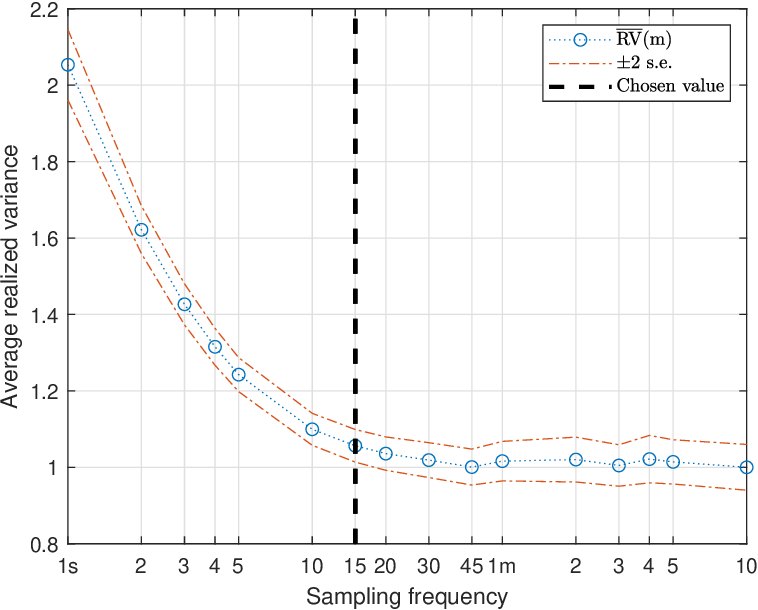} \\
\end{tabular}
\begin{scriptsize}
\parbox{\textwidth}{\emph{Note.} We construct a volatility signature plot. As a function of the sampling frequency $m$, we plot the sample average daily RV, $\overline{RV}(m) = T^{-1} \sum_{t=1}^{T} RV_{t}^{ \Delta}$, where $RV_{t}^{ \Delta}$ is the RV on day $t$ sampled with a time gap $\Delta = 1/m$. The latter ranges from a 10-minute (slowest) to a one-second (fastest) interval. Each $\overline{RV}(m)$ observation is scaled by the 10-minute estimate as a normalization. We also show a two standard error band centered around each point estimate.}
\end{scriptsize}
\end{center}
\end{figure}

\clearpage

\small
\bibliographystyle{rfs}
\bibliography{userref}

\end{document}